\definecolor{anti-flashwhite}{rgb}{0.95, 0.95, 0.96}
\newcommand{\cl}{{\sf cl}}
\newlength{\RoundedBoxWidth}
\newsavebox{\GrayRoundedBox}
\newenvironment{GrayBox}[1]%
{\setlength{\RoundedBoxWidth}{.93\textwidth}
	\def\boxheading{#1}
	\begin{lrbox}{\GrayRoundedBox}
		\begin{minipage}{\RoundedBoxWidth}}%
		{   \end{minipage}
	\end{lrbox}
	\begin{center}
		\begin{tikzpicture}%
			\node(Text)[draw=black!20,fill=white,rounded corners,%
			inner sep=2ex,text width=\RoundedBoxWidth]%
			{\usebox{\GrayRoundedBox}};
			\coordinate(x) at (current bounding box.north west);
			\node [draw=white,rectangle,inner sep=3pt,anchor=north west,fill=white] 
			at ($(x)+(6pt,.75em)$) {\boxheading};
		\end{tikzpicture}
\end{center}}
\newenvironment{defproblemx}[2][]{\noindent\ignorespaces%
	\FrameSep=6pt%
	\parindent=0pt%
	\vspace*{-1.5em}
	\ifthenelse{\isempty{#1}}{%
		\begin{GrayBox}{\textsc{#2}}%
		}{%
			\begin{GrayBox}{\textsc{#2} parameterized by~{#1}}%
			}
			\begin{tabular*}{\textwidth}{@{\hspace{.1em}} >{} p{1cm} p{0.8\textwidth} @{}}%
			}{
			\end{tabular*}%
		\end{GrayBox}%
		\ignorespacesafterend
	}
\newcommand{\pname}{\textsc}
\newcommand{\ProblemFormat}[1]{\pname{#1}}
\newcommand{\ProblemIndex}[1]{\index{problem!\ProblemFormat{#1}}}
\newcommand{\ProblemName}[1]{\ProblemFormat{#1}\ProblemIndex{#1}{}\xspace}
\newcommand{\probmwcut}{\ProblemName{Independent Multiway Cut}}
\definecolor{tyred}{rgb}{0.8, 0.0, 0.0}
\colorlet{mix}{red!50!black}
\providecommand{\keywords}[1]
{
	\textbf{\textit{Keywords---}} #1
}
\newcommand{\I}{\mathcal{I}}
\newcommand{\M}{\mathcal{M}}
\newcommand{\stc}{$(s, t)${\text-cut}\xspace}
\newcommand{\sr}{\mathtt{s}\xspace}
\newcommand{\tr}{\mathtt{t}\xspace}
\newcommand{\br}{\mathtt{b}\xspace}
\newcommand{\yes}{\textup{\textsc{yes}}\xspace}
\newcommand{\no}{\textup{\textsc{no}}\xspace}
\newcommand{\fpt}{{\sf FPT}\xspace}
\newcommand{\red}[1]{{\color{red}#1}}
\newcommand{\lr}[1]{\left(#1\right)}
\DeclarePairedDelimiter{\LRb}{\lbrace}{\rbrace}
\newcommand{\LR}{\LRb*}
\newcommand{\cO}{\mathcal{O}}
\newcommand{\cT}{\mathcal{T}}
\crefname{step}{step}{steps}
\crefname{claim}{claim}{claims}
\newcommand{\cC}{\mathcal{C}}
\newcommand{\cN}{\mathcal{N}}
\newcommand{\mwcut}{{\tt IMWCut}\xspace}
\newcommand{\cM}{\mathcal{M}}
\newcommand{\cF}{\mathcal{F}}
\newcommand{\cI}{\mathcal{I}}
\newcommand{\Oh}{\mathcal{O}}
\newtheorem{obs}{Observation}
\newcommand{\rank}{{\sf rank}}
\newcommand{\indstcut}{{\sc Independent Vertex $(s, t)$-Cut}\xspace}
\newcommand{\repset}[1]{\subseteq^{#1}_{\text{rep}}}
\newcommand{\reach}{\mathsf{Reach}}
\newcommand{\cP}{\mathcal{P}}
\newcommand{\identity}{\textsf{identify}\xspace}
\newcommand{\cA}{\mathcal{A}}
\newcommand{\gindmcut}{{\sc GIMC}\xspace}
\newcommand{\gimc}{\gindmcut}
\DeclarePairedDelimiter{\tup}{\langle}{\rangle}
\newcommand{\typea}{type 1\xspace}
\newcommand{\typeb}{type 2\xspace}
\newcommand{\cJ}{\mathcal{J}}
\newcommand{\R}{r}
\newcommand{\cE}{\mathcal{E}}
\newcommand{\cS}{\mathcal{S}}
\newcommand{\col}{{\sf col}}
\newcommand{\betak}{k}
\newcommand{\ivstc}{IV$st$C\xspace}
\title{Cuts in Graphs with Matroid Constraints}
\titlerunning{Cuts in Graphs with Matroid Constraints}
\author{Aritra Banik}{National Institute of Science, Education and Research, An OCC of Homi Bhabha National Institute, Bhubaneswar 752050,
Odisha, India}{aritra@niser.ac.in}{}{}
\author{Fedor V. Fomin}{University of Bergen, Norway}{Fedor.Fomin@uib.no}{https://orcid.org/0000-0003-1955-4612}{}
\author{Petr A. Golovach}{University of Bergen, Norway}{Petr.Golovach@ii.uib.no}{https://orcid.org/0000-0002-2619-2990}{}
\author{Tanmay Inamdar}
{Indian Institute of Technology Jodhpur, Jodhpur, India}{taninamdar@gmail.com}{https://orcid.org/0000-0002-0184-5932}{}
\author{Satyabrata Jana}{University of Warwick, UK}{satyamtma@gmail.com}{https://orcid.org/0000-0002-7046-0091}{}
\author{Saket Saurabh}{The Institute of Mathematical Sciences, HBNI, Chennai, India  \and University of Bergen, Norway }{saket@imsc.res.in}{https://orcid.org/0000-0001-7847-6402}{}
\authorrunning{A. Banik, F. V. Fomin, P. A. Golovach, T. Inamdar, S. Jana, S. Saurabh}
\keywords{$s$-$t$-cut, multiway Cut, matroid, odd cycle transversal, feedback vertex set, fixed-parameter tractability}
\begin{document}

\maketitle

\begin{abstract}
{\sc Vertex $(s, t)$-Cut} and {\sc Vertex Multiway Cut} are two fundamental graph separation problems in algorithmic graph theory. We study matroidal generalizations of these problems, where in addition to the usual input, we are given a representation $R \in \mathbb{F}^{r \times n}$ of a linear matroid $\cM = (V(G), \cI)$ of rank $r$  in the input,  and the goal is to determine whether there exists a vertex subset $S \subseteq V(G)$ that has the required cut properties, as well as is independent in the matroid $\cM$. We refer to these problems as {\sc Independent Vertex \stc}, and {\sc Independent Multiway Cut}, respectively. We show that these problems are fixed-parameter tractable (\fpt) when parameterized by the solution size (which can be assumed to be equal to the rank of the matroid $\cM$). These results are obtained by exploiting the recent technique of flow augmentation [Kim et al.~STOC '22], combined with a dynamic programming algorithm on flow-paths \'a la [Feige and Mahdian,~STOC '06] that maintains a representative family of solutions w.r.t.~the given matroid [Marx, TCS '06; Fomin et al., JACM]. As a corollary, we also obtain \fpt algorithms for the independent version of {\sc Odd Cycle Transversal}. Further, our results can be generalized to other variants of the problems, e.g., weighted versions, or edge-deletion versions.
\end{abstract}

\newpage
    
\section{Introduction} \label{Sec:intro}
Studying problems in graphs and sets with additional constraints given by algebraic structures such as the matroid or submodularity is one of the well-established research directions. These versions not only generalize classical problems, but also show combinatorial interplay between graph structure and linear algebra. Some of the well-known problems in this direction include subsmodular versions of {\sc Vertex Cover}, {\sc Shortest Path}, {\sc Matching}, or {\sc Min-($s,t$)-Cut} \cite{GoelKTW09,jegelka2009notes} as well as 
coverage problems such as  {\sc Max Coverage} with matroid constraints~\cite{CalinescuCPV11,wolsey1982analysis}. In this paper we study some of the classical cut problems in the realm of parameterized complexity in the presence of constraints given by matroids. 

It is customary to begin the discussion on our framework with {\sc Vertex Cover} (given a graph $G$ and a positive integer $k$, find a subset of size at most $k$ that covers all edges), which is a poster-child problem in parameterized complexity. Among its myriad of generalizations studied in the literature, one of particular relevance to the present discussion is ranked version of \textsc{Vertex Cover} arising in the works of Lov\'asz \cite{lovasz1977flats,Lovasz19}. He defined \textsc{Rank Vertex Cover} in his study of critical graphs having $k$-sized vertex cover, by defining the notion of a \emph{framework}, also known as \emph{pregeometric graphs}, which refers to a pair $(G, \M)$, where $G$ is a graph and $\M = (V(G), \mathcal{I})$ is a matroid on the vertex set of $G$. In {\sc Rank Vertex Cover}, we are given a framework $(G, \cM)$ and a parameter $k$, and the goal is to determine whether $G$ has a vertex cover $S$ whose rank---which is defined as the maximum-size independent subset of $S$---is upper bounded by $k$. An \fpt algorithm for {\sc Rank Vertex Cover} (or indeed {\sc Rank $d$-Hitting Set}) parameterized by the rank is folklore. Meesum et al.~\cite{MeesumPSZ19} used {\sc Rank Vertex Cover} as a natural target problem for giving a compression from a different parameterization for \textsc{Vertex Cover}. However, even for a slightly more involved problem, such as \textsc{Feedback Vertex Set} (FVS), the ranked version immediately becomes \textsf{W}[1]-hard parameterized by the rank. This is similar to the phenomena seen when we consider weighted problems such as {\sc Weighted Vertex Cover} parameterized by weight~\cite{NiedermeierR03,ShachnaiZ17}.

Having faced with such a obstacle, we study an alternate generalization of FVS in frameworks. In \textsc{Independent FVS} (IFVS), where we are given a framework $(G, \cM)$, we want to determine whether there exists some $S \subseteq V(G)$ such that (i)  $S$ is an independent set in the matroid, and (ii) $G-S$ is acyclic. This happens to be a  ``sweet spot'' between the vanilla version and the ranked version -- we can adopt the textbook algorithms, and obtain a polynomial kernel and $\Oh^*(c^k)$ time \fpt algorithm when parameterized by $k$ ($\Oh^*(\cdot)$ suppresses polynomial factors in the input size.), which is equal to the solution-size (and can also be assumed to be equal to the rank of the matroid, by appropriately truncating it). This was the starting point of our result. Note that IFVS reduces to vanilla FVS when the matroid $\cM$ is a uniform matroid of rank $k$, and hence is \textsf{NP}-hard. Indeed, this gives us a recipe for defining the ``independent'' versions of classical vertex-subset problems in graphs.  We give a formal definition of a prototypical problem below.

\begin{tcolorbox}[enhanced,title={\color{black} {\sc Independent $\Pi$} {\footnotesize(where $\Pi$ is a vertex-subset problem)}}, colback=white, boxrule=0.4pt,
	attach boxed title to top center={xshift=-2cm, yshift*=-2.5mm},
	boxed title style={size=small,frame hidden,colback=white}]
	\textbf{Input:} \hspace*{5mm} A graph $G = (V,E)$, a matroid $ \M= (V,\I) $ of rank $r$. \\
	\textbf{Question:} Does there exist a set  $S \subseteq V(G)$ such that 
 \begin{itemize}[leftmargin=2cm]
    \item $S$ is an independent set in $\cM$, i.e., $S \in \cI$, and
     \item $S$ is a feasible solution for $\Pi$
 \end{itemize}
\end{tcolorbox}

On the other hand, even for a classically polynomial-time solvable problem such as \textsc{Vertex \stc}, the corresponding independent version can be shown to be \textsf{NP}-hard even when $\cM$ is a partition matroid \cite{Huang2018,ManurangsiSS23}. 

\subsection{Our Results and Methods.}
Let $G$ be a graph and $s,t$ be  a distinct pair of  vertices in $G$.  A vertex $(s,t)$-cut  is a set $S$ of vertices  such that $G-S$ has no path from $s$ to $t$. Given a graph $G$, a distinct pair of vertices $s,t \in V(G)$, and an integer $k$ , the \stc problem asks if a vertex $(s,t)$-cut of size at most $k$ exists.
Note that {\sc (Vertex) \stc} forms the bedrock of classical graph theory and algorithms via Menger's theorem and cut-flow duality. Thus, numerous constrained versions of {\sc \stc} have been studied in the literature, and matroid independence is a fundamental constraint that remains unexplored to the best of our knowledge. Another motivation for studying {\sc Independent Vertex \stc} (\ivstc) comes from the fact that it appears as a natural subproblem while trying to solve the independent version of {\sc Odd Cycle Transversal} (aka OCT -- given $G, k$, check whether we can delete at most $k$ vertices from $G$ such that the remaining graph is bipartite) via the so-called iterative compression technique~\cite{ReedSV04}. Thus, we would like to design an \fpt algorithm for \ivstc, and in turn IOCT, with running times that are as close to the running respective vanilla versions (note that \stc is in \textsf{P}, whereas OCT admits an $\Oh^*(3^k)$ time algorithm, see \cite{ReedSV04}).

There are two possible approaches that one could take for designing an \fpt algorithm for a constrained \stc problem, such as \ivstc. First approach is to reduce the treewidth of the input graph by using the methodology of Marx, O'Sullivan, and Razgon~\cite{MarxOR10,MarxOR13}, and then perform a dynamic programming on a bounded treewidth graph. While designing a dynamic programming algorithm for \ivstc on bounded-treewidth graphs appears to be straightforward using the technique of representative sets (\cite{Marx09,FominLPS16}), it is not obvious that we could use this result as a black-box--we have to be careful while performing any modifications to the graph, since we also have a matroid on the vertex-set. Further, since the treewidth bound obtained via this result is exponential in the size of the solution we are trying to find, it seems that such an approach would result in a double-exponential running time. Thus, we do not pursue this approach further, and the following question remains.

\begin{mdframed}[backgroundcolor=gray!10,topline=false,bottomline=false,leftline=false,rightline=false] 
 \centering
 \textbf{Question 1.} Does  {\sc Independent Vertex \stc} (and in turn, {\sc Independent Odd Cycle Transversal}) admit a single-exponential, preferably $2^{{\Oh}(k)} \cdot n^{\Oh(1)}$ \fpt algorithm? 
\end{mdframed}
Another promising approach to tackle this question is given by Feige and Mahdian~\cite{FeigeM06Separator}, who give a very general recipe for solving constrained minimum \stc problems. This is done via performing dynamic programming on the flow-paths obtained via Menger's theorem. Naturally, in order to use this approach, it is crucial that we are looking for a \emph{minimum} \stc with certain additional constraints. However, this appears to be a roadblock since a \emph{minimum independent} \stc can be much larger than the minimum (not-necessarily-independent) \stc, and hence Menger's theorem is not applicable. This is where the recent technique of flow augmentation~\cite{kim2021flow,DBLP:conf/stoc/0002KPW22} comes to the rescue. At a high level, this a powerful tool that can ``augment'' the graph by adding additional edges, thus lifting the size of minimum \stc to a higher value (hence, \emph{flow augmentation}), such that an unknown \emph{minimal} \stc becomes a \emph{minimum} \stc. Note that minimality of a solution can be assumed w.l.o.g. since it is compatible with independence as well as cut property. Indeed, all of the aforementioned problems, namely, {\sc FVS, OCT, \stc} satisfy the minimality condition. Hence, using flow augmentation, at the overhead of $2^{\Oh(k \log k)} \cdot n^{\Oh(1)}$, we can assume that we are looking for an independent set that is a minimum \stc in the augmented graph. However, due to certain technical obstacles while applying flow augmentation, we need to solve the \emph{directed version} of \stc using the approach of \cite{FeigeM06Separator}. Now, we perform a dynamic programming on the flow-paths given by a decomposition theorem from \cite{FeigeM06Separator} (which needs to be extended to directed graphs), and maintain a representative family of all partial solutions in the DP table. Note that for this, we assume that the matroid $\cM$ is given as a representation matrix $R \in \mathbb{F}^{r \times n}$, and the field operations can be performed in polynomial-time \footnote{A matroid $\cM$ is representable over a field $\mathbb{F}$ if there exists a matrix $R \in \mathbb{F}^{r \times n}$, and a bijection between the ground set of $\cM$ and the columns of $R$, such that a set is independent in $\cM$ iff the corresponding set of columns is independent in $\mathbb{F}$.}. Thus, in fact, our algorithm solves a more general problem where it outputs the representative family of all minimum independent $(s, t)$-cuts. This property is crucial when we want to use this problem as a subroutine to solve a more general problem, to which we return in the next paragraph. 
Note that for the sake of conceptual simplicity and to avoid references to success probabilities, we only give the deterministic running times arising from $\Oh^*(2^{\Oh(k^4 \log k)})$-time deterministic flow augmentation in the table and in the paper. However, note that by instead using the randomized version, we can improve the running time to $2^{\Oh(k \log k)}$, which takes us closer to our initial goal. Furthermore, due to versatility of the representative sets framework, our algorithms also extend to the weighted versions of these problems, where we want to find a minimum-weight independent set that is a feasible solution to $\Pi$. Note that one of the original motivations for the flow augmentation technique was solving one such problem, i.e., {\sc Bi-Objective Directed} (or weighted) {\sc \stc} \cite{DBLP:conf/stoc/0002KPW22}. Next, having (almost) answered \textbf{Question 1}, we look toward an even more general problem.

\begin{table}[t]
 \begin{center}
		\begingroup
		\setlength{\tabcolsep}{7pt} 
		\renewcommand{\arraystretch}{1.5} 
		\begin{tabular}{ |c | c | c |}
			\hline
   \rowcolor{anti-flashwhite}
   
			   \textbf{$\Pi$ in {\sc Independent $\Pi$}} &  \textbf{Matroid access}  & \textbf{Result}   \\\hline\hline
	\textsc{Feedback Vertex Set}& Oracle  &  $\Oh^*(10^{k})$ and poly kernel \\\hline
         \textsc{Vertex/Edge $(s, t)$-cut}& Representation & $\Oh^*(2^{\Oh(k^4 \log k)})$  \\\hline
         \textsc{Vertex $(s, t)$-Cut}& Oracle &  No $f(k) \cdot n^{o(k)}$ algorithm \\\hline
         \textsc{Odd Cycle Transversal} & Representation & $\Oh^*(2^{\Oh(k^4 \log k)})$  \\\hline
         \textsc{Multiway Cut} & Representation & $\Oh^*(2^{\Oh(k^4 \log k)})$  \\\hline
\end{tabular}
	\vspace{5mm}
	\caption{Our Results.\label{tab:ourresults}}
		\endgroup
  \end{center}
  
	\end{table}

\subparagraph{Independent Multiway Cut.} Let $G$ be a graph and $T \subseteq V(G)$ be a set of \emph{terminals}. A vertex multiway
cut is a set $S$ of vertices  such that every component of $G-S$  contains at most one vertex of $T$. Given a graph $G$, terminals $T$, and an integer $k$ , the \textsc{(Vertex) Multiway Cut} (MWC) 
problem asks if a vertex multiway cut of size at most $k$ exists.
MWC is a natural generalization of \stc.  This been studied extensively in the domain of parameterized and approximation algorithms. Note that the vanilla version of the problem was classically shown to be \fpt using the technique of important separators \cite{Marx06impseps}. On the other hand, it is not clear whether this technique is any useful in presence of matroid constraints, i.e., for \textsc{Independent Multiway Cut} (IMWC). Even when armed with an \fpt algorithm for {\sc Independent \stc} as a subroutine, it is not trivial to design an \fpt algorithm for IMWC. 

 The algorithm initially begins by trying to find a multiway cut (separator) $S$ of  size at most $k$ for the terminal set $T$ in $G$ -- here we completely ignore independence constraints, and such a separator can be found in $\Oh^*(2^k)$ using the result of Cygan et al.~\cite{CyganGH17}. Note that if the graph lacks such a separator, it also cannot have an independent separator. W.l.o.g. suppose that we find a separator $S$ that is also minimal. Hence, for every vertex $v$ in the set $S$, there exists a path that connects a pair of terminal vertices and intersecting the set $S$ solely at $v$. This fact brings us to the conclusion that every independent separator either includes $v$ or some vertices of terminal components (we say that a connected component in $G-S$ is a \emph{terminal component} if it contains some terminal). Moreover, each vertex of $S$ has a neighbor in at least two  terminal components. With this information, we attempt to design a recursive algorithm by assuming that the ``true solution'' $O$ to IMC of size $k$ either intersects $S$, or some terminal component (that can be identified). However, it is also possible that a terminal component contains the  entire solution $O$. In this case, we cannot make a recursive call to the same algorithm, since the parameter does not decrease. To address this scenario, we introduce the concept of a \emph{strong separator} (\Cref{def:strongseparator}), which is a minimal separator $S$ with the following properties: any minimal solution $O$ satisfies (i) $O \cap S \neq \emptyset$, or (ii) there exists a pair of terminal components---that can be identified in polynomial time---such that $O$ intersects at least one of them, with the size of intersection being between $1$ and $k-1$. Further, such a strong separator can be found in time $\Oh^*(2^k)$ time (\Cref{lem:strong}) by applying ``pushing arguments''. This lets us design a recursive divide-and-conquer algorithm, and due to property (ii), we know that in each recursive call, the parameter always strictly decreases. For handling the matroid independence constraints, in fact we need to strengthen the algorithm to return a representative family of all possible multiway cuts. This is where it is useful that our algorithm for \ivstc also returns such a family, since this algorithm is used in the base case of the algorithm (when $|T| = 2$). This strategy results $\Oh^*(2^{\Oh(k^4 \log k)})$ time deterministic (or $\Oh^*(2^{\Oh(k \log k)})$ time randomized) algorithm for IMWC, where the running time is dominated by the flow augmentation step required in the base case of \ivstc. The  \Cref{tab:ourresults} depicts all the results in this paper.

\subparagraph{Motivation for independence constraints.} One motivation for studying such independent versions of classical graph problems comes from the fact that matroid-independence capture several interesting constraints on the solution vertices. One prominent example of this is the \emph{colorful} versions of the problem that arise naturally from the \emph{color-coding} technique introduced by Alon, Yuster and Zwick~\cite{AlonYZ95}, where a problem such as {\sc $k$-Path}, i.e., determining whether the given graph contains a path of length at least $k$. While {\sc $k$-Path} itself is difficult to approach using combinatorial techniques, there is a simple dynamic programming algorihtm, if we want to find a colorful $k$-path, i.e., where we are given a graph whose vertices are colored from $\LR{1, 2, \ldots, k}$, and we want to determine whether the graph contains a path of length $k$ whose vertices are colored with distinct colors \cite{AlonYZ95}. Then, a standard argument shows that a hypothetical unknown solution will get colored with distinct colors if each vertex is assigned a color uniformly at random. On the flip side, colorful versions of the problems such as \textsc{Independent Set, Clique}, are also convenient starting points in various \textsf{W}[1]-hardness reductions. Note that colorful constraints is one of the simplest examples of matroid constraints -- it is simply a partition matroid, where the coloring of vertices corresponds to a partition of the vertex set, and the solution may contain (at most) one vertex from each partition. Slightly more general partition matroid constriants can capture ``local budgets'' on the solution from each class of vertices, which can be used to model different fairness constraints, i.e., we cannot delete too many vertices of the same type to achieve certain graph property, which can be ``unfair'' for that type. Such constraints on solution are rather popular in clustering \cite{HajiaghayiKK12,KrishnaswamyLS18}, which are indeed motivated from such fairness considerations.

\section{Preliminaries} {\bf Notations.} Let $[n]$ be the set of integers $\{1,\ldots, n\}$. For a (directed/undirected) graph $G$, we denote the set of vertices of $G$ by $V(G)$ and the set of edges by $E(G)$. For an undirected graph, we denote the set of non-edges by $\overline{E}(G)$, that is, $\overline{E}(G)= \binom{V(G)}{2} \setminus E(G)$, where $\binom{V(G)}{2}$ is the set of all unordered pairs of distinct vertices in $V(G)$; whereas for a directed graph $G$,$\overline{E}(G) \coloneqq (V(G) \times V(G)) \setminus E(G)$. For notational convenience, even in the undirected setting, we use the notation $(a, b)$ instead of $\LR{a, b}$ -- note that due to this convention, $(a, b) = (b, a)$ in undirected graphs.  For a un/directed graph $G=(V,E)$ and a subset $S$ of un/directed edges in $\overline{E}(G)$, we use the notation $G+S$ to mean the graph $G'=(V,E\cup S)$. In an un/directed graph $G$, and $A, B \in V(G)$, we say that $B$ is \emph{reachable} from $A$ if there is an un/directed path from some $u \in A$ to $v \in B$ in $G$. Note that reachability is symmetric in undirected graphs. For a vertex-subset $X \subseteq V(G) \setminus (A \cup B)$ (resp.~edge-subset $X \subseteq E(G)$), we say that $X$ is an $(A, B)$-cut if $B$ is not reachable from $A$ in $G-X$. Further, we say that $X$ is an \emph{inclusion-minimal $(A, B)$-cut} (or simply, a \emph{minimal $(A, B)$-cut}), if no strict subset of $X$ is also an $(A, B)$-cut. More generally, let ${\cal A} = \LR{A_1, A_2, \ldots, A_t}$ be a collection of vertex-subsets. We say that $X$ is a \emph{multiway cut} for ${\cal A}$ if $X$ is an $(A_i, A_j)$-cut for each pair $A_i, A_j \in {\cal A}$, and a minimal multiway cut is defined analogously. For vertex multiway cut $X \subseteq V(G)$ for ${\cal A}$ (resp.~$(A, B)$-cut), we also require that $X \cap A_i = \emptyset$ for all $A_i \in {\cal A}$ (resp.~$X \cap A = X \cap B = \emptyset$). Finally, when $A = \LR{s}, B = \LR{t}$, we may slightly abuse the notation and use \emph{$(s, t)$-cut} or \emph{$s$-$t$ cut} instead of $(\LR{s}, \LR{t})$-cut. When we have a graph $G$ and a subset of vertices $X \subseteq V(G)$, the operation of {\em identifying} the vertices of $X$ is defined as follows: we first eliminate the vertex set $X$, then introduce a new vertex $v_X$, make $v_X$ adjacent to all vertices in the open neighborhood of $X$, and finally remove any multi-edges.

\paragraph*{Matroids.} 

	
	
Let $\cM = (U, \cI)$ be a matroid. 
Each set $S \in \cI$ is called an independent set and an inclusion-wise maximal independent set is known as a basis of the matroid $\cM$. From the third property of matroids, it is easy to observe that every inclusion-wise maximal independent set has the same size; this size is referred to as the \emph{rank} of $\cM$, and denoted by $\rank(\cM)$. In the following, we will use $r$ to denote $\rank(\cM)$.

The rank of a set $P \subseteq U$ is defined as the size of the largest subset of $P$ that is independent, and is denoted by $\rank(P)$. Note that $\rank(P) \le \rank(Q)$ if $P \subseteq Q$. For any $P \subseteq U$, define the closure of $P$, $\cl(P) \coloneqq \LR{x \in U: \rank(P \cup \LR{x}) = \rank(P)}$. Note that $\cl(P) \subseteq \cl(Q)$ if $P \subseteq Q$. 

 Finally, we will use $U(\cM)$ and $\cI(\cM)$ to refer to the ground set and the independent-set collection of $\cM$, respectively.

A matroid $\cM = (U, \cI)$ is said to be \emph{representable} over a field $\mathbb{F}$ if there exists a matrix $R \in \mathbb{F}^{r \times n}$ where $r= \rank(\cM), |U|=n$, and a bijection from $U$ to the columns of $R$, such that for any $S \subseteq U$, $S \in \cI$ iff the corresponding set of columns is linearly independent over $\mathbb{F}$.

\begin{sloppypar}
    For a matroid $\cM = (U, \cI)$ and an element $u \in U$, the matroid obtained by contracting $u$ is represented by $\cM' = \cM/u$, where $\cM' = (U \setminus \LR{u}, \cI')$, where $\cI'= \LR{ S \subseteq U \setminus \LR{u} : S \cup \LR{u} \in \cI }$. Note that $\rank(\cM') = \rank(\cM) - 1$. For any $0 \le k \le r$, one can define a truncated version of the matroid as follows: $\cM_{k} = (U, \cI_{k'})$, where $\cI_{k} = \LR{S \in \cI: |S| \le k}$. Note that $\rank(\cM_k) = k$, and one can also think of $\cM_k$ as an intersection \footnote{Intersection of two matroids $\cM_1 = (U, \cI_1)$ and $\cM_2 = (U, \cI_2)$ defined over common ground set $U$ is the set family $(U, \cI)$, where $\cI = \LR{X \subseteq U: X \in \cI_1 \cap \cI_2}$. Note that $(U, \cI)$ is not necessarily a matroid in general.} of $\cM$ with a uniform matroid of rank $k$. It is easy to verify that the contraction as well as truncation operation results in a matroid. Furthermore, given a linear representation of a matroid $\cM$, a linear representation of the matroid resulting from contraction/truncation can be computed in (randomized) polynomial time \cite{DBLP:conf/acid/Marx06,DBLP:conf/icalp/LokshtanovMPS15}. 
\end{sloppypar}

We call a family of subsets (of $U$) of size $p \ge 0$ as a \emph{$p$-family}. Next, we state the following crucial definition of representative families. 

\begin{definition}[\cite{fomin2014efficient,DBLP:books/sp/CyganFKLMPPS15}] \label{repsetdefn}
	Let $\cM = (U, \cI)$ be a matroid. For two subsets $A, B \subseteq U$, we say that $A$ \emph{fits} $B$ (or equivalently $B$ \emph{fits} $A$), if $A \cap B = \emptyset$ and $A \cup B \in \cI$. 
    
    Let $\mathcal{A} \subseteq \cI$ be a $p$-family for some $0 \le p \le \rank(\cM)$. A subfamily $\mathcal{A}' \subseteq \mathcal{A}$ is said to \emph{$q$-represent} $\mathcal{A}$ if for every set $B$ of size
	$q$ such that there is an $A \in \cA$ that fits $B$, then there is an $A' \in \mathcal{A}'$  that also fits $B$. If  $\mathcal{A}'$ $q$-represents $\mathcal{A}$, we write $\mathcal{A}' \subseteq^q_{rep} \mathcal{A}$.
\end{definition}

The following proposition lets us compute represnetative families efficiently.

\begin{proposition}[\cite{fomin2014efficient,DBLP:books/sp/CyganFKLMPPS15}] \label{prop:repset} There is an algorithm that, given a matrix $R$ over a field $\mathbb{F}$, representing a matroid $\mathcal{M} = (U, \cI)$ of rank $k$, a $p$-family ${\cal A}$ of independent sets in $\mathcal{M}$, and an integer $q$ such that $p + q = k$, computes a $q$-representative family ${\cal A}'\subseteq ^q_{rep} \mathcal{A}$ of size at most
	$\binom{p+q}{p}$ using at most $\Oh(|{\cal A}|(\binom{p+q}{p})p^{\omega}+(\binom{p+q}{p})^{\omega-1})$
	operations over $\mathbb{F}$. Furthermore, an analogous result holds even if $p + q \le k$, albeit depending on the field $\mathbb{F}$, the algorithm may be randomized. \footnote{Here, we need to compute the representation of an appropriately truncated matroid. For this, only randomized polynomial algorithm is known in general \cite{Marx09}; however for certain kinds of fields, it can be done in polynomial time \cite{LokshtanovMPS18}.}
\end{proposition}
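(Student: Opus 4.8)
The plan is to follow the exterior-algebra (determinantal) approach of Lov\'asz and Marx, as streamlined by Fomin et al. Fix the representation $R \in \mathbb{F}^{k \times n}$ of $\cM$, and for $u \in U$ let $x_u \in \mathbb{F}^k$ be the column of $R$ indexed by $u$ (so $x_u \ne 0$ whenever $\LR{u} \in \cI$). To each $p$-set $A = \LR{a_1, \ldots, a_p} \in \cA$ I associate the vector $w_A \coloneqq x_{a_1} \wedge \cdots \wedge x_{a_p}$ in the $p$-th exterior power $\bigwedge^p \mathbb{F}^k$, whose coordinates (in the standard basis indexed by $p$-subsets of $[k]$) are exactly the $p \times p$ minors of the $k \times p$ submatrix $R[\,\cdot\,, A]$. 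The key algebraic fact is that for $A$ of size $p$ and $B$ of size $q$ with $p+q=k$, one has $w_A \wedge w_B \ne 0$ in $\bigwedge^k \mathbb{F}^k \cong \mathbb{F}$ if and only if the columns indexed by $A \cup B$ are linearly independent, i.e.\ if and only if $A$ fits $B$; note $w_A \wedge w_B = 0$ automatically whenever $A \cap B \ne \emptyset$, since a wedge with a repeated vector vanishes.

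Given this, I would take $\cA' \subseteq \cA$ to be any inclusion-minimal subfamily for which $\{w_A : A \in \cA'\}$ spans the same subspace of $\bigwedge^p \mathbb{F}^k$ as $\{w_A : A \in \cA\}$; equivalently, $\cA'$ indexes a column basis of the $\binom{k}{p} \times |\cA|$ matrix $M$ whose columns are the vectors $w_A$. Then $|\cA'| \le \dim \bigwedge^p \mathbb{F}^k = \binom{k}{p} = \binom{p+q}{p}$. To verify that $\cA'$ $q$-represents $\cA$: suppose $B$ has size $q$ and some $A \in \cA$ fits $B$. Wedging with $w_B$ defines a linear functional $\varphi_B$ on $\bigwedge^p \mathbb{F}^k$ with $\varphi_B(w_A) = w_A \wedge w_B \ne 0$; since $w_A$ lies in the span of $\{w_{A'} : A' \in \cA'\}$, there is some $A' \in \cA'$ with $\varphi_B(w_{A'}) = w_{A'} \wedge w_B \ne 0$. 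By the key fact this forces $A' \cap B = \emptyset$ and $A' \cup B \in \cI$, i.e.\ $A'$ fits $B$, as required.

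For the running time, the vectors $w_A$ are computed by evaluating the $\binom{k}{p}$ size-$p$ minors of $R[\,\cdot\,, A]$; processing the sets in a suitable order and reusing previously computed minors, the per-set cost can be brought down to $\Oh(\binom{p+q}{p} p^{\omega})$ field operations, which yields the first term when summed over $A \in \cA$. Extracting a column basis of $M$ is Gaussian elimination on a matrix with $\binom{p+q}{p}$ rows and $|\cA|$ columns; using fast matrix multiplication this contributes the $\Oh(|\cA|\binom{p+q}{p}^{\omega-1})$ and $\Oh(\binom{p+q}{p}^{\omega-1})$ terms, matching the claimed bound. Finally, for the case $p+q \le k$: since fitting only ever tests independence of sets of size $p+q \le k$, one first replaces $\cM$ by its truncation to rank $p+q$ (whose linear representation is computable in randomized polynomial time over any field and in deterministic polynomial time over suitable fields, as noted after the truncation discussion above), and then applies the $p+q=k$ case; the randomness in this step is exactly why the ``furthermore'' is stated with that caveat.

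The main obstacle I anticipate is not correctness — which is the clean exterior-algebra computation above — but the two efficiency points: (i) arranging the minor evaluations to share work so that the per-set cost is $\Oh(\binom{p+q}{p} p^{\omega})$ rather than naively larger, and (ii) the truncation step, where deterministic representations of truncated linear matroids are available only for restricted classes of fields.
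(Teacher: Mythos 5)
The paper states this proposition as a cited result from Fomin--Lokshtanov--Panolan--Saurabh and the Cygan et al.\ textbook and gives no proof of its own, so the comparison is to the proof in those sources; your argument is exactly that proof (wedge products $w_A = x_{a_1}\wedge\cdots\wedge x_{a_p}$, the equivalence $w_A\wedge w_B\neq 0 \iff A$ fits $B$, extracting a column basis, and truncation for $p+q<k$), and it is correct. One small remark: your anticipated obstacle (i) is not actually an obstacle --- computing each of the $\binom{p+q}{p}$ minors of $R[\,\cdot\,,A]$ independently at cost $\Oh(p^{\omega})$ already gives the stated per-set cost of $\Oh\bigl(\binom{p+q}{p}p^{\omega}\bigr)$, with no work-sharing needed.
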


\begin{sloppypar}
    For a $p$-family ${\cal P}$ and a $q$-family ${\cal Q}$, we define ${\cal P} \bullet {\cal Q} \coloneqq \LR{P \cup Q : P \in {\cal P}, Q \in {\cal Q}, P \text{ fits } Q }$ (this operation is also known as subset convolution). The following properties of representative families follow straightfordly from the definitions, and formal proofs can be found in \cite{CyganFKLMPPS15}.
\end{sloppypar}

\begin{proposition}[\cite{CyganFKLMPPS15}]\label{obs:repsetprops}
    Let $\cM = (U, \cI)$ be a matroid and ${\cal A}$ be a $p$-family. 
    \begin{itemize}
        \item If ${\cal A}_1 \subseteq^{r-p}_{\text{rep}} {\cal A}$, and ${\cal A}_2 \subseteq^{r-p}_{\text{rep}} {\cal A}_1$, then, ${\cal A}_2 \subseteq^{r-p}_{\text{rep}} {\cal A}$.
        \item If ${\cal A}_1 \subseteq^{r-p}_{\text{rep}} {\cal A}$, then ${\cal A} = \emptyset$ iff ${\cal A}_1 = \emptyset$.
        \item Let ${\cal P}$ be a $p$-family and ${\cal Q}$ be a $q$-family. Let ${\cal P}' \repset{r-p} {\cal P}$, and ${\cal Q}' \repset{r-q} {\cal Q}$. Then, ${\cal P}' \bullet {\cal Q}'   \repset{r-p-q} {\cal P} \bullet {\cal Q}$.
    \end{itemize}
\end{proposition}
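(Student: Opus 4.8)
The plan is to verify each of the three items directly from \Cref{repsetdefn}, tracking only which sets \emph{fit} which; I expect all three to be routine, the sole place needing a little care being the bookkeeping in the third.

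For transitivity I would fix a set $B$ with $|B| = r - p$ for which some $A \in {\cal A}$ fits $B$: applying ${\cal A}_1 \repset{r-p} {\cal A}$ produces $A_1 \in {\cal A}_1$ fitting $B$, and applying ${\cal A}_2 \repset{r-p} {\cal A}_1$ produces $A_2 \in {\cal A}_2$ fitting $B$; since $B$ was arbitrary, ${\cal A}_2 \repset{r-p} {\cal A}$ (and the containments ${\cal A}_2 \subseteq {\cal A}_1 \subseteq {\cal A}$ ensure all three are $p$-families, so the statement typechecks). For the second item, the implication ${\cal A} = \emptyset \Rightarrow {\cal A}_1 = \emptyset$ is immediate from ${\cal A}_1 \subseteq {\cal A}$. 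For the converse I would pick any $A \in {\cal A} \subseteq \cI$; since $|A| = p \le \rank(\cM) = r$, the matroid exchange axiom extends $A$ to a basis $A \cup B$ with $|B| = r - p$, so $A$ fits $B$, and then $(r-p)$-representation forces some $A_1 \in {\cal A}_1$ to fit $B$, whence ${\cal A}_1 \neq \emptyset$.

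For the third item, note first that ${\cal P} \bullet {\cal Q}$ and ${\cal P}' \bullet {\cal Q}'$ are $(p+q)$-families and $(p+q) + (r - p - q) = r$, so I must show: for every $B$ with $|B| = r - p - q$ for which some element of ${\cal P} \bullet {\cal Q}$ fits $B$, some element of ${\cal P}' \bullet {\cal Q}'$ fits $B$. I would fix such a $B$ together with a witness $R = P \cup Q$ (where $P \in {\cal P}$, $Q \in {\cal Q}$, $P$ fits $Q$, and $R$ fits $B$), so that $P \cup Q \cup B \in \cI$ with $B$ disjoint from both $P$ and $Q$, and then ``swap the two coordinates one at a time''. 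First, $P$ fits $Q \cup B$, a set of size $q + (r - p - q) = r - p$; so ${\cal P}' \repset{r-p} {\cal P}$ yields $P' \in {\cal P}'$ fitting $Q \cup B$. Second, $Q$ fits $P' \cup B$, a set of size $p + (r - p - q) = r - q$ (using $P' \cap B = \emptyset$, which holds since $P'$ fits $Q \cup B \supseteq B$); so ${\cal Q}' \repset{r-q} {\cal Q}$ yields $Q' \in {\cal Q}'$ fitting $P' \cup B$. Setting $R' = P' \cup Q'$, one checks that $P'$ fits $Q'$ (they are disjoint, and $P' \cup Q' \subseteq P' \cup Q' \cup B \in \cI$), so $R' \in {\cal P}' \bullet {\cal Q}'$, and that $R'$ fits $B$ ($R'$ is disjoint from $B$, and $R' \cup B = P' \cup Q' \cup B \in \cI$); this $R'$ is the desired element.

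The only step that is more than a one-line unrolling of \Cref{repsetdefn} — the ``main obstacle'', such as it is — is the last one: at each of the two invocations of representation one must confirm that the target set ($Q \cup B$, then $P' \cup B$) is genuinely disjoint from the set being fitted and has size exactly $r - p$, resp.\ $r - q$, so that the hypotheses ${\cal P}' \repset{r-p} {\cal P}$ and ${\cal Q}' \repset{r-q} {\cal Q}$ apply verbatim. Once that is checked the rest is immediate, consistent with the paper's remark that these facts follow straightforwardly.
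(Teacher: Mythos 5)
Your proof is correct and is essentially the standard argument: the paper itself gives no proof, deferring to \cite{CyganFKLMPPS15}, and your three verifications (chaining the two representations for transitivity, extending an independent $p$-set to a basis for non-emptiness, and the two-stage swap $P\mapsto P'$ then $Q\mapsto Q'$ with the size/disjointness bookkeeping for the convolution) coincide with the proofs given there. No gaps.
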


\section{\fpt Algorithm for \indstcut}\label{sec:mcut}

In this section, we design an \fpt algorithm for \indstcut. 
To this end, we define some notation. Consider a graph $G = (V, E)$, a matroid $\cM = (U, \cI)$ of rank $r$ where $V(G) \setminus \LR{s, t} \subseteq U$, and two vertices $s, t \in V(G)$.
By iteratively checking values $\betak = 1, 2, \ldots, $ and running the algorithm described below, we may inductively assume that we are working with a value $0 \le \betak \le \R$, such that there is no independent \stc  in $G$ for any $\betak' < \betak$. This follows from the fact that, our generalized problem, as defined below, requires the algorithm to return a set corresponding to $\betak'$ whose emptiness is equivalent to concluding that the there is no independent \stc of size $\betak'$.
It is important to note that, depending on whether we are addressing \indstcut as a standalone problem or as a sub-procedure, we may either obtain the smallest value of $k$ for free, or we may need to perform iterations. Regardless, we can safely assume that we are operating with a value of this nature.
 For any $0 \le k \le r$, let $$\cF(s, t, k) \coloneqq \LR{ S \subseteq V(G) \setminus \LR{s, t}: S \in \cI, |S| = k, S \text{ is a minimal } (s, t)\text{-cut.} }$$
In fact, we design an \fpt algorithm for a following generalization of the problem, stated below.

\begin{tcolorbox}[enhanced,title={\color{black} {\sc Generalized }\indstcut}, colback=white, boxrule=0.4pt,
	attach boxed title to top center={xshift=-2.5cm, yshift*=-2.5mm},
	boxed title style={size=small,frame hidden,colback=white}]
	
	\textbf{Input:} A graph $G$, a vertex set $Q \subseteq  V(G)$ containing  $s,t$, a representation $R \in \mathbb{F}^{r \times n}$ \hspace*{1.3cm} of a  matroid  $ \M= (U,\I)$ of rank $r$  where $V \setminus Q \subseteq U$,  an integer $0 \le k \le r$.  \\
	\textbf{Output:} $\cF'(s, t, k) \repset{r-k} \cF(s, t, k)$.
\end{tcolorbox}
Note that this generalizes \indstcut, since the latter problem is equivalent to determining whether $\cF(s, t, k)$ (or, equivalently, $\cF'(s, t, k) \repset{r-k} \cF(s, t, k)$, via \Cref{obs:repsetprops}) is non-empty for some $0 \le k \le r$. 

\begin{figure}[H]
	\centering
	\includegraphics[scale=0.7]{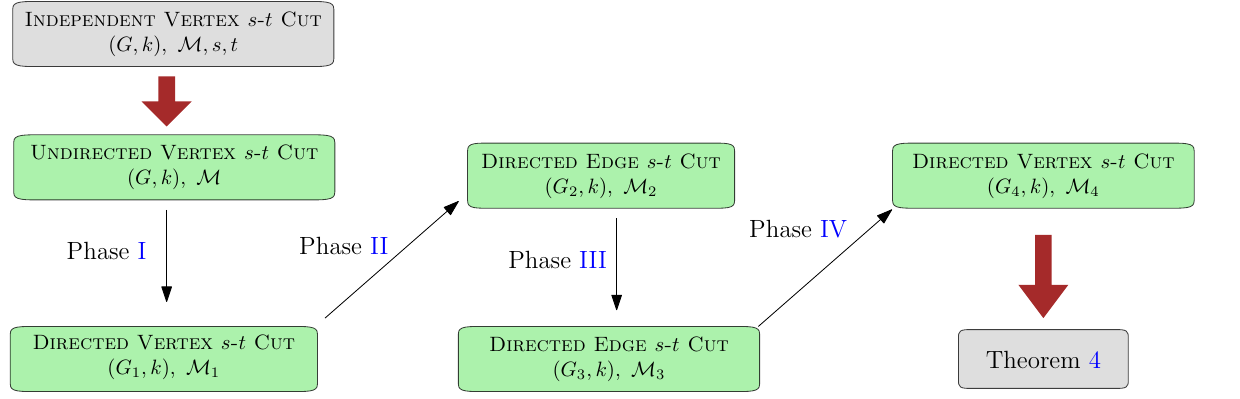}
	\caption{Schematic depiction of sequence of reductions from undirected vertex \stc to directed vertex \stc via flow augmentation. These details are omitted from the short version.}
	\label{fig:flowcut}
\end{figure}

\begin{restatable}{theorem}{stcuttheorem} \label{thm:stcuttheorem}
    Consider an instance $(G, \cM, s, t, Q, \betak)$ of {\sc Generalized \indstcut}, where there exists no independent \stc in $G$ of size less than $\betak$. Then,
	there exists a deterministic algorithm for {\sc Generalized} \indstcut that runs in time $2^{\Oh(k^4 \log k)} \cdot 2^{\Oh(\R)} \cdot n^{\Oh(1)}$, and outputs $\cF'(s, t, \betak) \repset{\R-\betak} \cF(s, t, \betak)$ such that $|\cF'(s, t, \betak)| \le 2^{\R}$. Here, $\cF(s, t, \betak)$ is the collection of all independent $(s, t)$-cuts of size $\betak$. 
\end{restatable}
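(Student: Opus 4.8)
The plan is to follow the pipeline sketched in \Cref{fig:flowcut} and the introduction: reduce the undirected vertex cut problem to a directed vertex cut problem in which the sought minimal cut becomes a \emph{minimum} cut, and then run a dynamic programming over flow-paths that carries a representative family of partial solutions. Concretely, I would first argue that we may assume the sought solution $S \in \cF(s,t,\betak)$ is an inclusion-minimal $(s,t)$-cut (a minimal cut that is independent is still a valid solution, and every independent cut contains a minimal one, which is still independent by downward closure). Next, I would perform the standard vertex-to-edge and undirected-to-directed gadget reductions, taking care to extend the matroid appropriately: splitting a vertex $v$ into $v_{\text{in}}, v_{\text{out}}$ with a directed arc carrying the ``identity'' of $v$ in the matroid, and assigning all auxiliary arcs to loops/parallel copies so that independence of an edge set in the new instance corresponds exactly to independence of the corresponding vertex set in $\cM$. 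One must verify that minimal $(s,t)$-cuts are preserved under these transformations, and that the rank does not change. This lets us feed the instance into (deterministic) flow augmentation of \cite{DBLP:conf/stoc/0002KPW22}: with overhead $2^{\Oh(k^4\log k)}\cdot n^{\Oh(1)}$ we obtain a set of arcs $A$ such that, in $G+A$, the target minimal cut (of size $\betak$) is now a \emph{minimum} $(s,t)$-cut, hence has exactly $\lambda := \betak$ arc-disjoint paths witnessing it by Menger's theorem in the directed setting.

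With a minimum cut in hand, I would invoke the decomposition theorem in the style of Feige--Mahdian \cite{FeigeM06Separator}, extended to directed graphs (this extension needs to be stated and proved, but it is essentially the same argument: take $\betak$ arc-disjoint $s$--$t$ paths, and process the graph in the topological-like order induced by the ``prefix'' subgraphs defined by how far along each path one has reached). The key structural output is: every minimal (equivalently, minimum) $(s,t)$-cut contains exactly one arc from each of the $\betak$ paths, and as we sweep along, the possible ``states'' of a partial cut are captured by which arc on each path has been chosen, which is a bounded amount of information per path; moreover, transitions between consecutive states are governed by reachability in the prefix subgraph (captured by the notion of ``connecting pairs'' / ``valid colorings'' the paper alludes to). I would then set up a DP over these states. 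For each reachable state, instead of storing one partial cut, we store a $\repset{r - |\text{partial cut}|}$-representative family of all partial cuts compatible with that state; \Cref{prop:repset} bounds each such family by $\binom{r}{\betak} \le 2^r$, and \Cref{obs:repsetprops} (third bullet, the $\bullet$-convolution property) guarantees that extending a representative family along a transition and then re-reducing still yields a representative family of the true set of partial cuts. At the terminal state, the stored family is $\cF'(s,t,\betak) \repset{r-\betak} \cF(s,t,\betak)$, and its size is at most $2^r$ as claimed.

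For the running time: flow augmentation contributes the $2^{\Oh(k^4\log k)}\cdot n^{\Oh(1)}$ factor and is branched over; the number of DP states along each of the $n^{\Oh(1)}$ positions is bounded by a function of $\betak$ times the number of paths, which is $n^{\Oh(1)}\cdot 2^{\Oh(\betak)}$ at worst when accounting for the coloring choices; and each DP transition calls \Cref{prop:repset} on families of size $\le 2^r$, costing $2^{\Oh(r)}\cdot n^{\Oh(1)}$ field operations. Multiplying through gives $2^{\Oh(k^4\log k)}\cdot 2^{\Oh(r)}\cdot n^{\Oh(1)}$. (With randomized flow augmentation and randomized matroid truncation, the first factor improves to $2^{\Oh(k\log k)}$, though we only claim the deterministic bound here.) I would also handle the inductive hypothesis ``no independent $(s,t)$-cut of size less than $\betak$'' simply by noting that if the algorithm returns a nonempty family for a smaller value $\betak'$ we would have detected it; since $\cF'$ is representative of $\cF$, emptiness of $\cF'(s,t,\betak')$ is equivalent to emptiness of $\cF(s,t,\betak')$ by \Cref{obs:repsetprops}, so the assumption is consistent with iterating $\betak' = 1,2,\ldots$.

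The main obstacle I expect is the bookkeeping around the gadget reductions in the presence of the matroid: one must set up the directed vertex-cut instance and the accompanying matroid (on the \emph{edge} set, then transferred back to a vertex set) so that (a) independence is faithfully preserved, (b) minimal cuts correspond bijectively to minimal cuts of the original instance with the \emph{same} rank, and (c) the new instance is still compatible with what flow augmentation requires (in particular that $s,t$ and the auxiliary vertices/arcs are never in the solution, which is why the problem definition excludes $Q$ from the matroid ground set). Getting all three simultaneously, and then checking that the Feige--Mahdian decomposition goes through for directed graphs with the source/sink gadgets introduced by flow augmentation, is the delicate part; the DP-with-representative-families layer on top is, by contrast, a fairly mechanical application of \Cref{prop:repset} and \Cref{obs:repsetprops}.
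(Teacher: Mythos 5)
Your proposal follows essentially the same route as the paper: reduce to a minimal-cut formulation, perform the undirected-to-directed and vertex-to-edge gadget reductions with the matroid transferred by giving auxiliary elements zero vectors, apply deterministic flow augmentation to turn the minimal cut into a minimum one, and then run a directed Feige--Mahdian prefix-subgraph DP that stores representative families at each state (the paper additionally converts back to a directed \emph{vertex}-cut instance before the DP, but this is only a cosmetic difference from your edge-level DP). The correctness ingredients you identify — the bijections preserving independence and minimality across the reductions, the directed extension of the decomposition lemma, and the composition property of representative families — are exactly the ones the paper verifies.
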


\paragraph*{Flow augmentation.} Below we present the recent results regarding the so-called {\em flow-augmentation} technique \cite{DBLP:conf/soda/KimKPW21,DBLP:conf/stoc/0002KPW22}.
\begin{proposition}[Flow-augmentation (randomized) for Directed Graphs, \cite{kim2021flow,DBLP:conf/stoc/0002KPW22}]
	There exists a randomized polynomial-time algorithm that, given a directed graph $ G $, two vertices	$ s, t \in V (G) $, and an integer $ k $, outputs a set $ A \subseteq V (G) \times V (G) $ such that the following holds: for every minimal \stc $ Z \subseteq E(G) $ of size at most $ k $,  with probability $ 2^{-\cO(k \log k)}$ the edge set  $ Z $ remains an $ st $-cut in $ G + A $ and, furthermore,	$ Z $ is a minimum \stc in $ G + A $.
\end{proposition}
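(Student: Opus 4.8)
The plan is to chain together several solution-preserving reductions that turn the (undirected, vertex, matroid-constrained) instance into a \emph{directed vertex} $(s,t)$-cut instance in which an unknown minimal solution has been promoted to a \emph{minimum} cut, and then to run a matroid-aware dynamic program over Menger flow-paths. Put $r:=\rank(\cM)$; by the iterative framework described above we may assume $\betak$ is the least value for which $\cF(s,t,\betak)$ could be non-empty, and since minimality of a vertex cut is compatible both with the cut property and with independence, it suffices to output a family whose members are independent $(s,t)$-cuts of size $\betak$ and which $(r-\betak)$-represents $\cF(s,t,\betak)$. The obstruction to running a flow-path dynamic program directly on $G$ is that $\betak$ may strictly exceed the value of a minimum (unconstrained) $(s,t)$-cut of $G$, so Menger's theorem need not supply $\betak$ vertex-disjoint $s$-$t$ paths; flow augmentation is used precisely to raise the minimum-cut value so that the target minimal cut becomes minimum.

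First I would perform three polynomial-time reductions, each transforming the graph and transporting the matroid in lockstep. \emph{(i) Undirected vertex cut $\to$ directed edge cut:} replace every $v\in V(G)\setminus\{s,t\}$ by an arc $v^{\mathrm{in}}\to v^{\mathrm{out}}$ carrying the matroid element of $v$, and replace every edge $\{u,v\}$ by $k+1$ parallel arcs $u^{\mathrm{out}}\to v^{\mathrm{in}}$ and $k+1$ parallel arcs $v^{\mathrm{out}}\to u^{\mathrm{in}}$; the parallel bundles guarantee that no minimal edge $(s,t)$-cut of size at most $\betak$ uses a structural arc, so such cuts are exactly the images of minimal vertex cuts of $G$, the matroid still acts only on the elements indexed by $V(G)\setminus\{s,t\}$ (its rank unchanged), and a cut is independent iff the underlying vertex set is. \emph{(ii) Flow augmentation:} invoke the deterministic variant of the directed-edge flow-augmentation result stated above to enumerate a family of $2^{\Oh(k^4\log k)}$ arc sets $A$ such that, for every minimal $(s,t)$-cut $Z$ of size at most $\betak$, at least one $A$ in the family keeps $Z$ an $(s,t)$-cut \emph{and} makes it a minimum one in the augmented graph. \emph{(iii) Directed edge cut $\to$ directed vertex cut:} for each guessed $A$, subdivide every arc by a fresh vertex, letting the subdivision vertex of each original vertex-arc inherit its matroid element while all remaining auxiliary vertices are made uncuttable by $(k+1)$-fold parallelization, so the matroid acts only on the $V(G)\setminus\{s,t\}$-indexed elements and the rank stays $r$. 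Each step is an ``iff'' for the existence of size-$\betak$ cuts and induces an independence-preserving bijection between the relevant minimal cuts; composing them, $\cF(s,t,\betak)$ corresponds, for the ``correct'' choice of $A$, to a subfamily of the size-$\betak$ minimum independent vertex $(s,t)$-cuts of the resulting directed graph $G_4$.

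It remains to solve the directed vertex version on $G_4$. For the correct $A$ the minimum vertex $(s,t)$-cut of $G_4$ has value $\betak$, so Menger's theorem fixes $\betak$ internally-vertex-disjoint $s$-$t$ paths $P_1,\dots,P_\betak$, each hit exactly once by every minimum cut. I would extend the Feige--Mahdian flow-path decomposition to the directed setting, producing a nested family of prefix subgraphs together with an $\Oh(\betak)$-sized description (a ``valid coloring'') of how a partial cut meets the paths at the current frontier, with $2^{\Oh(\betak\log \betak)}$ frontier states. The dynamic program processes the decomposition layer by layer; the cell indexed by a layer and a frontier state stores, in place of one partial cut, a $p$-family (where $p$ is the number of solution vertices committed so far) that $(r-p)$-represents all extendable partial cuts with that state. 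Transitions combine the families of compatible child cells via the subset-convolution $\bullet$ and then apply \Cref{prop:repset} to shrink each back to size $\binom{r}{p}\le 2^{r}$; by \Cref{obs:repsetprops} representativeness propagates, so the terminal cell gives a family $(r-\betak)$-representing the size-$\betak$ minimum independent vertex cuts of $G_4$. Pulling this family back to $G$, taking the union over all $2^{\Oh(k^4\log k)}$ guesses of $A$, and running \Cref{prop:repset} once more yields $\cF'(s,t,\betak)$, whose members are independent $(s,t)$-cuts of size $\betak$ and which $(r-\betak)$-represents $\cF(s,t,\betak)$, with $|\cF'(s,t,\betak)|\le 2^{r}$. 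For the running time: $2^{\Oh(k^4\log k)}$ guesses of $A$, each costing $n^{\Oh(1)}$ layers times $2^{\Oh(\betak\log \betak)}$ states times one convolution-plus-\Cref{prop:repset} step over families of size $\le 2^{r}$, i.e.\ $2^{\Oh(r)}\cdot n^{\Oh(1)}$ field operations; the product is $2^{\Oh(k^4\log k)}\cdot 2^{\Oh(r)}\cdot n^{\Oh(1)}$, matching the claim.

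The main obstacle is twofold. First, every gadget reduction enlarges the underlying graph, so one must verify that the matroid can be carried through unchanged — that $(k+1)$-fold parallelization really does keep every auxiliary element out of all minimal cuts of size $\le \betak$, that flow augmentation, which \emph{adds} arcs, creates no spurious small cut exploiting an auxiliary element, and that the resulting correspondence between minimal cuts preserves independence exactly (in particular, reconciling minimality-in-$G$ with a dynamic program that runs on $G_4$). Second, the Feige--Mahdian decomposition and its dynamic program were designed for undirected graphs and plain (uncolored) minimum cuts, so one must re-establish the decomposition in the directed setting and, crucially, recast every transition as a representative-family operation ($\bullet$ followed by \Cref{prop:repset}) so that carrying only $2^{r}$-size families per cell suffices — this is exactly where the cut structure and the representative-sets machinery have to be made to cooperate.
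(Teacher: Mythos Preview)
The statement you were asked to address is the (randomized) flow-augmentation proposition, which the paper does not prove at all: it is quoted verbatim from \cite{kim2021flow,DBLP:conf/stoc/0002KPW22} and used as a black-box tool. What you have written is not a proof of this proposition; it is a sketch of the proof of \Cref{thm:stcuttheorem}, the main algorithmic result of the section on \indstcut, which \emph{invokes} flow augmentation as one of its ingredients. In your outline, flow augmentation appears as step~(ii), where you explicitly ``invoke the deterministic variant of the directed-edge flow-augmentation result stated above'' --- that is, you call the proposition rather than establish it.

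To prove the proposition itself one would have to reproduce the construction of Kim, Kratsch, Pilipczuk and Wahlstr\"om: a randomized procedure that samples arcs to add so that, with probability $2^{-\Oh(k\log k)}$, a fixed but unknown minimal $(s,t)$-cut becomes a minimum cut in the augmented graph. None of that machinery is present in your proposal. So as a proof of the stated proposition the attempt never begins; as a proof of \Cref{thm:stcuttheorem}, on the other hand, your outline follows the paper's own four-phase approach (undirected vertex $\to$ directed edge, flow augmentation, directed edge $\to$ directed vertex, then a Feige--Mahdian DP with representative families) rather faithfully.
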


\begin{proposition}[Flow-augmentation (deterministic) for Directed Graphs \cite{kim2021flow,DBLP:conf/stoc/0002KPW22}]\label{prop:flowdir}
There exists an algorithm that, given a directed graph $ G $, two vertices $ s, t \in V (G) $, and an integer $k $, in time $2^{\cO(k^4 \log k)}  \cdot |V (G)|^{\cO(1)} $ outputs a set $ \mathcal{A} \subseteq 2^{V (G) \times V (G)} $ of size $2^{\cO(k^4 \log k)}  \cdot (\log n)^{\cO(k^3)} $ such that the following holds: for every minimal	\stc $ Z \subseteq E(G) $ of size at most $ k $, there exists $ A \in \mathcal{A} $ such that the edge set  $ Z $ remains an \stc in $ G + A $ and, furthermore,	$ Z $ is a minimum \stc in $ G + A $.
\end{proposition}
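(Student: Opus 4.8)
I would prove the statement by reducing it, through a bounded-depth recursion, to the single-step task of raising the maximum $(s,t)$-flow by at least one while a fixed (but unknown) minimal cut is kept intact, and then iterating this at most $k$ times. Write $\lambda(G')$ for the value of a maximum $s$-$t$ flow in a graph $G'$, equal by Menger to the minimum $(s,t)$-cut size. First dispose of trivial cases: if $G$ has no $(s,t)$-cut of size at most $k$, output $\mathcal{A}=\{\emptyset\}$; and observe that any minimal cut $Z$ with $|Z|=\lambda(G)$ is already a minimum cut, so $\emptyset$ serves it. Thus it suffices to handle minimal $(s,t)$-cuts $Z$ with $\lambda(G)<|Z|\le k$. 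Normalising the instance ($s$ a source, $t$ a sink, parallel edges allowed) is routine.

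\textbf{The recursion and the counting.} Fix, for the analysis only, a target minimal cut $Z$ with $\ell:=|Z|\le k$, and let $(R,\bar R)$ be the bipartition of $V(G)$ with $R$ the set of vertices reachable from $s$ in $G-Z$; minimality forces $Z$ to be exactly the set of arcs directed from $R$ to $\bar R$. I would grow a branching tree of depth at most $k$ whose leaves form $\mathcal{A}$. A node holds a supergraph $G'=G+A$ satisfying the invariant that $Z$ is still an $(s,t)$-cut of $G'$; note this forces $\lambda(G')\le\ell$. If $\lambda(G')=\ell$ then $Z$ is a minimum cut of $G'$ and the node is a valid leaf. Otherwise I call the single-step subroutine below; it returns a family of at most $2^{\cO(k^3\log k)}\cdot(\log n)^{\cO(k^2)}$ candidate edge sets $A_0$, guaranteed to contain (for the true $Z$) one such that $Z$ remains a cut of $G'+A_0$ while $\lambda(G'+A_0)>\lambda(G')$; I branch over all candidates. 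Since the flow strictly increases each step and stays at most $\ell$, the depth is at most $k$ and the number of leaves is at most $\big(2^{\cO(k^3\log k)}(\log n)^{\cO(k^2)}\big)^{k}=2^{\cO(k^4\log k)}(\log n)^{\cO(k^3)}$, matching the required bound; each node performs only max-flow/reachability computations, so the total running time is $2^{\cO(k^4\log k)}\cdot n^{\cO(1)}$.

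\textbf{The single-step subroutine (the core).} Here lies the real work. In $G'$ fix a maximum flow $f$ and let $S\ni s$ be the vertices reachable from $s$ in the residual graph $G'_f$, and $T\ni t$ the vertices from which $t$ is reachable in $G'_f$; then $S\cap T=\emptyset$. Two elementary facts drive everything: adding any arc with tail in $S$ and head in $T$ strictly increases $\lambda$; and adding any arc that is \emph{not} directed from $R$ to $\bar R$ (i.e.\ whose head lies in $R$ or whose tail lies in $\bar R$) keeps $Z$ an $(s,t)$-cut, since it cannot enlarge the set of vertices reachable from $s$ in $(G'+A_0)-Z$ beyond $R$. So I need a small, instance-computable family of edge sets one of which, for \emph{every} possible $Z$, is simultaneously flow-increasing and safe for $Z$. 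The obstruction is that $R$ ranges over exponentially many sets. The plan is to discretise the relevant part of $R$ by a bounded-complexity \emph{sketch}: using the $\le k$ flow paths of $f$ and residual distances in $G'_f$, one argues that what matters about $R$ for choosing a safe flow-increasing gadget is captured by $\cO(k^2)$ ``threshold'' choices, each a residual distance in $\{0,1,\dots,n\}$ that dyadic bucketing cuts down to $\cO(\log n)$ options, together with $2^{\cO(k^3\log k)}$ combinatorial choices recording how the $\le k$ arcs of $Z$ distribute over and interleave with the flow paths. For each guess one either writes down an explicit safe flow-increasing gadget $A_0$ (a short ``star'' of new arcs), or detects that the guess is incompatible with any minimal cut and discards it; correctness is shown by verifying, for the guess matching the true $Z$, the two elementary facts above for the gadget it produces.

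\textbf{Main obstacle.} The crux --- and the part I expect to be genuinely hard --- is the structural lemma underlying the single-step subroutine: that the interaction between an arbitrary minimal $(s,t)$-cut and a fixed maximum flow can be summarised by a $\poly(k)$-size sketch plus $\poly(k)$ residual-distance thresholds in a way that always yields a safe flow-augmenting gadget. The degenerate configurations, where $S$ lies entirely on the $R$-side and $T$ entirely on the $\bar R$-side, are exactly where one must work, and this is where directedness bites: reachability is not symmetric and the residual graph is asymmetric, so one must carefully bound how far backward residual arcs let reachability leak from the $\bar R$-side into $R$ --- which is precisely what forces both the $(\log n)^{\poly(k)}$ factor and the $2^{\poly(k)}$ bookkeeping in the final bound. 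Everything else (the recursion, the depth and size accounting, and checking ``flow up'' and ``$Z$ survives'' for a concrete gadget) is routine manipulation of max-flow/min-cut duality and residual graphs.
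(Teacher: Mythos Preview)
The paper does not prove this proposition at all: it is quoted verbatim as a black-box result from \cite{kim2021flow,DBLP:conf/stoc/0002KPW22} and used without proof. So there is no ``paper's own proof'' to compare your proposal against.

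As to your sketch itself: the outer architecture you describe---a depth-$\le k$ recursion in which each step strictly increases the maximum $(s,t)$-flow while keeping the unknown minimal cut $Z$ intact, with the leaf count producing the $2^{\cO(k^4\log k)}(\log n)^{\cO(k^3)}$ bound---does match the high-level shape of the actual flow-augmentation argument of Kim et al. However, you correctly flag that the entire substance of the theorem lives in the ``single-step subroutine'' and its structural lemma, and your proposal does not actually supply that lemma: the claim that the interaction of an arbitrary minimal cut with a fixed maximum flow is captured by $\poly(k)$ combinatorial choices plus $\poly(k)$ dyadically-bucketed residual-distance thresholds is precisely the hard and delicate part of the cited papers, and your text gives no mechanism for it beyond naming the desiderata. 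So what you have is an accurate \emph{plan} for the proof with the key lemma left open, not a proof; since the present paper only cites the result, that is not a deficiency relative to the paper, but you should be aware that filling in that single step is essentially the whole content of the flow-augmentation papers.
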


\subsection{Transforming the solution into a (directed) minimum cut via Flow Augmentation}


In this section we give a reduction from undirected vertex \stc to directed vertex \stc. This process consists of four phases. We adopt the following convention. Let $\betak$ denote the size of the minimum \stc and let $\R$ denote the rank of the matroid ${\cal M}$. If $\betak > \R$, then we immediately terminate and output \textsc{No}. Therefore, assume that $0 \le \betak \le \R$.

\subparagraph{Phase I. Undirected Vertex \stc $\Rightarrow$ Directed Vertex \stc.} We are given an instance $(G, s, t, \cM,Q, \betak)$ of {\sc Independent Undirected Vertex \stc} to an instance of $(G_1, s, t, \cM_1, Q_1, \betak)$ of {\sc Independent Directed Vertex \stc}.
First, given $G$, we define a directed graph $G_1$ as follows. 
  \begin{itemize}
      \item The vertex set remains the same, that is, $V(G_1)=V(G), Q_1= Q$.

      \item For each edge $e=uv$ in $G$, add two arcs $(u,v)$
and $(v,u)$ in $A(G_1)$.   
\end{itemize}

\noindent It is easy to see that $S$ is a vertex \stc in $G$ if and only if $S$ is a vertex \stc in $G_1$. 

\subparagraph{Matroid Transformation.} 
Here we construct a matroid $\cM_1$ for instance $(G_1,\betak)$. Initially, the matroid $\M$ is given as an input with $G$. We keep the same matroid for the new instance $(G_1,\betak)$, that is, $\M_1:=\cM$. Now, toward the correctness of the transformation we have the following claim. Let $\cF_1(s, t, \betak)$ denote the set of all minimal independent (w.r.t.$\cM_1$) $(s, t)$-cuts of size $k$ in $G_1$. The following claim is straightforward.

\begin{claim} \label{cl:bijection1}
    There is a natural bijection $f_{1}: \cF(s, t, \betak) \to \cF_1(s, t, \betak)$ that can be computed in polynomial-time. In particular, $\cF(s, t, \betak) \neq \emptyset \iff \cF_1(s, t, \betak) \neq \emptyset$.
  \end{claim}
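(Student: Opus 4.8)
\textbf{Proof plan for Claim~\ref{cl:bijection1}.}

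The plan is to observe that the transformation from $G$ to $G_1$ is purely a local edge-replacement that preserves the combinatorial structure relevant to vertex cuts, and that the matroid is literally unchanged ($\cM_1 := \cM$), so the bijection $f_1$ will simply be the identity map on vertex subsets. Concretely, I would set $f_1(S) = S$ for every $S \in \cF(s,t,\betak)$, and the work is entirely in verifying that this identity map is well-defined, i.e., that $S \in \cF(s,t,\betak)$ if and only if $S \in \cF_1(s,t,\betak)$. Since $\cM_1 = \cM$ and $Q_1 = Q$, the independence condition ($S \in \cI$), the size condition ($|S| = \betak$), and the condition $S \cap \{s,t\} = \emptyset$ are verbatim identical in the two instances, so nothing needs to be checked there. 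The only property that requires argument is that $S$ is a minimal $(s,t)$-cut in $G$ iff $S$ is a minimal $(s,t)$-cut in $G_1$.

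For the cut property, first I would recall the observation already stated in the excerpt: $S$ is a vertex $(s,t)$-cut in $G$ iff $S$ is a vertex $(s,t)$-cut in $G_1$. The reason is that, by construction, a sequence $v_0 v_1 \cdots v_\ell$ of vertices is an (undirected) path from $s$ to $t$ in $G$ if and only if it is a directed path from $s$ to $t$ in $G_1$: each undirected edge $v_i v_{i+1}$ of $G$ is present, and each such edge was replaced by the two arcs $(v_i, v_{i+1})$ and $(v_{i+1}, v_i)$ in $G_1$, so the forward arc $(v_i, v_{i+1})$ is always available. Hence $G - S$ has an $s$-$t$ path iff $G_1 - S$ has a directed $s$-$t$ path, which gives the equivalence of the cut property. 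For minimality, I would note that minimality of $S$ as an $(s,t)$-cut is equivalent to the statement that for every $v \in S$, the set $S \setminus \{v\}$ is not an $(s,t)$-cut, i.e., there is an $s$-$t$ path in $G - (S \setminus \{v\})$ (resp.\ $G_1 - (S \setminus \{v\})$) avoiding $S \setminus \{v\}$. Applying the path-equivalence above to the graph $G - (S \setminus \{v\})$ and its directed counterpart $G_1 - (S \setminus \{v\})$ (which is exactly the directed graph obtained from $G - (S \setminus \{v\})$ by the same arc-replacement), we get that $S \setminus \{v\}$ is an $(s,t)$-cut in $G$ iff it is one in $G_1$, for every $v$. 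Therefore $S$ is minimal in $G$ iff $S$ is minimal in $G_1$.

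Combining the observations, $f_1 = \mathrm{id}$ restricted to $\cF(s,t,\betak)$ is a well-defined bijection onto $\cF_1(s,t,\betak)$ (its inverse being the identity in the other direction, by the same argument), and it is trivially computable in polynomial time. The ``in particular'' statement, $\cF(s,t,\betak) \neq \emptyset \iff \cF_1(s,t,\betak) \neq \emptyset$, is then immediate. I do not expect any genuine obstacle here; the claim is labelled ``straightforward'' in the excerpt precisely because the matroid plays no role in this phase and the graph transformation is a standard undirected-to-directed gadget that preserves paths. The only point worth being slightly careful about is that the definition of vertex $(s,t)$-cut forbids $S$ from containing $s$ or $t$ and this constraint is identical in both instances; beyond that, it is a direct unwinding of definitions.
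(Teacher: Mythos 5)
Your proposal is correct and follows essentially the same route as the paper: the bijection is the identity map, the matroid is literally unchanged so independence and size are preserved verbatim, and the cut property transfers because directed $s$-$t$ paths in $G_1$ correspond exactly to undirected $s$-$t$ paths in $G$. In fact you are more careful than the paper's one-line proof, which only addresses the matroid side and leaves the (easy) cut-preservation and minimality-preservation steps implicit; your explicit treatment of minimality via the per-vertex characterization is a welcome addition rather than a deviation.
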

\begin{claimproof}
   The proof is derived from the observation that the vertex set remains unchanged in both graphs $G$ and $G_1$. Therefore, any independent set in $\M$ will also be an independent set in $\M_1$, and vice versa (since $\M=\M_1$). 
\end{claimproof}

Let $\cF_1(s, t, \betak) \coloneqq \LR{X \subseteq V(G_1) \setminus \LR{s, t}: X \in \cI(\cM_1), |X| = \betak, X \text{ is a directed \stc}}$. It is straightforward to see that the task of finding $\cF'(s, t, \betak) \repset{q} \cF(s, t, \betak)$ is equivalent to finding $\cF'_1(s, t, \betak) \repset{q} \cF_1(s, t, \betak)$, using the bijection $f_1$ in \Cref{cl:bijection1}. Then, by using \Cref{lem:directedvtc}, we obtain the proof of \Cref{thm:stcuttheorem}.

\begin{figure}[ht]
\centering
\includegraphics[scale=0.5]{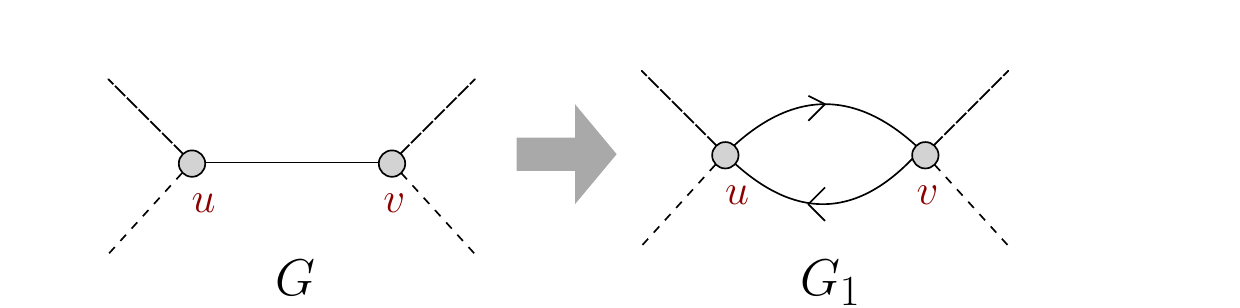}
	\caption{Illustration of Phase I.}
\label{fig:phasei}
\end{figure}

\subparagraph{Phase II. Directed Vertex \stc $\Rightarrow$ Directed Edge \stc.}
We are given an instance $(G_1, s, t, \cM_1, Q_1, \betak)$ of {\sc Generalized Independent Directed Vertex \stc}. We show how  to reduce it to an instance of $(G_2, s, t, \cM_2, Q_2, \betak)$ of {\sc Generalized Independent Directed Edge \stc}. Note that here $Q_2 \subseteq A(G_2)$. Given a directed graph $G_1$, we construct a directed graph $G_2$ as follows. 
  \begin{itemize}
      \item For each vertex $v$ in $G_1$, we introduce two vertices $v_{in}$ and $v_{out}$ in $G_2$. And add the arcs $(v_{in}, v_{out})$ to $A(G_2)$. We call them edges in $G_2$ corresponding to $V(G_1)$, or {\em v-edges}.

      \item For each arc $(u,v)$ in $A(G_1)$, we add the arc $(u_{out}, v_{in} )$ in $A(G_2)$. We call them edges in $G_2$ corresponding to $A(G_1)$, or {\em e-edges}

      \item The set $Q_2$ is the v-edges corresponding to the set $Q_1 \subseteq V(G)$. 
      \end{itemize}

       \begin{figure}[h]
	\centering
	\includegraphics[scale=0.5]{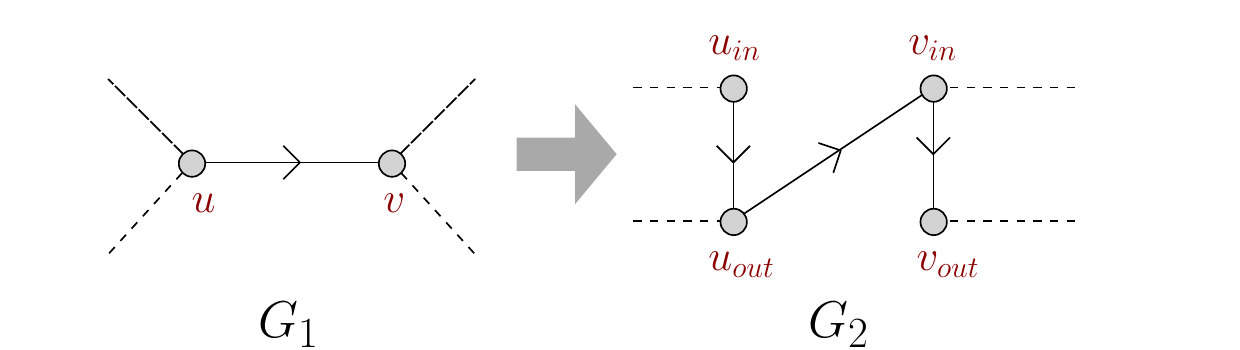}
	\caption{Illustration of Phase II.}
	\label{fig:phaseii}
\end{figure}

\subparagraph{Matroid Transformation.} 
 

  Here we construct a matroid $\cM_2$ for the instance $(G_2, s, t, \cM_2,\betak)$. Assume that the in the input we are given a graph $G_1$  along with a matroid $\M_1$. The ground set of $\cM_2$ is equal to $A(G_2)$. Any edge subset that contains a e-edge is deemed to be dependent. Further, using the natural bijection between a vetex $v$ in $V(G_1)$ and the corresponding v-edge in $V(G_2)$, we obtain a bijection between independent vertex-subsets in $\cM_1$, and make the corresponding v-edge-subsets independent in $\cM_2$. It is straightforward to obtain a representation of $\cM_2$, given a representation of $\cM_1$ -- we use the same vectors for v-edges, whereas other edges  is assigned a zero vector.

Let $\cF_1(s, t, \betak) \coloneqq \LR{S \subseteq A(G_1): S \text{ is a minimal edge \stc in $G_1$}, |S| = \betak, S \in \cI(\cM_1)}$, and $\cF_2(s, t, \betak) \coloneqq \LR{S \subseteq V(G_2): S \text{ is a minimal vertex \stc in $G_2$}, |S| = \betak, S \in \cI(\cM_2)}$. We prove the following claim.

\begin{claim}\label{claim:vst-est}
    There is a natural bijection $f_{12}: \cF_1(s, t, \betak) \to \cF_2(s, t, \betak)$ that can be computed in polynomial time. In particular, $\cF_1(s, t, \betak) \neq \emptyset \iff \cF_2(s, t, \betak) \neq \emptyset$.
\end{claim}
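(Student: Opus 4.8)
The claim to prove is \Cref{claim:vst-est}: a natural bijection between minimal independent directed edge $(s,t)$-cuts in $G_1$ and minimal independent directed vertex $(s,t)$-cuts in $G_2$, computable in polynomial time, witnessing emptiness equivalence. Since $G_2$ is the standard vertex-splitting gadget, the plan is to build this bijection explicitly on the level of ground-set elements and then verify the three properties that must be preserved: being a vertex $(s,t)$-cut, minimality, and independence in the respective matroids.

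First I would set up the map. For an edge $e = (u,v) \in A(G_1)$ there are two types: either $e$ is one of the ``v-edges'' $(w_{in}, w_{out})$ (wait—v-edges are not in $A(G_1)$; they are created in $G_2$), so in fact every arc of $A(G_1)$ maps to an e-edge $(u_{out}, v_{in})$ of $G_2$, and every vertex $w \in V(G_1)$ corresponds bijectively to a v-edge $(w_{in}, w_{out})$ of $G_2$ and hence to a vertex of... no: in $G_2$ the solution we seek is a \emph{vertex} cut, but the matroid ground set is $A(G_2)$ and only v-edge-subsets can be independent, so a solution $S \in \cF_2(s,t,\betak)$ is a set of vertices of $G_2$; here I should use the observation that deleting the vertex $w_{in}$ or $w_{out}$ in $G_2$ has the same effect as deleting the v-edge $(w_{in}, w_{out})$, which in turn corresponds to deleting the vertex $w$ of $G_1$. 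Hmm — actually the cleanest route, and I suspect the one intended, is: a directed \emph{edge} cut $S \subseteq A(G_1)$ in $G_1$ that is independent must consist only of v-edge-preimages, i.e. (after the earlier Phase I reduction and a minimality argument) may be assumed to sit on ``vertex-arcs'', and likewise $\cF_2$ consists of vertex subsets of $G_2$ that, by independence, lie among $\{w_{in} : w\} \cup \{w_{out}:w\}$ identified with v-edges. So I would define $f_{12}$ to send a minimal independent edge cut $S = \{a_1,\dots,a_k\}$ of $G_1$ — which by the independence constraint and the structure of $\cM_1$ (whose ground set is $V(G_1)$, contradiction again)...

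Let me restate the real plan without fighting the notation: the earlier Phase I gives a matroid $\cM_1$ on $V(G_1)$, but the statement of \Cref{claim:vst-est} redefines $\cF_1$ in terms of \emph{edge} cuts with $\cI(\cM_1)$ — so implicitly between Phase I's end and this claim there is an intermediate vertex-to-edge step on $G_1$ itself; regardless, the content is the standard fact that in $G_2$ a set of vertices is an $(s,t)$-cut iff the corresponding set of v-edges is an $(s,t)$-edge-cut, because any $(s,t)$-path in $G_2$ alternates e-edges and v-edges and begins/ends at $s_{out}/t_{in}$, so it is hit by a v-edge set iff it is hit by the corresponding vertex set; minimality transfers because the gadget is a bijection on the relevant ground-set elements and ``$(s,t)$-path through exactly one deleted element'' is preserved in both directions; independence transfers by the explicit matroid transformation, since $\cM_2$ was defined precisely to make the v-edge-subset independent iff the corresponding subset (of $V(G_1)$, resp. of e-edges which are never independent) is independent in $\cM_1$, and e-edges are forced to be dependent so no minimal solution ever uses one. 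So the key steps, in order, are: (1) show any minimal edge $(s,t)$-cut of $G_1$ of size $\le k$ that is independent uses only v-edges — immediate since a single e-edge is dependent; wait, a \emph{set} containing an e-edge is dependent, so yes; (2) define $f_{12}$ as the induced map $\{(w_{in},w_{out})\} \leftrightarrow \{w\}$; (3) verify cut-preservation via the alternating-path observation; (4) verify minimality-preservation; (5) verify independence-preservation from the construction of $\cM_2$ and its representation; (6) conclude $|S|$ is preserved hence the restriction to size-$\betak$ sets is a bijection, and emptiness equivalence follows.

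I expect the main obstacle to be purely bookkeeping rather than mathematical: making sure the chain of ground-set identifications (vertices of $G_1 \to$ v-edges of $G_2 \to$ a vertex of each v-edge used as the actual deleted vertex) is internally consistent, and in particular that ``minimal vertex cut in $G_2$'' really does force the solution to respect the v-edge structure — one must argue that a minimal vertex $(s,t)$-cut in $G_2$ never needs to delete $s_{out}$, $t_{in}$, or two endpoints $w_{in},w_{out}$ of the same v-edge, so it can be canonically rewritten as a v-edge set; this uses that deleting $w_{in}$ and deleting $w_{out}$ are interchangeable (same effect on $(s,t)$-connectivity) and that the matroid $\cM_2$ only ``sees'' the v-edge. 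Once that canonical form is pinned down, each of the three preservation properties is a one-line check, and the polynomial-time computability of $f_{12}$ and $f_{12}^{-1}$ is obvious from their explicit element-wise description. I would present the argument as: define the map, state the canonical-form lemma for minimal solutions on both sides, then check cut/minimality/independence preservation in that order, and finally note that size is preserved so that the size-$\betak$ restriction is a bijection, yielding $\cF_1(s,t,\betak) \neq \emptyset \iff \cF_2(s,t,\betak) \neq \emptyset$.
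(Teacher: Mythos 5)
Your final outline matches the paper's proof: map each vertex $w$ of $G_1$ to the v-edge $(w_{in},w_{out})$ of $G_2$, prove cut-equivalence by observing that every $s$--$t$ path in $G_2$ alternates v-edges and e-edges and hence projects to an $s$--$t$ path in $G_1$ (and conversely), and transfer independence via the construction of $\cM_2$, using that no independent set in $\cM_2$ contains an e-edge (it is a zero vector), so a minimal independent cut on the $G_2$ side consists only of v-edges. Your suspicion about the definitions is justified: the displayed definitions of $\cF_1$ and $\cF_2$ immediately before the claim are a typo that swaps vertex/edge relative to the Phase~II construction; the intended reading, consistent with the construction, with the paper's own proof of the claim, and with \Cref{lem:flowaugrepset}, is that $\cF_1$ collects minimal independent \emph{vertex} $(s,t)$-cuts of $G_1$ and $\cF_2$ collects minimal independent \emph{edge} $(s,t)$-cuts of $G_2$. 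Consequently the ``canonical-form lemma'' you propose for minimal \emph{vertex} cuts of $G_2$ (never deleting $s_{out}$, $t_{in}$, or both endpoints of a v-edge) is not needed here---that issue belongs to Phase~IV---and once it is dropped, the remaining steps of your plan are exactly those of the paper.
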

\begin{claimproof}
    In the forward direction, assume that $S \subseteq V(G_1)$ is a solution of size at most $\betak$.  We show that the arcs in $G_2$ corresponding to the vertices in $S \subseteq G_1$, more specifically, the set of arcs $\{(v_{in},v_{out}) \colon v \in S\}$ (denoted by $A_S$) of size at most $k$ is a solution of size at most $k$. On the contrary, assume that there is a path $P$ from $s$ to $t$ in $G_2-A_S$. Now, every arc on the path $P$, of the form $(v_{in}, v_{out})$ (Type I) for some $v \in V(G_1)$ or of the form $(u_{out}, v_{in})$ (Type II) for some $u,v \in V(G_1)$. Now, the existence of arcs of Type I implies that the vertex $v$ is not part of $S$. In the other case, the existence of Type II arcs implies that there is an arc $(u,v) \in A(G_1)$. Note that any pair of consecutive arcs on the path $P$ is of different type. With this we can find a path from $s$ to $t$ in $G_1-S$.     In the backward direction, assume that $Z \subseteq A(G_2)$ is a solution for $G_2$. Now we find a set of vertex $V_Z$ in $G_1$ in the following way: If there is an arc $(v_{in}, v_{out}) \in Z$ (Type I) for some $v \in V(G_1)$, we add $v$ to $V_Z$, if there is an arc of the form $(u_{out}, v_{in}) \in Z$ (Type II) for some $u,v \in V(G_1)$, we add $u$ in $V_Z$.  We show that $G_1-V_{Z}$ does not have a path from $s$ to $t$. On the contrary, assume that there is a path $P$ from $s$ to $t$ in $G_1-V_{Z}$.  Now, corresponding to the path $P$ in $G_1-V_{Z}$, we can find a path from $s$ to $t$ in $G_2 - Z$ in the following way. For every arc $(u,v)$ in $P$ we have an arc $(u_{out}, v_{in})$. Note that by our process, as the arc $(u,v)$ is present in $G_1-V_{Z}$, so both the arcs $(u_{in}, u_{out})$ and$(u_{out}, v_{in})$ are present in $G_2- Z$. So for every arc $(u,v) $ in $P$ we have a path from $u_{in}$ to $v_{out}$
in $G_2-Z$. With this we can find the path from $s$ to $t$ in $G_2-Z$. A contradiction.

We now demonstrate that the solution maintains the independence property when transitioning from $G_1$ to $G_2$, and similarly in the reverse direction. In the forward direction, assume that $S$ is an independent \stc of size at most $k$ in $G_1$. Notice that the arcs of $G_2$ corresponding to the vertices in $S$ (denoted by $A_S$) are indeed a \stc in $G_2$. Now, according to our construction, the vector set corresponding to $A_S$ in $\M_2$ is exactly the same as the vector set corresponding to $S$ in $\M_1$. Therefore, $A_S$ eventually becomes independent in $M_2$. In the reverse direction, assume that $Z$ is an independent \stc of size at most $k$ in $G_2$. Now $Z$ is independent in $\M_2$. This implies that $Z$ does not contain any edge of the form $(u_{out}, v_{in})$ for some $u,v \in V(G_1)$ (since no independent set of vectors contains a zero vector and the vector corresponding to the edge $(u_{out}, v_{in})$ is a zero vector in $\M_2$). So, every edge $Z$ is of the form $(v_{in}, v_{out})$ for some $v \in V(G_1)$. Notice that the vertices of $G_1$ corresponding to the arcs of $Z$ (denoted by $V_Z$) are indeed a \stc in $G_2$. Now, according to our construction, the vector set corresponding to $V_Z$ in $\M_1$ is exactly the same as the vector set corresponding to $Z$ in $\M_2$. Therefore, $V_Z$ eventually becomes independent in $\M_1$. Hence, the claim.     
\end{claimproof}

\begin{sloppypar}
\subparagraph{Algorithm.} Given an instance $(G_1, s, t, \cM_1, Q_1, \betak)$, we compute the instance $(G_2, s, t, \cM_2, Q_2, \betak)$, as mentioned above. From \Cref{claim:vst-est}, the task of finding $\cF'_1(s, t, \betak) \repset{\R-\betak} \cF_1(s, t, \betak)$ is reduced to $\cF'_2(s, t, \betak) \repset{\R-\betak} \cF_2(s, t, \betak)$, for which we use the algorithm from \Cref{lem:flowaugrepset}. Thus, we obtain the following lemma.
\end{sloppypar}

\begin{lemma} \label{lem:phase2repset}
	There exists a deterministic algorithm that takes an input $(G_1, s, t, \cM_1,Q_1, \betak)$ of {\sc Generalized Independent Directed Vertex \stc}, runs in time $2^{\Oh(\R)} \cdot 2^{\Oh(\betak^4)} \cdot n^{\Oh(1)}$, where $\R = \rank(\cM_1)$, and returns $\cF'_1(s, t, \betak) \repset{\R-\betak} \cF_1(s, t, \betak)$ of size at most $2^{\R}$. 
\end{lemma}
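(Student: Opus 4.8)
The plan is to chain together the reduction of Phase~II (\Cref{claim:vst-est}) with the algorithm for the directed \emph{edge} version, which we assume available as \Cref{lem:flowaugrepset}. Given the input $(G_1, s, t, \cM_1, Q_1, \betak)$, the first step is to build the directed-edge instance $(G_2, s, t, \cM_2, Q_2, \betak)$ exactly as described above: split every vertex $v$ into $v_{in}, v_{out}$ with a v-edge $(v_{in}, v_{out})$, replace each arc $(u,v)$ of $G_1$ by an e-edge $(u_{out}, v_{in})$, set $Q_2$ to be the v-edges of $Q_1$, and construct the representation of $\cM_2$ by copying the column vector of $v$ (in the representation of $\cM_1$) onto the v-edge of $v$ and assigning the zero vector to every e-edge. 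This takes polynomial time, and crucially $\rank(\cM_2) = \rank(\cM_1) = \R$ and $|A(G_2)| = \Oh(|V(G_1)| + |E(G_1)|)$, so the instance size stays polynomially bounded.

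Next I would invoke the bijection $f_{12}\colon \cF_1(s,t,\betak) \to \cF_2(s,t,\betak)$ from \Cref{claim:vst-est}, which maps $S \subseteq V(G_1)$ to the set of v-edges $\{(v_{in}, v_{out}) : v \in S\}$. The key point, already argued in the claim proof, is that $f_{12}$ not only preserves the property of being a minimal \stc of size $\betak$, but also preserves independence: the multiset of vectors assigned to $S$ in $\cM_1$ equals the multiset assigned to $f_{12}(S)$ in $\cM_2$. Consequently, for any set $B$ of size $\R - \betak$ (which we may freely regard as living in the ground set of either $\cM_1$ or $\cM_2$ via the identity on vertices/v-edges), $S$ fits $B$ in $\cM_1$ if and only if $f_{12}(S)$ fits $f_{12}(B)$ in $\cM_2$. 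Hence if $\cF_2'(s,t,\betak) \repset{\R - \betak} \cF_2(s,t,\betak)$, then $f_{12}^{-1}(\cF_2'(s,t,\betak)) \repset{\R - \betak} \cF_1(s,t,\betak)$, and the size bound is inherited. So I would run the algorithm of \Cref{lem:flowaugrepset} on $(G_2, s, t, \cM_2, Q_2, \betak)$ to obtain $\cF_2'(s,t,\betak)$ of size at most $2^{\R}$ in time $2^{\Oh(\R)} \cdot 2^{\Oh(\betak^4)} \cdot n^{\Oh(1)}$, then pull it back through $f_{12}^{-1}$ in polynomial time. Since $\betak \le \R$, the running time absorbs into the stated bound and the output has size at most $2^{\R}$, as required.

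The only genuinely delicate point is to make sure the representative-set guarantee is transported correctly across a bijection that lives on \emph{subsets of different ground sets}: one must check that the identification of $U(\cM_1)$ (restricted to $V(G_1) \setminus Q_1$) with the v-edges of $G_2$ is a matroid isomorphism $\cM_1 \cong \cM_2$ (after discarding the e-edges, which are loops of $\cM_2$ and hence never appear in an independent set), so that ``fitting'' is preserved in both directions. Given the explicit representation construction this is immediate, so I do not expect any real obstacle here — the substance of the argument is entirely delegated to \Cref{claim:vst-est} and \Cref{lem:flowaugrepset}, and this lemma is essentially the bookkeeping that stitches Phase~II to the directed-edge algorithm while tracking the (unchanged) rank parameter $\R$ and the representative-family index $\R - \betak$.
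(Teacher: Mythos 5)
Your proposal is correct and follows essentially the same route as the paper: the paper's proof of this lemma consists exactly of constructing the Phase~II instance $(G_2, s, t, \cM_2, Q_2, \betak)$, invoking \Cref{claim:vst-est} to transfer the representative-family computation across the bijection $f_{12}$, and then calling the algorithm of \Cref{lem:flowaugrepset}, with the running time and size bounds inherited directly. Your added care about why ``fits'' is preserved across the (essentially isomorphic) matroids $\cM_1$ and $\cM_2$ is a reasonable elaboration of a point the paper leaves implicit, not a departure from its argument.
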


\subparagraph{Phase III. Flow augmentation}
\begin{definition}[Compatible augomenting set] \label{def:cas}
	Let $Z \in E(G)$ be a minimal \stc in $G$. We say that $A \in V(G) \times V(G)$ is a \emph{compatible augmenting edge set} for $Z$ in $G$ if (i) $Z$ remains an edge \stc in $G+A$, and (ii) $Z$ is a minimum edge \stc in $G+A$. 
	\\Let $\cA = \LR{A_1, A_2, \ldots}$ is a collection of subsets of $V(G) \times V(G)$. If for $Z \in E(G)$, there exists some $A_i \in \cA$ that is a compatible augmentating edge set for $Z$ in $G$, then we use $A_{\cal A}(Z)$ to denote such an $A_i$ -- if there are multiple such $A_i$'s, then we pick arbitrarily; whereas if there is no such $A_i \in \cA$, then we define $A_{\cal A}(Z) = \bot$. When $\cA$ is clear from the context, we will omit it from the subscript.
\end{definition}

Assume that we have an instance $G_2$ for directed edge \stc.   Using \cref{prop:flowdir}, in time $2^{\cO(\betak^4 \log \betak)}  \cdot |V (G)|^{\cO(1)} $ we will find a set $ \mathcal{A} \subseteq 2^{V (G) \times V (G)} $ of compatible augmenting edge-sets with $\alpha \coloneqq |\mathcal{A}|= 2^{\cO(k^4 \log \betak)}  \cdot (\log n)^{\cO(\betak^3)} $. 
Now, from $G_2$, we create $\alpha$ instances in the following way: for each $A \in \mathcal{A}$, we create an instance $G_2^A= G_2 + A$.

\begin{claim}\label{claim:flowdec}
   For any $Z \subseteq E(G_2)$ of size $\betak' \le \betak$, $Z$ is an edge \stc in $G_2$ if and only if there exists an $A_{\cal A} \in \mathcal{A}$ such that $ Z $ is a minimum edge \stc of size $\betak'$ in $ G_2^A$. 
\end{claim}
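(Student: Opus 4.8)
The plan is to read off the equivalence essentially as a corollary of the directed flow-augmentation guarantee (\Cref{prop:flowdir}), once a few conventions are pinned down. I would first record that $G_2$ is the directed graph produced in Phase~II, so that \Cref{prop:flowdir} applies to it; that the family $\mathcal{A}$ with which we form the auxiliary instances $G_2^A := G_2 + A$ is precisely the one \Cref{prop:flowdir} outputs when run on $G_2$ with the integer $\betak$; and that ``$Z$ is a minimum edge \stc of size $\betak'$ in $G_2^A$'' is to be understood as ``$Z$ is a minimum edge \stc in $G_2^A$ and the minimum edge \stc value of $G_2^A$ equals $\betak'$''. Finally, since the claim is only ever invoked for inclusion-minimal cuts, I would state and prove it for $Z$ an inclusion-minimal edge \stc of $G_2$ with $|Z| = \betak' \le \betak$.

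For the forward direction, given such a minimal $Z$, I would simply apply \Cref{prop:flowdir}: because $|Z| \le \betak$, it produces some $A \in \mathcal{A}$ for which $Z$ is still an edge \stc in $G_2 + A = G_2^A$ and, moreover, $Z$ is a \emph{minimum} edge \stc in $G_2^A$. Since $|Z| = \betak'$, the minimum edge \stc value of $G_2^A$ is then exactly $\betak'$, so $Z$ is a minimum edge \stc of size $\betak'$ in $G_2^A$. In the terminology of \Cref{def:cas}, this $A$ is a compatible augmenting edge set for $Z$, i.e.\ $A_{\mathcal{A}}(Z) \neq \bot$.

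For the backward direction, suppose some $A \in \mathcal{A}$ makes $Z$ a minimum edge \stc of size $\betak'$ in $G_2^A = G_2 + A$. Since $Z \subseteq E(G_2)$, the graph $G_2 - Z$ is a subgraph of $G_2^A - Z$ (they differ only in arcs of $A$), so any directed $s$-to-$t$ path in $G_2 - Z$ would also be a directed $s$-to-$t$ path in $G_2^A - Z$; as $Z$ is an \stc in $G_2^A$ there is no such path in $G_2^A - Z$, hence none in $G_2 - Z$, and $Z$ is an edge \stc in $G_2$. This closes the equivalence.

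I do not expect a genuine obstacle: the backward direction is the ``easy'' monotone direction — enlarging the arc set can only make it harder for a fixed set to be an \stc, so being an \stc in $G_2 + A$ forces being one in $G_2$ — while the forward direction is exactly the promise flow augmentation was designed to give for minimal cuts. The only points that need care are invoking \Cref{prop:flowdir} on the \emph{directed} graph $G_2$ with the \emph{same} parameter $\betak$ and family $\mathcal{A}$ already fixed in Phase~III, the harmless restriction to minimal $Z$ (which matches how the claim is used downstream), and the bookkeeping identification of ``minimum \stc in $G_2^A$'' with ``minimum \stc of size $\betak'$'' via the observation that the min-cut value of $G_2^A$ equals $|Z| = \betak'$.
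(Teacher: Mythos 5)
Your proof is correct and follows essentially the same route as the paper: the forward direction is read off from \Cref{prop:flowdir} and the backward direction is the monotonicity observation that adding arcs can only destroy, never create, an \stc. Your extra care in restricting to inclusion-minimal $Z$ (which the flow-augmentation guarantee actually requires, and which matches how the claim is used) is a point the paper's own one-line proof glosses over.
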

\begin{proof}
 The forward direction can be derived from the assertion made in \Cref{prop:flowdir}. In the reverse direction, since $G_2^A$ is created by adding some edges to $G_2$, it is evident that any edge \stc in $G_2^A$ also remains an edge \stc in $G_2$. 
\end{proof}

\subparagraph{Matroid Transformation.}  
Now, fix some $A \in \cA$ and we describe the matroid associated with the instance corresponding $G_2^A$, called $\cM^A$. The ground set $U(\cM^A_3)$ is equal to $U(\cM_2) \cup A = E(G_2^A)$. The collection of independent sets in $\cM^A$ is same as that in $\cM_2$ -- any set $S \subseteq E(G^A_2)$ with $S \cap A \neq \emptyset$ is deemed dependent in $\cM^A_2$. It is straightforward to verify that $\cM^A_2$ is a matroid. Given a representation for $\cM_2$, it is easy to come up with a representation for $\cM^A_2$ -- we simply associate each edge in $A$ with a zero vector. Let $\cF^A_2(s,t, \betak) = \LR{X \subseteq E(G^A_2) : X \text{ is an independent minimum edge \stc in $G^A_2$}, |X| = \betak}$. Now, from \Cref{claim:flowdec} and the above discussion, we obtain the following corollary.
\begin{observation} \label{obs:augmentation}
	For each $Z \in \cF_2(s, t, \betak)$, $Z \in \cF_2^A(s, t, \betak)$, where $A = A_{\cal A}(Z)$. 
\end{observation}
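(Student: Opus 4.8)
The plan is to read off \Cref{obs:augmentation} from the deterministic directed flow-augmentation guarantee (\Cref{prop:flowdir}), repackaged above as \Cref{claim:flowdec}, together with the Phase~III matroid transformation. First I would fix $Z \in \cF_2(s,t,\betak)$ and unpack the definition: $Z \subseteq E(G_2)$ is an inclusion-minimal edge $(s,t)$-cut in $G_2$ with $|Z| = \betak$ that is independent in $\cM_2$. Since $|Z| = \betak$, $Z$ is a minimal edge $(s,t)$-cut of size at most $\betak$, so \Cref{prop:flowdir} (equivalently the forward implication of \Cref{claim:flowdec} taken with $\betak' = \betak$), applied to $G_2$ with parameter $\betak$, produces a set $A \in \cA$ for which $Z$ remains an $(s,t)$-cut of $G_2 + A$ and is moreover a \emph{minimum} $(s,t)$-cut of $G_2 + A$. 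By \Cref{def:cas} this says precisely that $A$ is a compatible augmenting edge set for $Z$ in $G_2$; hence $A_{\cal A}(Z) \neq \bot$, and setting $A \coloneqq A_{\cal A}(Z)$ we get that $Z$ is a minimum edge $(s,t)$-cut of size $\betak$ in $G^A_2 = G_2 + A$.

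The second step is to carry the independence of $Z$ across the matroid transformation. Recall that $U(\cM^A_2) = E(G^A_2) = E(G_2) \cup A$, that the newly added arcs $A$ are disjoint from $E(G_2)$, that every subset meeting $A$ is dependent in $\cM^A_2$, and that otherwise $\cM^A_2$ restricted to $E(G_2)$ coincides with $\cM_2$ (the representation of $\cM^A_2$ reuses the vectors of $\cM_2$ and assigns zero vectors to the arcs of $A$). Since $Z \subseteq E(G_2)$ is disjoint from $A$, it is independent in $\cM^A_2$ if and only if it is independent in $\cM_2$, which holds by the choice of $Z$. Combining the two steps, $Z$ is an independent minimum edge $(s,t)$-cut of size $\betak$ in $G^A_2$ for $A = A_{\cal A}(Z)$, i.e.\ $Z \in \cF^A_2(s,t,\betak)$, which is the claim.

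I do not anticipate any real difficulty: the statement is essentially \Cref{claim:flowdec} specialised to $\betak' = \betak$, with the extra, routine observation that Phase~III's matroid modification does not disturb the independence of edge sets lying inside $E(G_2)$. The one point worth flagging --- conceptual rather than technical --- is that a member of $\cF_2(s,t,\betak)$ need not be a minimum $(s,t)$-cut of $G_2$ at all (the minimum $(s,t)$-cut of $G_2$ may be strictly smaller than $\betak$), which is exactly why flow augmentation, rather than Menger's theorem, is invoked to promote $Z$ to a minimum cut in $G^A_2$.
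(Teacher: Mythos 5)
Your proposal is correct and matches the paper's reasoning: the paper derives \Cref{obs:augmentation} directly from \Cref{claim:flowdec} (the restated flow-augmentation guarantee, which ensures $A_{\cal A}(Z)\neq\bot$ and promotes $Z$ to a minimum cut in $G_2^A$) together with the preceding matroid-transformation discussion showing independence is preserved for sets contained in $E(G_2)$. You have simply spelled out the details that the paper leaves implicit.
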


\begin{sloppypar}
	\subparagraph{Algorithm.} For each $A \in \cA$, we create an instance $(G_2^A, s, t, \cM^A, Q^A, \betak)$, and compute $\cF^{'A}_2(s, t, \betak) \repset{\R-\betak} \cF^A_2(s, t, \betak)$ using the algorithm of \Cref{lem:directedvtc}. Then, we compute $\tilde{\cF_2}(s, t, \betak) \coloneqq \bigcup_{A \in \cA} \cF^{'A}_2(s, t, \betak)$, and then we compute $\cF'_2(s, t, \betak) \repset{\R - \betak} \tilde{\cF_2}(s, t, \betak)$. We show the following. 
\end{sloppypar}

\begin{sloppypar}
    \begin{lemma} \label{lem:flowaugrepset}
	There exists a deterministic algorithm that takes input an instance $(G_2, s, t, \cM_2,Q_2, \betak)$ of {\sc Generalized Independent Directed Edge \stc}, runs in time $2^{\Oh(\R)} \cdot 2^{\Oh(\betak^4)} \cdot n^{\Oh(1)}$, where $\R = \rank(\cM_2)$, and returns $\cF'_2(s, t, \betak) \repset{\R-\betak} \cF_2(s, t, \betak)$ of size at most $2^{\R}$. 
\end{lemma}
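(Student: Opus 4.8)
\emph{Proof plan.} The strategy is to assemble the lemma from three ingredients already available: the deterministic directed flow-augmentation family of \Cref{prop:flowdir}, the algorithm of \Cref{lem:directedvtc} for the case when the sought cut is a \emph{minimum} cut, and the representative-set computation of \Cref{prop:repset}. First, I would run \Cref{prop:flowdir} on $(G_2,s,t,\betak)$, obtaining in $2^{\Oh(\betak^4\log\betak)}\cdot n^{\Oh(1)}$ time a family $\cA$ of $\alpha\coloneqq|\cA|=2^{\Oh(\betak^4\log\betak)}\cdot(\log n)^{\Oh(\betak^3)}$ augmenting edge-sets. Next, for each $A\in\cA$, I would build the instance $(G_2^A,s,t,\cM^A,Q^A,\betak)$ exactly as in the ``Matroid Transformation'' paragraph above — so $\cM^A$ has ground set $E(G_2)\cup A$, the same independent sets as $\cM_2$ (hence $\rank(\cM^A)=\R$), and a representation obtained from that of $\cM_2$ by assigning the zero vector to every edge of $A$ — and invoke \Cref{lem:directedvtc} on it to obtain $\cF^{'A}_2(s,t,\betak)\repset{\R-\betak}\cF^A_2(s,t,\betak)$ of size at most $2^{\R}$; this invocation is legitimate because after augmentation the cuts of interest in $G_2^A$ are precisely the \emph{minimum} edge \stc s. Finally, I would set $\tilde{\cF_2}(s,t,\betak)\coloneqq\bigcup_{A\in\cA}\cF^{'A}_2(s,t,\betak)$ and call \Cref{prop:repset} with $p=\betak$ and $q=\R-\betak$ (note $p+q=\R=\rank(\cM_2)$, so the computation is deterministic and needs no truncation) to output $\cF'_2(s,t,\betak)\repset{\R-\betak}\tilde{\cF_2}(s,t,\betak)$, whose size is at most $\binom{\R}{\betak}\le 2^{\R}$.

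For correctness I would verify two facts. First, $\tilde{\cF_2}(s,t,\betak)\subseteq\cF_2(s,t,\betak)$: any $Z\in\cF^A_2(s,t,\betak)$ is independent in $\cM^A$, hence disjoint from $A$, hence $Z\subseteq E(G_2)$; as $G_2$ is a subgraph of $G_2^A$, $Z$ is an \stc in $G_2$ of size $\betak$, and since $G_2$ has no independent \stc of size $<\betak$ (the standing hypothesis of \Cref{thm:stcuttheorem}, inherited through Phases I--II), no proper subset of $Z$ is an \stc, so $Z$ is a minimal cut and $Z\in\cF_2(s,t,\betak)$. Second, $\tilde{\cF_2}(s,t,\betak)\repset{\R-\betak}\cF_2(s,t,\betak)$: given $Z\in\cF_2(s,t,\betak)$ and a set $B$ of size $\R-\betak$ that fits $Z$, \Cref{obs:augmentation} supplies an $A\in\cA$ with $Z\in\cF^A_2(s,t,\betak)$, and $B$ still fits $Z$ with respect to $\cM^A$ because $\cM^A$ and $\cM_2$ induce the same independent sets on $U(\cM_2)$; hence $\cF^{'A}_2\repset{\R-\betak}\cF^A_2$ yields some $Z'\in\cF^{'A}_2\subseteq\tilde{\cF_2}$ that fits $B$. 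Combining these two facts with $\cF'_2(s,t,\betak)\repset{\R-\betak}\tilde{\cF_2}(s,t,\betak)$ and the transitivity of representation (\Cref{obs:repsetprops}) gives $\cF'_2(s,t,\betak)\repset{\R-\betak}\cF_2(s,t,\betak)$, which is exactly the required output.

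For the running time: Step 1 costs $2^{\Oh(\betak^4\log\betak)}\cdot n^{\Oh(1)}$; the $\alpha$ invocations of \Cref{lem:directedvtc} cost $\alpha\cdot 2^{\Oh(\R)}\cdot n^{\Oh(1)}$ and produce $\tilde{\cF_2}$ of size at most $\alpha\cdot 2^{\R}$; and the final call to \Cref{prop:repset} costs $\Oh\bigl(|\tilde{\cF_2}|\binom{\R}{\betak}\betak^{\omega}+\binom{\R}{\betak}^{\omega-1}\bigr)=\alpha\cdot 2^{\Oh(\R)}\cdot n^{\Oh(1)}$ field operations, each performed in $n^{\Oh(1)}$ time. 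The one nonroutine point is a standard win--win bound on $\alpha$: if $\log n\le 2^{\betak}$ then $(\log n)^{\Oh(\betak^3)}\le 2^{\Oh(\betak^4)}$, and otherwise $\betak<\log\log n$, so $(\log n)^{\Oh(\betak^3)}< 2^{\Oh((\log\log n)^4)}=n^{o(1)}$; either way $\alpha\le 2^{\Oh(\betak^4\log\betak)}\cdot n^{\Oh(1)}$, yielding the total $2^{\Oh(\R)}\cdot 2^{\Oh(\betak^4)}\cdot n^{\Oh(1)}$ claimed. All representative-set computations here have $p+q=\R$ and use no matroid truncation, so the whole procedure is deterministic.

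I expect the only genuine point of care — and it is essentially discharged upstream — to be that the per-branch representative families really do \emph{combine}: namely, that \emph{every} minimal independent cut in $\cF_2(s,t,\betak)$, and not merely one such cut per augmented graph, is captured by some $G_2^A$ (this is exactly \Cref{obs:augmentation}, which rests on \Cref{claim:flowdec} and \Cref{prop:flowdir}), together with the fact that $\cM^A$ is chosen so that the ``fits'' relation, and hence the meaning of $\repset{\R-\betak}$, is unchanged from $\cM_2$ to $\cM^A$. The minimality argument in the first fact, the transitivity of representation, and the time bookkeeping are then routine.
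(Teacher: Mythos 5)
Your proposal matches the paper's proof essentially verbatim: the same construction (flow-augmentation family $\cA$, per-$A$ matroid $\cM^A$ with zero vectors on $A$, invocation of \Cref{lem:directedvtc}, union, final representative-set computation) and the same two-directional correctness argument via \Cref{obs:augmentation} and \Cref{claim:flowdec}. If anything, you are slightly more careful than the paper on two routine points it leaves implicit — deducing minimality of $Z'$ in $G_2$ from the standing no-smaller-cut hypothesis, and the win--win bound absorbing the $(\log n)^{\Oh(\betak^3)}$ factor — so no changes are needed.
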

\end{sloppypar}

\begin{proof}
	Consider any $Z \in \cF_2(s, t, \betak)$, i.e., $Z \subseteq E(G_2)$ is a minimal independent edge \stc of size $\betak$. From \Cref{obs:augmentation}, it follows that $Z \in \cF_2^A(s, t, \betak)$, where $A = A_{\cal A}(Z)$. Now consider any $Y \subseteq U(\cM_2) \subseteq U(\cM_3)$ such that $Y$ fits $Z$. Then, there exists some $Z' \in \cF^{'A}_2(s, t, \betak)$ such that $Y$ fits $Z'$. Clearly, $Z' \subseteq E(G_2)$, since $Z'$ is independent in $\cM^A_2$, and $Z'$ is of size $\betak$. Further, $Z'$ is a minimum edge \stc in $G^A_2$ of size $\betak$. Then, by \Cref{claim:flowdec}, it follows that $Z' \in \cF_2(s, t, \betak)$, which shows the lemma.
\end{proof}

\subparagraph{Phase IV. Directed Edge \stc $\Rightarrow$ Directed Vertex \stc.} We are given an instance $(G_3, s, t, \cM_3, Q_3, \betak)$ of {\sc Generalized Independent Directed Edge \stc}. We show how to reduce it to an instance of {\sc Generalized Independent Directed Vertex \stc}. Given a directed graph $G_3$, define a directed graph $G_4$, and set $Q_4 \subseteq V(G_4)$ as follows. 
  \begin{itemize}
      \item for each  vertex $ u $ in $V(G_3)$, we add  $ \betak+1 $ vertices  $ u_1, u_2, \ldots, u_{\betak+1}$.  We call them vertices in $G_4$ corresponding to $V(G_3)$, or \emph{v-vertices}.
      
      \item For each arc $ (u,v) \in A(G_3)$, we add a vertex $ e_{uv} $ in $V(G_4)$, and arcs $ (u_i, e_{uv}) $ and $ (e_{uv}, v_i) $  to $A(G_4)$ for every $i \in [\betak+1]$. We call them vertices in $G_4$ corresponding to $A(G_3)$, or \emph{e-vertices}.

      \item The set $Q_4$ is the e-vertices  corresponding to the set $Q_3 \subseteq A(G_3)$.
      \end{itemize}

       \begin{figure}[h]
	\centering
	\includegraphics[scale=0.5]{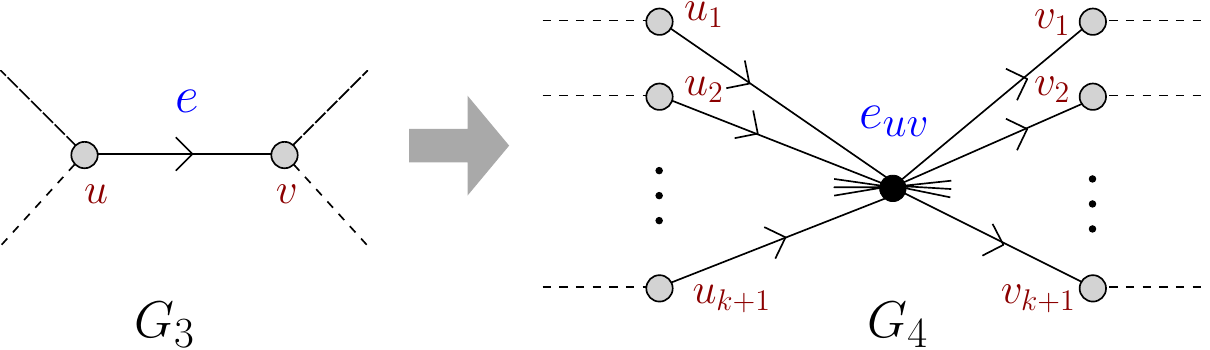}
	\caption{Illustration of Phase IV.}
	\label{fig:phaseiv}
\end{figure}

\subparagraph{Matroid Transformation.} 
Here we construct a matroid called $\cM_4$ for the instance $(G_4,s,t,\cM_4,\betak)$ using a similar approach to what we did in phase II. The ground set of $\cM_4$ is equal to $V(G_4)$. Since $s, t \in V(G_4)$ cannot be part of any solution, we prevent them from being part of any independent set. Any vertex subset that contains a v-vertex is deemed to be dependent. Further, using the natural bijection between an edge in $A(G_3)$ and the corresponding e-vertex in $V(G_4)$, we obtain a bijection between independent edge-subsets in $\cM_3$, and make the corresponding e-vertex-subsets independent in $\cM_4$. It is straightforward to obtain a representation of $\cM_4$, given a representation of $\cM_3$ -- we use the same vectors for e-vertices, whereas each v-vertex is assigned a zero vector.


Let $\cF_3(s, t, \betak) \coloneqq \LR{S \subseteq E(G_3): S \text{ is a minimal edge \stc in $G_3$}, |S| = \betak, S \in \cI(\cM_3)}$, and $\cF_4(s, t, \betak) \coloneqq \LR{S \subseteq V(G_4): S \text{ is a minimal edge \stc in $G_4$}, |S| = \betak, S \in \cI(\cM_4)}$. We prove the following claim.

\begin{claim}\label{claim:vst-est2}
    There is a natural bijection $f_{34}: \cF_3(s, t, \betak) \to \cF_4(s, t, \betak)$ that can be computed in polynomial time. In particular, $\cF_3(s, t, \betak) \neq \emptyset \iff \cF_4(s, t, \betak) \neq \emptyset$.
\end{claim}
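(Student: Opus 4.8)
The plan is to mirror the proof of \Cref{claim:vst-est}, using the natural correspondence between arcs of $G_3$ and e-vertices of $G_4$. Define $f_{34}(S)\coloneqq\LR{e_{uv}:(u,v)\in S}$ for $S\in\cF_3(s,t,\betak)$; I would then show $f_{34}$ is a polynomial-time-computable bijection onto $\cF_4(s,t,\betak)$ by verifying that it preserves the cut property, the cardinality, minimality, and independence, and that its inverse (a set of e-vertices back to the underlying arc set) is well-defined on all of $\cF_4(s,t,\betak)$.

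First I would settle the cut correspondence. Every arc of $G_4$ joins a v-vertex to an e-vertex, so any $s$-$t$ path in $G_4$ alternates between the two kinds; moreover an e-vertex $e_{uv}$ has in-neighbours only among the copies of $u$ and out-neighbours only among the copies of $v$. Hence the e-vertices $e_{u_1v_1},\dots,e_{u_mv_m}$ appearing along such a path satisfy $u_1=s$, $v_m=t$ and $v_k=u_{k+1}$, i.e.\ they spell an $s$-$t$ walk of $G_3$ using exactly the arcs $(u_k,v_k)$. Therefore, if $S$ is an edge $(s,t)$-cut of $G_3$ then $E_S\coloneqq f_{34}(S)$ is a vertex $(s,t)$-cut of $G_4$; conversely, if $G_3-S$ contains an $s$-$t$ path $s=w_0\to\cdots\to w_p=t$, then $(w_0)_1\to e_{w_0w_1}\to(w_1)_1\to\cdots\to(w_p)_1$ avoids $E_S$ in $G_4$. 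Since $S\mapsto E_S$ is inclusion- and cardinality-preserving, and every vertex cut of $G_4$ contained in some $E_S$ consists only of e-vertices, it carries minimal edge $(s,t)$-cuts of $G_3$ of size $\betak$ to minimal vertex $(s,t)$-cuts of $G_4$ of size $\betak$ and back (using, for the ``back'' direction, that every element of $\cF_4(s,t,\betak)$ consists of e-vertices, established next).

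Next I would handle the matroid side, which is immediate from the construction of $\cM_4$: an e-vertex carries the same vector as its arc in $\cM_3$, while every v-vertex carries the zero vector. Thus the vector family of $E_S$ in $\cM_4$ equals that of $S$ in $\cM_3$, so $S\in\cI(\cM_3)\iff E_S\in\cI(\cM_4)$; and any vertex set meeting the v-vertices is dependent in $\cM_4$, hence every $Z\in\cF_4(s,t,\betak)$ is made up solely of e-vertices and equals $E_S$ for a unique $S\subseteq A(G_3)$. (The same conclusion also follows combinatorially from the $\betak+1$ copies: a minimal vertex cut of size at most $\betak$ cannot contain a v-vertex, since one of the $\ge 1$ uncut copies of that vertex lets us reroute around it.) Combining this with the cut correspondence shows $f_{34}$ is a bijection from $\cF_3(s,t,\betak)$ onto $\cF_4(s,t,\betak)$ with polynomial-time inverse, and in particular $\cF_3(s,t,\betak)\neq\emptyset\iff\cF_4(s,t,\betak)\neq\emptyset$.

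There is no genuine obstacle here; the only point needing care, which I would state explicitly, is that a priori a minimal vertex $(s,t)$-cut of $G_4$ could contain v-vertices, and it is precisely the matroid constraint (or, redundantly, the $\betak+1$-copies gadget) that excludes this and makes the arc$\leftrightarrow$e-vertex map a true bijection between $\cF_3$ and $\cF_4$.
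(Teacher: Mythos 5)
Your proof is correct and follows essentially the same route as the paper: the arc$\leftrightarrow$e-vertex correspondence, the path-alternation argument for the cut property, and the zero-vector/matroid argument (with the $\betak+1$-copies rerouting as backup) to exclude v-vertices from any member of $\cF_4(s,t,\betak)$. The only cosmetic difference is that the paper's backward direction first strips v-vertices via the copies argument and then invokes independence, whereas you lead with the matroid constraint and treat the copies argument as redundant; both orderings are fine.
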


\begin{claimproof}
       In the forward direction, assume that $Z \subseteq A(G_3)$ is a solution of size at most $\betak$.  We show that the vertices in $G_4$ corresponding to the arcs in $Z$ in $G_3$, more specifically, the set of vertices $\{e_{uv} \colon (u,v) \in Z\}$ (denoted by $V_Z$)  is a solution of size at most $\betak$ in $G_4$. On the contrary, assume that there is a path $P$ from $s$ to $t$ in $G_4-V_Z$. Now, every arc on the path $P$, either of the form $(u_i, e_{uv})$ or $(e_{uv}, v_j)$ for some $u,v \in V(G_3)$ and $i,j \in [\betak+1]$. Now if the vertex $e_{uv}$ is present in the path $P$, that means $(u,v) \notin Z$. Now we get a path from $s$ to $t$ in $G_3-Z$ using the arcs $\{e_{uv} \colon e_{uv} \in V(P)\}$, which is a contradiction.    In the backward direction, the proof is similar to the forward direction.Aassume that $S \subseteq V(G_4)$ is a solution for $G_4$. Let there be a vertex $u_i$ for some $i \in [\betak+1]$ such that $u_i \in S$. We claim that $S \setminus u_i$ remains a solution for $G_4$. Note that since there are $\betak+1$ many vertices (copy) for each vertex $u \in G_3$, we must have $\{u_j \colon j \in [\betak+1]\} \setminus Z \neq \emptyset$. Let $u_j \in \{u_j \colon j \in [\betak+1]\} \setminus Z$. On the contrary, if $S \setminus u_i$ is not a solution, then there is a path from $s$ to $t$ in $G_4 - (S \setminus u_i)$. And that path must pass through the vertex $u_i$. Now, from this we can find a path from $s$ to $t$ in $G_4 -S$ just bypassing the path replacing $u_i$ by $u_j$ (since both vertices have the same neighbor sets). A contradiction. So $S \setminus u_i$ remains a solution for $G_4$. Repeating this argument, we can obtain that the set of all vertices in $S$ that correspond to some arc in $G_3$ (denoted by $S'$) remains a solution for $G_4$.  Now we show that the arcs in $G_3$ corresponding to the vertices in $S'$ in $G_4$, more specifically, the set of arcs $\{(u,v)\colon e_{uv} \in S'\}$ (denoted by $A_{S'}$)  is a solution of size at most $k$ in $G_3$. 

    We now demonstrate that the solution maintains the independence property when transitioning from $G_1$ to $G_2$, and similarly in the reverse direction. Since the vectors for all the edges in $G_3$ remain the same for all the vertices in $G_4$ that correspond to those edges, any independent set in $\M_3$ will still be an independent set in $\M_4$. On the contrary, since no independent set in a matroid contains a zero vector, any independent set in $G_4$ cannot include the vertex in $G_4$ that corresponds to a vertex in $V(G_3)$. Therefore, any independent set in $\M_4$ is also an independent set in $\M_3$. This completes the proof.    
\end{claimproof}

\begin{sloppypar}
	\subparagraph{Algorithm.} We take the instance $(G_3, s, t, \cM_3, Q_3,\betak)$ of {\sc Generalized Independent Directed Edge \stc} and reduce it to an instance $(G_4, s, t, \cM_4, Q_4, \betak)$ of {\sc Generalized Independent Directed Vertex \stc}. Then, we use \Cref{theo:dstcut} and compute $\cF'_4(s, t, \betak) \repset{\R-\betak} \cF_4(s, t, \betak)$ in time $2^{\Oh(r)} \cdot n^{\Oh(1)}$, such that $|\cF'_4(s, t, \betak)|$. Then, by using the bijection-inverse $f_{34}^{-1}$ from \Cref{claim:vst-est2} we obtain $\cF'_3(s, t, \betak)$. We observe, due to the properties of the bijection $f_{34}$, it holds that $\cF'_3(s, t, \betak) \repset{\R-\betak} \cF_3(s, t, \betak)$. We summarize this in the following lemma.
\end{sloppypar}

\begin{sloppypar}
    \begin{lemma} \label{lem:directedvtc}
	There exists a deterministic algorithm that takes input an instance $(G_3, s,t, \cM_3, Q_3,  \betak)$, where $E(G_3) \subseteq U(\cM_3)$ and $r = \rank(\cM_3)$, and computes $\cF'_3(s, t, \betak) \repset{\R-\betak} \cF_3(s, t, \betak)$ of size at most $2^r$ in time $2^{\Oh(r)} \cdot n^{\Oh(1)}$. 
\end{lemma}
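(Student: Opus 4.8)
The plan is to complete the chain of reductions: turn the given directed edge-cut instance into a directed \emph{vertex}-cut instance via the Phase~IV construction, solve that instance with the flow-path dynamic program for {\sc Generalized Independent Directed Vertex \stc}, and transfer the resulting family back through the bijection of \Cref{claim:vst-est2}.

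First I would build $(G_4, s, t, \cM_4, Q_4, \betak)$ from the input exactly as in Phase~IV: each vertex of $G_3$ is blown up into $\betak+1$ undeletable copies, each arc $(u,v)$ is subdivided by an e-vertex $e_{uv}$ that inherits verbatim the representation vector of the edge $(u,v)$ in $\cM_3$, and every v-vertex is assigned the zero vector. By \Cref{claim:vst-est2}, $f_{34}$ is then a polynomial-time computable bijection $\cF_3(s,t,\betak)\to\cF_4(s,t,\betak)$ that preserves independence, because corresponding solutions carry literally the same vectors; this correspondence extends in a vector-preserving way to all non-loop elements of $\cM_3$, and in particular $\rank(\cM_4)=\rank(\cM_3)=r$. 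Moreover, in $G_4$ every minimum vertex $(s,t)$-cut consists only of e-vertices, since a v-vertex is guarded by $\betak+1$ parallel copies; hence the minimum vertex-cut size of $G_4$ equals the minimum edge-cut size of $G_3$, which equals $\betak$ because $G_3$ is the output of the flow-augmentation phase. Thus the precondition the flow-path dynamic program needs --- that $\betak$ equal the minimum cut size, so that Menger's theorem applies --- is met.

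Next I would run that dynamic program (\Cref{theo:dstcut}) on $(G_4, s, t, \cM_4, Q_4, \betak)$, obtaining $\cF'_4(s,t,\betak)\repset{\R-\betak}\cF_4(s,t,\betak)$ of size at most $2^{\R}$ in time $2^{\Oh(\R)}\cdot n^{\Oh(1)}$, and then set $\cF'_3(s,t,\betak)\coloneqq f_{34}^{-1}(\cF'_4(s,t,\betak))$, which is polynomial-time computable and of size at most $2^{\R}$. To see that $\cF'_3\repset{\R-\betak}\cF_3$, let $Y$ be any set of size $\R-\betak$ for which some $Z\in\cF_3$ fits $Y$. Since $Z\cup Y$ is independent, $Y$ contains no loop, so $Y$ maps under the vector-preserving correspondence above to a set $Y'\subseteq V(G_4)$, and $f_{34}(Z)$ fits $Y'$ in $\cM_4$. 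Hence some $W'\in\cF'_4$ fits $Y'$; being independent, $W'$ contains no v-vertex, so $W'=f_{34}(W)$ for a unique $W\in\cF'_3$, and since the correspondence preserves vectors, $W$ fits $Y$ in $\cM_3$. Together with \Cref{obs:repsetprops}, this proves the lemma.

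The genuine difficulty does not reside in this lemma, which is essentially a wrapper, but in the subroutine it invokes --- extending the flow-path decomposition of Feige and Mahdian to directed graphs and running a representative-family dynamic program over it while respecting $\cM_4$ --- which is where the $2^{\Oh(\R)}\cdot n^{\Oh(1)}$ bound is genuinely earned. Within the present argument, the only delicate points are the two just flagged: verifying that the Phase~IV gadget is representation-preserving, so that the representative family pulls back faithfully (in particular that zero-vector elements are handled consistently on both sides), and confirming that $\betak$ stays equal to the minimum-cut size in $G_4$, which is precisely what licenses invoking the flow-path dynamic program.
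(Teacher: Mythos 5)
Your proposal is correct and follows essentially the same route as the paper: apply the Phase~IV gadget to obtain $(G_4,s,t,\cM_4,Q_4,\betak)$, invoke \Cref{theo:dstcut} to get $\cF'_4\repset{\R-\betak}\cF_4$, and pull the family back through $f_{34}^{-1}$ using \Cref{claim:vst-est2}. Your write-up is in fact more careful than the paper's brief "Algorithm" paragraph, since you explicitly verify that fitting sets transfer under the vector-preserving correspondence and that the minimum cut size of $G_4$ remains $\betak$.
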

\end{sloppypar}

\subsection{Algorithm For Directed Vertex \stc}
  
Assume that we are given a directed graph $ G=(V,E) $, two specified vertices $ s $ and $ t $, a vertex set $Q \subseteq V(G)$ containing $s, t$,  a matroid $\cM = (U, \cI)$ such that $V \setminus Q \subseteq U$. In this section, path always refers to a directed path unless mentioned otherwise. We say that a vertex set $S \subseteq V(G) \setminus \LR{s, t}$ is a (directed) \stc in $G$ if $G-S$ does not contain any $s$ to $t$ path. Further, we say that an \stc $S \subseteq V(G) \setminus \LR{s, t}$ is an independent \stc if $S \in \cI$.

Similar to the previous sections, for $0 \le \betak \le \R$, we define $$\cF(s, t, \betak) \coloneqq \LR{X \subseteq V(G) \setminus \LR{s, t}: X \text{ is an independent \stc of size $k$}}.$$

Here is a formal description of the problem we consider. 
  
  	\begin{tcolorbox}[enhanced,title={\color{black} \sc{Generalized Independent Directed Vertex \stc} (GIDVC)}, colback=white, boxrule=0.4pt,
  	attach boxed title to top center={xshift=-.9cm, yshift*=-3mm},
  	boxed title style={size=small,frame hidden,colback=white}]
  	
  	\textbf{Input:}  A directed graph $G$, a linear matroid $ \M= (U,\I) $,  a  set $Q \subseteq V(G)$  containing \hspace*{1.3cm} $ s,t$,  such that $V(G) \setminus Q \subseteq U$, and two non-negative integers  $\betak, q$ such  that \hspace*{1.3cm} $\betak+q \le \R = \rank(\cM)$, and $k$ is the size of minimum vertex \stc. \\
  	\textbf{Output:}   Return $\cF'(s, t, \betak) \repset{q} \cF(s, t, \betak)$ of bounded size.
  \end{tcolorbox}

   In this section, we adopt the approach of Feige and Mahdian~\cite{FeigeM06Separator} to our end. Many of the definitions are directly borrowed from the same (albeit for directed graphs). 






\begin{definition}[Critical and non-critical vertex]
{\em 	A vertex $ v $ of $ G $ is called {\em critical} if every collection of $\betak $ vertex-disjoint paths from $s$ to $t$ contains $ v $. A vertex	is {\em non-critical} if it is not critical. }
	\end{definition}

We now have the following fact, which directly follows from the above definition.

\begin{proposition} \label{prop:criticalprop}
	A vertex $ v $ is critical if and only if there is a directed vertex \stc of size  $\betak$ containing $ v $.
\end{proposition}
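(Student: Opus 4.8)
The plan is to deduce the equivalence from Menger's theorem in its internally-vertex-disjoint form: the maximum number of pairwise internally vertex-disjoint $s$-$t$ paths in a (directed) graph equals the minimum size of a vertex $(s,t)$-cut, and in $G$ this common value is exactly $\betak$ by hypothesis. I will tacitly assume $v \in V(G)\setminus\{s,t\}$, since $s$ and $t$ lie on every $s$-$t$ path and can never belong to a cut, so only internal vertices are of interest.

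For the direction ``$v$ lies in a size-$\betak$ cut $\Rightarrow$ $v$ is critical'', I would fix a vertex $(s,t)$-cut $C$ with $|C|=\betak$ and $v\in C$, and let $\mathcal{P}=\{P_1,\dots,P_{\betak}\}$ be an arbitrary family of $\betak$ internally vertex-disjoint $s$-$t$ paths (which exists by Menger). Each $P_i$ meets $C$, and since the internal vertex sets of the $P_i$ are pairwise disjoint while $|C|=\betak$ equals the number of paths, a pigeonhole count forces $|P_i\cap C|=1$ for every $i$ and $\bigcup_i(P_i\cap C)=C$. In particular $v$ lies on one of the $P_i$, so $\mathcal{P}$ contains $v$; as $\mathcal{P}$ was arbitrary, $v$ is critical.

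For the converse, suppose $v$ is critical and consider $G-v$. If $G-v$ admitted $\betak$ internally vertex-disjoint $s$-$t$ paths, this family would avoid $v$, contradicting criticality; hence the maximum number of such paths in $G-v$ is at most $\betak-1$, so by Menger applied to $G-v$ there is a vertex $(s,t)$-cut $C'$ of $G-v$ with $|C'|\le \betak-1$. Then $C'\cup\{v\}$ is a vertex $(s,t)$-cut of $G$ of size at most $\betak$, hence of size exactly $\betak$ by minimality of $\betak$, and it contains $v$, as required.

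I do not expect a genuine obstacle here; the only points that need a little care are running Menger separately on $G$ and on $G-v$, making the pigeonhole step precise (the sets $P_i\cap C$ are nonempty and pairwise disjoint subsets of the $\betak$-element set $C$), and phrasing everything consistently for directed graphs and internally vertex-disjoint paths as fixed in this subsection; degenerate situations (e.g.\ $st\in A(G)$, or $\betak\le 1$) are absorbed into the Menger step.
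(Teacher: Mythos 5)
Your proof is correct: both directions are the standard Menger-plus-pigeonhole argument, using that the minimum directed vertex $(s,t)$-cut has size exactly $\betak$ so that $\betak$ internally vertex-disjoint paths exist and any size-$\betak$ cut meets each of them exactly once. The paper states this proposition without proof (asserting it "directly follows from the definition", following Feige--Mahdian), and your argument is precisely the justification it leaves implicit.
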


\begin{definition}[Connecting pairs]
{\em For a pair of vertices $ u $ and $ v $ in $ G $, we say that $ u $ is connected to $ v $ in $ G $ if there is a directed path from  $ u $ to  $ v $  such that  every internal vertex of the path is non-critical.}
\end{definition}

 Fix a collection of $\betak$ vertex-disjoint paths $ P_1 , P_2 , \ldots, P_{\betak}$	from $ s $ to $ t $. By definition, each critical vertex must be on one of these paths. In addition, \cref{prop:criticalprop} implies the following.

 \begin{observation}\label{obs:criticalsoln}
 For every $S \in \cF(s, t, k)$, all $v \in S$ are critical vertices. 
 Furthermore, distinct vertices of $S$ belong to distinct paths from $\LR{P_1, P_2, \ldots, P_\betak}$. 
\end{observation}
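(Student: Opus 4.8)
The plan is to derive both assertions from \Cref{prop:criticalprop} together with the hypothesis that $\betak$ is the size of a minimum vertex $(s,t)$-cut and that $S$ is an $(s,t)$-cut of size exactly $\betak$. For the first assertion, fix $S \in \cF(s, t, \betak)$ and an arbitrary $v \in S$. Since $S$ is itself a vertex $(s,t)$-cut of size $\betak$ that contains $v$, \Cref{prop:criticalprop} immediately yields that $v$ is critical. As $v \in S$ was arbitrary, every vertex of $S$ is critical.

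For the second assertion I would exploit the definition of criticality directly. Let $P_1, P_2, \ldots, P_\betak$ be the fixed collection of $\betak$ pairwise vertex-disjoint paths from $s$ to $t$. Because each $v \in S$ is critical, the definition forces $v$ to appear on some $P_i$; since the $P_i$ are vertex-disjoint, this $P_i$ is unique, so we obtain a well-defined map $\phi \colon S \to \LR{P_1, \ldots, P_\betak}$ assigning to each $v \in S$ the unique path through $v$. It remains to show that $\phi$ is injective. For this I would use that $S$ is an $(s,t)$-cut: each $P_i$ is a path from $s$ to $t$, so if $S$ were disjoint from $V(P_i)$ then $P_i$ would survive in $G - S$, a contradiction; hence $S \cap V(P_i) \neq \emptyset$ for every $i$, i.e., $\phi$ is surjective. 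A surjection between finite sets of equal cardinality is a bijection, and $|S| = \betak = |\LR{P_1, \ldots, P_\betak}|$, so $\phi$ is in fact a bijection; in particular it is injective, which is exactly the claim that distinct vertices of $S$ belong to distinct paths.

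I do not expect a genuine obstacle here: the observation is essentially a bookkeeping consequence of the Menger-type fact already encapsulated in \Cref{prop:criticalprop}. The one point that deserves a moment's care is the use of criticality (rather than merely the cut property) to conclude that \emph{every} vertex of $S$ lies on one of the $P_i$; this is what makes the pigeonhole count tight. Alternatively, one could bypass criticality in the second part and argue purely from the cut property together with $|S| = \betak$: $S$ meets each of the $\betak$ vertex-disjoint paths $P_i$, hence it meets each of them in exactly one vertex and $S \subseteq \bigcup_{i} V(P_i)$, which yields the same conclusion.
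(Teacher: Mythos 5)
Your proof is correct and matches the paper's intent: the paper states this observation as an immediate consequence of \Cref{prop:criticalprop} without writing out details, and your argument (criticality of each $v\in S$ via the proposition, then the counting/pigeonhole argument using that the $(s,t)$-cut $S$ of size $\betak$ must meet each of the $\betak$ vertex-disjoint paths) is exactly the intended justification.
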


 For each $ P_i $ , let $v_{i,1} , \ldots, v_{i,z_i}$ be the sequence of critical vertices of $ P_i $ in the order in which they appear on this path from $ s $ to $ t $. Here, $ z_i $ denotes the number of critical vertices in $ P_i $.  To simplify notation, we define $ v_{i,0} = s $ and $ v_{i,z_i +1} = t $ for every $ i $, and treat the vertices $ s $ and $ t $ as critical. Note that in the problem definition, we are not allowed to be chosen as solution vertices (i.e., cut vertices).  Let $ \Omega $ be the set of all $\betak$-tuples $ \boldsymbol{a} = (a_1, \ldots, a_{\betak}) $ where $ 1 \leq a_i \leq z_{i} $ for every $ i $. In other words, each $ \boldsymbol{a} \in \Omega $ corresponds to one way of selecting a critical vertex from each $ P_i $. Note that this does not have to correspond to an \stc in $ G $, since there are edges and vertices in $ G $ that are not in the paths $ P_i $’s.

\begin{definition}[Prefix subgraph]
{ \em	The prefix subgraph $  G[\boldsymbol{a}] $ defined by $ a \in \Omega$	is an induced subgraph of $ G $ with the vertex set defined as follows:
	
	\begin{itemize}
		\item[$ \bullet $] 	a critical vertex $ v_{ i,j} $ is in $  G[\boldsymbol{a}] $ if and only if $ j \leq a_i$ ;
	    \item[$ \bullet $] a non-critical
	    vertex $u $ is in $  G[\boldsymbol{a}] $ if and only if all critical vertices that are connected to $ u $ are in $  G[\boldsymbol{a}] $. 			
	\end{itemize}
  The last two layers of $  G[\boldsymbol{a}] $ are the set of critical vertices $ v_{ i,j} $ such that $ a_{i- 1} \leq j \leq a_i$ .}
\end{definition}

\begin{lemma}[Decomposition lemma, \cite{FeigeM06Separator}]\label{lem:decom}
	There is a sequence $ \boldsymbol{a}^1 , \ldots, \boldsymbol{a}^p \in \Omega$  such that
	\begin{enumerate}
		\item[(a)] $ \boldsymbol{a}^1 = (1, \ldots , 1) $ and $ \boldsymbol{a}^p = (z_1 , \ldots, z_{\betak}) $;
		
		\item[(b)] for every $ h = 2,\ldots, p $, $ \boldsymbol{a}^h - \boldsymbol{a}^{h-1}$ is a vector with exactly one entry equal to one, and zero elsewhere; and
		
		\item[(c)]  for every $ h = 1, \ldots , p $, every critical vertex of $  G[\boldsymbol{a}^h]$, except possibly the vertices in the last two layers of $  G[\boldsymbol{a}^h]$, is not connected to any critical vertex not in $  G[\boldsymbol{a}^h]$.
  	\end{enumerate}
	\end{lemma}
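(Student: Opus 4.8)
The plan is to construct the sequence $\boldsymbol{a}^1, \ldots, \boldsymbol{a}^p$ greedily, walking from $(1,\ldots,1)$ to $(z_1,\ldots,z_{\betak})$ by incrementing one coordinate at a time, always choosing \emph{which} coordinate to increment so that the invariant (c) is preserved. First I would set $\boldsymbol{a}^1 = (1,\ldots,1)$ and observe that (c) holds trivially for it: the prefix subgraph $G[\boldsymbol{a}^1]$ contains (among critical vertices) only $s$ and the first critical vertex $v_{i,1}$ on each path, and these all lie in the first two layers, so there is nothing to check. The terminating condition is $\boldsymbol{a}^p = (z_1,\ldots,z_{\betak})$, for which $G[\boldsymbol{a}^p] = G$ and (c) is again vacuous. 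The work is entirely in the inductive step: given $\boldsymbol{a}^{h-1}$ satisfying (c), I must produce $\boldsymbol{a}^h = \boldsymbol{a}^{h-1} + \mathbf{e}_i$ for some coordinate $i$ with $a^{h-1}_i < z_i$, such that $\boldsymbol{a}^h$ still satisfies (c).

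The key idea for choosing the coordinate is the following "frontier" argument, which I would lift verbatim (with directedness in mind) from Feige--Mahdian. Consider the critical vertices $v_{i,a^{h-1}_i + 1}$ that are "just outside" $G[\boldsymbol{a}^{h-1}]$ on each path $P_i$ that is not yet exhausted. I claim at least one such vertex $v_{i, a^{h-1}_i+1}$ has the property that \emph{every} critical vertex connected to it (i.e.\ reachable from it, or reaching it, via a path whose internal vertices are all non-critical) is already inside $G[\boldsymbol{a}^{h-1}]$ \emph{or} is itself of the form $v_{j, a^{h-1}_j+1}$ on some path — i.e.\ lies on the immediate frontier. If this holds, I increment that coordinate $i$; then in $G[\boldsymbol{a}^h]$ the only critical vertices that could violate (c) are those newly "responsible" for bringing in non-critical vertices, and one checks these all end up in the last two layers of $G[\boldsymbol{a}^h]$. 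To prove the claim, I would argue by contradiction / by a minimality or topological-order argument: order the frontier vertices by the topological order of $G$ restricted to critical vertices (this order exists because $G$ is a DAG on the flow-paths once we... — actually more carefully, use that the paths $P_1,\ldots,P_\betak$ give a partial order and "connected to" is consistent with reachability), and take the frontier vertex that is minimal in this order; any critical vertex connected \emph{to} it must precede it, hence is either inside $G[\boldsymbol{a}^{h-1}]$ already or is an earlier frontier vertex — contradiction with minimality unless it is inside. The symmetric direction (critical vertices connected \emph{from} it) is handled by the definition of the prefix subgraph, since a non-critical vertex enters $G[\boldsymbol{a}]$ only once \emph{all} critical vertices connected to it are in, so pulling in $v_{i,a^{h-1}_i+1}$ cannot force in any critical vertex lying strictly beyond the frontier.

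Finally I would verify the bookkeeping: property (b) holds by construction since each step changes exactly one coordinate by $+1$; the total number of steps is $p - 1 = \sum_i (z_i - 1)$, which is at most $|V(G)|$, so the sequence is finite and of polynomial length; and property (a) holds by the choice of endpoints. The main obstacle I expect is making the "frontier minimality" argument fully rigorous in the \emph{directed} setting — in particular, confirming that the relation "$u$ is connected to $v$" (directed path with non-critical interior) is compatible with a single linear order on critical vertices so that "take the minimal frontier element" is well-defined, and checking carefully that after incrementing, the potentially-offending critical vertices are confined to the last two layers of the new prefix subgraph rather than deeper inside. Since this is exactly the lemma cited as \cite{FeigeM06Separator}, I would present the directed adaptation as a careful re-run of their proof, flagging only the places where arc-orientation (rather than edge) matters — primarily in the definition of "connected to" and hence of the prefix subgraph, both of which have already been set up directionally in the preceding definitions.
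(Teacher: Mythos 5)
Your overall skeleton matches the paper's proof: both build the sequence greedily, incrementing one coordinate at a time and arguing that some coordinate can always be incremented without violating invariant~(c). However, your justification of that key claim has a genuine gap. You propose to ``order the frontier vertices by the topological order of $G$ restricted to critical vertices'' and take the minimal one, but no such order exists a priori: $G$ here arises from an undirected graph by doubling every edge into two opposite arcs, so it is very far from a DAG, and the ``connected to'' relation (a directed path with non-critical interior) can run backwards relative to positions on the flow paths --- e.g.\ $v_{i,5}$ may be connected to $v_{j,2}$. Establishing that the relevant relation on the frontier has no cycle is precisely the hard content of the lemma, so invoking a topological order begs the question. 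You yourself flag this as ``the main obstacle,'' but flagging it is not resolving it.

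The paper closes this gap with a criticality-based contradiction that your proposal never supplies. Assuming every candidate coordinate fails, one builds an auxiliary digraph $H$ on $[\betak]$ with an arc $i\to j$ whenever the relevant vertex on $P_i$ is connected to a critical vertex beyond the current prefix on $P_j$; every vertex of $H$ then has outdegree at least one, so $H$ contains a cycle. Taking a shortest cycle and using that one of its vertices, say $v_{i_0,\boldsymbol{a}_{i_0}^{h-1}}$, is critical, there is a vertex $(s,t)$-cut $S$ of size $\betak$ containing it; coloring vertices \emph{silver} (reachable from $s$ in $G-S$) and \emph{tan} (reaching $t$ in $G-S$) and propagating colors around the cycle along the non-critical connecting paths forces a vertex to be simultaneously silver and tan, contradicting that $S$ is a cut. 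This is the step your proof is missing, and without it (or an equivalent argument) the induction does not go through.
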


 \begin{proof}
	
	We construct the sequence inductively. It is clear that $ G[\boldsymbol{a}^1] $
satisfies condition (c) above. Assume that we have constructed the sequence up to $ \boldsymbol{a}^{h-1} $. We show that there is some $ \boldsymbol{a}^h $ that satisfies the conditions of the lemma. To this end, we show that there is an $i \in [k] $ such that $ v_{i, \boldsymbol{a}_i^{h-1}-1} $ is not connected to any critical vertex outside $ G[\boldsymbol{a}^{h-1}] $. Assume, for contradiction, that such an $ i $ does not exist. This means that for each $ i $, there is a critical vertex $ v_{j,\ell} $ outside $ G[\boldsymbol{a}^{h-1}] $ (that is, with $ \ell > \boldsymbol{a}_j^{h-1} $) such that $ v_{i, \boldsymbol{a}_i^{h-1}-1} $ is connected to $ v_{j,\ell} $.

Now, we construct an auxiliary directed graph $ H $ with vertex set $ [\betak] $. For each $ i, j \in  [\betak] $, there is a directed edge
	 from $ i $ to $ j $ in $ H $ if $ v_{i, \boldsymbol{a}_i^{h-1}-1} $
	 is connected to a critical vertex $ v_{j,\ell} $	 on $ P_j $ with $ \ell>  \boldsymbol{a}_j^{h-1}$. By our assumption, every vertex in $ H $ has outdegree at least one, and therefore $ H $ has a cycle. Consider the shortest cycle $ C = i_0, \ldots, i_{f-1}$ in $ H $. Each edge $ i_bi_{(b+1) \mod f}  $ of this cycle corresponds to a path $ Q_b $ from $ v_{i_b, \boldsymbol{a}_{i_b}^{h-1}-1} $ to $ v_{i_{b+1}, \ell_{b+1}} $
	 for some $ \ell_{b+1} > \boldsymbol{a}_{i_{b+1}}^{h-1} $, such that all internal vertices of this path are non-critical.  We show that the existence of this cycle contradicts the fact that $ v_{i_0, \boldsymbol{a}_{i_0}^{h-1}} $
	 is a critical vertex.	 As $ v_{i_0, \boldsymbol{a}_{i_0}^{h-1}} $ is a critical vertex, there exists a vertex $st$-cut $ S $ of size $\betak$ that contains the vertex $v_{i_0, \boldsymbol{a}_{i_0}^{h-1}}$. The removal of $ S $ divides the graph into several connected components. We call the vertices that can be reached from $ s $ on the graph $ G-S $ {\em silver} vertices, and the vertices from which we can reach $ t $ in $ G-S $ {\em tan} vertices. Since $ S $ contains exactly one vertex from each $ P_i $ and $ \ell_0 > \boldsymbol{a}_{i_0}^{h-1}$ the vertex $ v_{i_0, \ell_0} $ must be a tan vertex. This vertex is connected by the path $ Q_{f-1} $ from the vertex $ $ $v_{i_{f-1}, \boldsymbol{a}^{h-1}_{i_{f-1}}-1} $. Since all the vertices of $ Q_{f-1} $ are non-critical and hence not in $ S $, the vertex $v_{i_{f-1}, \boldsymbol{a}^{h-1}_{i_{f-1}}-1} $ is  either	 tan or in $ S $. Therefore, the vertex $ v_{i_{f-1}, \ell_{f-1}} $ must be tan.  Similarly, we can argue that the vertices $ v_{i_{f-2}, \ell_{f-2}}, \ldots, v_{i_{1}, \ell_{1}} $	 are all	 tan. However, $ v_{i_{1}, \ell_{1}}$ is connected by the path $ Q_0 $ from $ v_{i_0, \boldsymbol{a}_{i_0}^{h-1}-1} $, and	the latter vertex must be silver, since the only vertex on $ P_{i_0} $	 that is in $ S $ is $ v_{i_0, \boldsymbol{a}_{i_0}^{h-1}} $. This gives us the desired contradiction.
	  				 
	   The above argument shows that there is a vertex $ v_{i, \boldsymbol{a}_{i}^{h-1}-1} $	 that is not connected to any critical vertex outside $ G[\boldsymbol{a}^{h-1}] $. Such a vertex can be found efficiently by trying all possibilities. Now, we simply let $ \boldsymbol{a}_i^h=a_i^{h-1}+1 $ and $ \boldsymbol{a}_j^h=\boldsymbol{a}_j^{h-1} $ for every $ j \neq i $. It is easy to see that this choice of $ \boldsymbol{a}^h $ satisfies the conditions of the lemma.
\end{proof}

\begin{remark}
	The generating sequence in the decomposition lemma maintains the invariant that at any step, every critical vertex outside the current subgraph is not connected from any critical vertex other than the ones in the last two layers of the current subgraph. In other words, the last two layers of the current prefix subgraph act as the {\em interface} of this subgraph to the rest of $ G $.
	\end{remark}

 Next, we define the notion of a valid coloring that essentially captures partial solutions that will be used to perform dynamic programming.

\begin{definition}[Valid coloring]\label{def:coloring}
{\em 	A valid coloring of a prefix subgraph $  G[\boldsymbol{a}] $ is 	a partial coloring of the vertices of $  G[\boldsymbol{a}] $ with colors \textbf{s}ilver $ (\sr) $, \textbf{t}an $ (\tr) $, and	\textbf{b}lack $ (\br) $ such that		
	\begin{itemize}
		\item[$ \bullet $] (black vertices) for each $ i $, there is at most one vertex of $  G[\boldsymbol{a}] $ on $ P_i $ that is		colored black; furthermore, $ s $ and $ t $ cannot be colored black;
		
		\item[$ \bullet $] (silver and tan critical vertices) for each $ i $ and $ j \leq a_i $ , if there is no $ j' \leq a_i $ such that $ v_{i,j'}$ is
		colored black, then $ v_{i,j}$ must be colored silver; if there is such		a $ j' $, then $  v_{i,j} $ must be colored silver if $ j < j' $ and tan if $ j > j'$ ;
		
	\item[$ \bullet $] No silver-colored critical vertex is connected to a tan-colored critical vertex; and 
	
	\item[$ \bullet $] (coloring for non-critical vertices) every non-critical vertex that is connected from at least one silver critical vertex is colored silver; every non-critical vertex that is connected to at least one tan critical vertex is colored tan;

	\item[$ \bullet $] (some vertex remains uncolored) every	non-critical vertex that is neither connected from any silver critical vertex  nor connected to any tan	critical vertex remains uncolored.
\end{itemize}
We use $\col$ to  denote the set of all possible valid colorings of $G$ that color vertex $t$ tan. We use $\cS$ to denote all minimum vertex $(s, t)$-cuts in $G$.
}

\end{definition}

\begin{remark}
	At a high level, the first condition in \Cref{def:coloring} says that, each path contains one black vertex (must be critical by \cref{obs:criticalsoln}). The second condition says, in each path the colors of critical vertices follow the pattern either ``silver $ <\cdots < $ silver $<$ black $<$  tan $<\cdots< $ tan''. The third condition in the above definition guarantees	that no non-critical vertex is connected  from  a silver critical vertex as well as a tan	critical vertex. The fourth and fifth conditions specifies a well-defined coloring for non-critical vertices. Notice that due to third condition, among two exactly at most one situation can occur for any non-critical vertex.
\end{remark}

  \begin{figure}[h]
	\centering
	\includegraphics[scale=0.5]{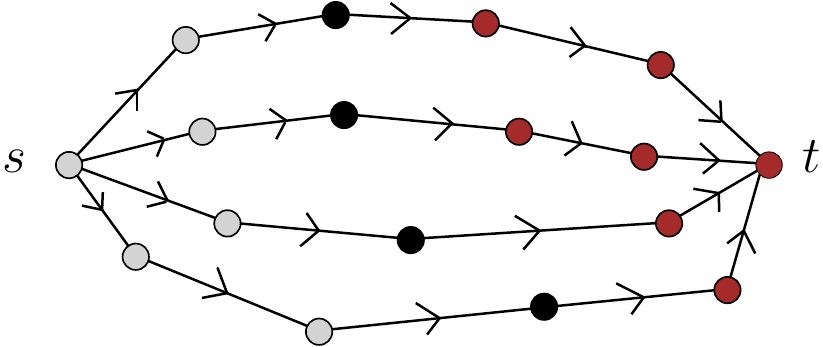}
	\caption{An illustration of  a valid coloring on critical vertices.}
	\label{fig:valid coloring}
\end{figure}

We start with following lemma.

\begin{lemma}\label{lemma:validcoloring}
    There is a bijection $f: \col \to \cS$, i.e. between the set of all valid colorings of $G$ that color vertex $t$ tan, and the set of all minimum vertex $(s, t)$-cuts in $G$.
\end{lemma}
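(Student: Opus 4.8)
The plan is to exhibit the bijection $f$ explicitly in both directions and verify that the two maps are mutually inverse. Given a valid coloring $c \in \col$, I would define $f(c)$ to be the set of black vertices of $c$. Conversely, given a minimum vertex $(s,t)$-cut $S \in \cS$, I would define a coloring $c_S$ by: color each vertex of $S$ black; color a vertex silver if it is reachable from $s$ in $G - S$; color a vertex tan if it can reach $t$ in $G - S$; and leave every other vertex uncolored. The bulk of the work is to check that both maps are well-defined, i.e.\ that $f(c)$ is always a minimum $(s,t)$-cut and that $c_S$ is always a valid coloring colouring $t$ tan.

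First I would argue that $f(c)$ is a minimum $(s,t)$-cut. By \Cref{obs:criticalsoln} and the first two conditions of \Cref{def:coloring}, there is exactly one black vertex on each of the $\betak$ vertex-disjoint paths $P_1,\dots,P_\betak$ (the black vertex on $P_i$ must exist since $v_{i,z_i+1}=t$ is coloured tan, forcing some earlier black vertex on $P_i$), so $|f(c)| = \betak$, which equals the size of a minimum cut. To see it is actually a cut: suppose there were an $s$–$t$ path $R$ in $G - f(c)$; walking along $R$ from $s$, the vertex $s$ is silver (it is a silver critical vertex by convention, being reachable from $s$), and by the propagation rules of \Cref{def:coloring} (silver critical vertices push silver onto connected non-critical vertices, and no silver vertex is connected to a tan vertex) every vertex of $R$ would have to be silver; but $t$ is tan, a contradiction once one checks that consecutive vertices on $R$ inherit the colour — which is where the "silver and tan critical" condition and the connectedness definitions must be combined carefully, using that any sub-path of $R$ with non-critical interior is a connecting path.

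Next I would argue $c_S$ is a valid coloring. Minimality of $S$ (which can be assumed since $S$ is a \emph{minimum} cut) gives that every vertex of $S$ lies on some $s$–$t$ path, hence is critical by \Cref{prop:criticalprop}; combined with \Cref{obs:criticalsoln}, $S$ picks exactly one vertex from each $P_i$, giving the "at most one black per path" condition and $s,t \notin S$. The silver/tan pattern along each $P_i$ follows because in $G - S$, vertices of $P_i$ before the cut vertex are reachable from $s$ and those after can reach $t$. The "no silver critical connected to tan critical" condition is exactly the statement that no $s$–$t$ path survives in $G - S$: a connecting path from a silver to a tan critical vertex has non-critical (hence non-$S$) internal vertices, so it would splice with the $s$-to-silver and tan-to-$t$ segments to give an $s$–$t$ path in $G - S$. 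The non-critical coloring conditions are immediate from the definitions of silver/tan, and one must check the third condition of \Cref{def:coloring} forces the dichotomy so no non-critical vertex gets both colours — again a consequence of $G-S$ having no $s$–$t$ path.

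Finally, I would verify $f(c_S) = S$ (immediate, since the black vertices of $c_S$ are exactly $S$) and $c_{f(c)} = c$. The latter requires showing that in a valid coloring $c$, a non-critical or critical vertex is coloured silver if and only if it is reachable from $s$ in $G - f(c)$, and tan iff it can reach $t$ — this follows by induction along connecting paths using the propagation conditions, together with the uniqueness of the colour pattern. The main obstacle I anticipate is precisely this last verification that the colouring conditions in \Cref{def:coloring} pin down \emph{exactly} the reachability colouring and nothing more (in particular that a valid coloring cannot "over-colour" or "under-colour" a vertex relative to $G - f(c)$); this is where the interplay between the "connecting pairs" relation, the restriction to non-critical internal vertices, and the silver/tan separation condition must be handled with care.
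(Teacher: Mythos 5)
Your proposal is correct and follows essentially the same route as the paper's proof: both directions are established by mapping a valid coloring to its set of black vertices and a minimum cut to the silver/tan/black coloring induced by reachability from $s$ and to $t$, with the ``no silver critical vertex connected to a tan critical vertex'' condition doing the work of ruling out surviving $s$--$t$ paths. If anything, you are more explicit than the paper about the need to verify that the two maps are mutually inverse (that a valid coloring is forced to equal the reachability coloring of its black set), a point the paper's proof leaves implicit.
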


\begin{proof}
    Let $\cS$ denote the set of all minimum vertex $(s,t)$-cuts in $G$ and $\col$ denote the set of all possible valid colorings of $G$ that color vertex $t$ tan. We show that there is a bijection $f \colon \col \rightarrow \cS$. 
    
    Consider a set $S \in \cS$. As $|S|=k$, this means that $S$ intersects exactly one vertex on every path in $\{P_i \colon i \in [k]\}$. Now we provide a coloring in the following way. We color $\sr$ for each vertex in $S$. For the remaining vertices, for each $i \in [k]$ if there is a vertex $v $ in the path $P_i$ such that $v$ is critical and $v$ lies on the subpath of $P_i$ from $s$ and the black vertex (which is the solution vertex), then we color $\sr$ to the critical vertex. Similarly,   for each $i \in [k]$ if there is a vertex $v $ on the path $P_i$ such that $v$ is critical and $v$ lies on the subpath of $P_i$ from $s$ and the black vertex (which is the solution vertex) and $t$ then we color $\tr$ to the critical vertex. We color the vertex $t$ to $\tr$. Now there is no non-critical vertex which is connected from a silver critical vertex as well as connected to a tan critical vertex, as in this case we get a $(s,t)$-path in $G-S$, a contradiction.  Now we can give a coloring to some non-critical vertices with the following way.  For  every non-critical vertex that is connected from at least one silver critical vertex we color that vertex $\sr$; every non-critical vertex that is connected to at least one tan critical vertex we color that vertex $\tr$. Now as per the \cref{def:coloring}, we have obtained a valid coloring. 

    Now consider a valid coloring $c$ with   $\tr$ color to the vertex $t$.  Let $S'$ be the set defined by all the black vertices in the coloring. Now we use the property of valid coloring. In the first property, we know that  each path contains at most one black vertex, that means $|S'| \leq k$. As the vertex $t$ obtained the color $\sr$ (as our consideration), and no silver-colored critical vertex is connected to a tan-colored critical vertex (third property), maintaining the the second property we have that each path contains exactly one black vertices, so $|S'|=k$. In a contrary assume that $S'$ is not a vertex $(s,t)$-cut then there is a path $P$ from $s$ to $t$ such that no vertex in $V(P)$ is colored black. So, every critical vertex on the path is colored either $\sr$ or $\tr$. As the vertex $s$ and $t$ gets the color $\sr$ and $\tr$, respectively, there must be a path from the $\sr$ colored critical vertex to the $\tr$ colored critical vertex, which is a contradiction, as it violates the third property of valid coloring. Hence, the proof.   
\end{proof}

\begin{theorem}\label{theo:dstcut}
 There exists a deterministic algorithm for {\sc Generalized Directed Independent Vertex \stc}  that runs in time $2^{\Oh(\R)}$ time, and returns $\cF'(s, t, k) \repset{r-k} \cF(s, t, k)$ of size at most $2^r$. Here, $\betak$ denotes the size of the minimum \stc  and $\R$ denotes the rank of the matroid ${\cal M}$.
\end{theorem}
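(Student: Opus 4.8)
The plan is to run a dynamic program along the sequence of prefix subgraphs $G[\boldsymbol{a}^1],\dots,G[\boldsymbol{a}^p]$ supplied by \Cref{lem:decom}, where the DP table is indexed by a ``boundary state'' — a partial valid coloring restricted to the last two layers of critical vertices — and, for each table cell, we store a representative family of the sets of \emph{black} vertices chosen so far. Since $\betak$ is the size of a minimum vertex $(s,t)$-cut, I would first compute in polynomial time a family $P_1,\dots,P_{\betak}$ of internally vertex-disjoint $s$–$t$ paths (Menger/max-flow), the critical vertices and their order on each $P_i$ (using \Cref{prop:criticalprop}: $v$ is critical iff the minimum $(s,t)$-cut of $G-v$ has size $\betak-1$), and the decomposition sequence $\boldsymbol{a}^1,\dots,\boldsymbol{a}^p$ of \Cref{lem:decom}; note $p\le n^{\Oh(1)}$ and that the last two layers of any $G[\boldsymbol{a}^h]$ contain only $\Oh(\betak)$ critical vertices.

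For a step $h$ and a coloring $\sigma$ of the $\Oh(\betak)$ critical vertices in the last two layers of $G[\boldsymbol{a}^h]$ by $\{\sr,\tr,\br\}$, let $\cP_h[\sigma]$ be the family of all sets $B$ of black vertices arising from a valid partial coloring of $G[\boldsymbol{a}^h]$ (the rules of \Cref{def:coloring} applied inside $G[\boldsymbol{a}^h]$) whose restriction to the last two layers equals $\sigma$ and with $B\in\cI(\cM)$. The key structural point is that, for a fixed $\sigma$, a path $P_i$ carries a black vertex among its critical vertices inside $G[\boldsymbol{a}^h]$ iff its topmost such vertex is colored $\br$ or $\tr$ in $\sigma$; hence every $B\in\cP_h[\sigma]$ has the same size $\ell_\sigma$, so $\cP_h[\sigma]$ is an $\ell_\sigma$-family and the representative-set machinery applies directly. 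The invariant maintained is $\cR_h[\sigma]\repset{\R-\ell_\sigma}\cP_h[\sigma]$ with $|\cR_h[\sigma]|\le\binom{\R}{\ell_\sigma}\le 2^{\R}$, enforced after each update via \Cref{prop:repset}.

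Passing from $G[\boldsymbol{a}^{h-1}]$ to $G[\boldsymbol{a}^h]$ introduces exactly one new critical vertex $v$ on the advanced path (together with possibly some new non-critical vertices, whose colors are forced by \Cref{def:coloring}). For every old state $\sigma'$ and every candidate color $\gamma\in\{\sr,\tr,\br\}$ of $v$, one checks in polynomial time whether the resulting partial coloring of $G[\boldsymbol{a}^h]$ is valid; the only non-local requirement, ``no silver critical vertex is connected to a tan critical vertex,'' is decidable from $\sigma'$ because \Cref{lem:decom}(c) guarantees that every critical vertex outside $G[\boldsymbol{a}^{h-1}]$ is reachable via non-critical internal vertices only from critical vertices in the last two layers, i.e.\ only from vertices recorded in $\sigma'$. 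When valid, we read off the new boundary coloring $\sigma$ and set $\cR_h[\sigma] \leftarrow \cR_h[\sigma]\cup\bigl(\cR_{h-1}[\sigma']\bullet\{\{v\}\}\bigr)$ if $\gamma=\br$, and $\cR_h[\sigma]\leftarrow\cR_h[\sigma]\cup\cR_{h-1}[\sigma']$ if $\gamma\in\{\sr,\tr\}$; once all contributions to a fixed $\sigma$ are collected, we shrink $\cR_h[\sigma]$ back to an $(\R-\ell_\sigma)$-representative family of size $\le 2^{\R}$ by \Cref{prop:repset}. Correctness is the standard bookkeeping: every valid coloring of $G[\boldsymbol{a}^h]$ restricts to a valid coloring of $G[\boldsymbol{a}^{h-1}]$, and by \Cref{obs:repsetprops} the operation $\bullet$ preserves representation and representation is transitive, so the updated-then-reduced $\cR_h[\sigma]$ still $(\R-\ell_\sigma)$-represents $\cP_h[\sigma]$. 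We only ever carry independent partial black-sets, which loses nothing: a subset of an independent set is independent and the black-set only grows, so a dependent partial solution cannot extend to an element of $\cF(s,t,\betak)$.

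At $h=p$ we have $G[\boldsymbol{a}^p]=G$, and by \Cref{lemma:validcoloring} the valid colorings of $G$ that color $t$ tan are in bijection with the minimum $(s,t)$-cuts $\cS$, with black set equal to the cut; since $\betak$ is the minimum cut size, every such cut has size $\betak$. Unioning the cells $\cR_p[\sigma]$ over the final states $\sigma$ that color $t$ as $\tr$ and reducing once more yields $\cF'(s,t,\betak)\repset{\R-\betak}\cF(s,t,\betak)$ of size $\le\binom{\R}{\betak}\le 2^{\R}$; moreover an $(\R-\betak)$-representative family is automatically a $q$-representative family for every $q\le\R-\betak$, which covers the parameter $q$ of {\sc GIDVC}. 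For the running time there are $p\le n^{\Oh(1)}$ steps, each touching $2^{\Oh(\betak)}$ states with polynomial-time validity checks and one invocation of \Cref{prop:repset} on families of size $2^{\Oh(\betak)}\cdot 2^{\R}$, costing $2^{\Oh(\R)}\cdot n^{\Oh(1)}$ field operations (using $\betak\le\R$); the total is $2^{\Oh(\R)}\cdot n^{\Oh(1)}$. I expect the main obstacle to be justifying that the restriction of a valid coloring to the last two layers of critical vertices is a sufficient statistic for all valid extensions — this is precisely where \Cref{lem:decom}(c) is needed, and one must also verify that the forced colors of non-critical vertices never create a silver–tan conflict invisible to the boundary — together with checking that the repeated representative-family reductions interact correctly with the coloring constraints so that no minimum independent cut is discarded.
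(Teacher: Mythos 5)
Your proposal matches the paper's proof in all essentials: the same dynamic program along the prefix subgraphs of \Cref{lem:decom}, with table cells indexed by the coloring of the last two layers, storing matroid-representative families of the black-vertex sets, using condition (c) of the decomposition lemma to argue the boundary is a sufficient statistic, and invoking \Cref{lemma:validcoloring} plus a final representative-set reduction at $h=p$. The only (inessential) deviation is that you derive the black-count $\ell_\sigma$ from the boundary coloring, whereas the paper carries it as an explicit third index $i$ of the table.
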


\begin{proof}
Note that by \cref{lemma:validcoloring} any valid coloring of the graph $ G $ that colors the vertex $ t $ tan $(\tr)$  corresponds to a \stc in $ G $. Conversely, any rank \stc in $ G $ gives rise to a valid coloring of $ G $ that colors $ t $ tan. 
Recall that our goal is to efficiently compute $\cF'(s, t, k) \repset{\R-\betak} \cF(s, t, k)$ of bounded size. By \cref{lemma:validcoloring}, there is a bijection between set of all minimum vertex $(s, t)$-cuts in $G$ and valid colorings with $t$ is colored tan. Therefore, we design a dynamic programming-based  algorithm along the decomposition lemma (\cref{lem:decom}) to compute ``partial representative family'' that represents (partial) valid coloring. We describe this DP below.

 We use \cref{lem:decom} to construct the sequence $  \boldsymbol{a}^1, \ldots, \boldsymbol{a}^p$ in polynomial time. Based on this sequence, we define a table $A$ of dimension  $ p \times 3^{2\betak} \times \betak$ $ A $ as	follows.

\begin{itemize}
	\item[$ \bullet $] The entry $ A[h, y,i]$ is indexed an integer $ h \in [p] $, a string $ y \in \{\sr, \tr, \br\}^{2\betak} $, and an integer $i \in [\betak]$.
	
	 \item[$ \bullet $] Fix some $h \in [p], y \in \LR{\sr, \tr, \br}^{2\betak}, i \in [\betak]$. We say that a valid coloring $\chi: V(G[\boldsymbol{a}^h]) \to \LR{\sr, \tr, \br}$ {\em respects} the tuple $(h, y, i)$ if it colors the vertices in the last two layers of $G[\boldsymbol{a}^h]$ according to $y$, and $|B_{\chi}| = i$, where $B_\chi \coloneqq \chi^{-1}(\br)$. We define ${\cal S}(h, y, i)$ to be the set of all such subsets $B_\chi$ over all valid colorings $\chi$ of $G[\boldsymbol{a}^h]$ that respect $(h, y, i)$. In the entry $A(h, y, i)$, we will store $\widehat{\cal S} \repset{\R-i} {\cal S}(h, y, i)$. 


\end{itemize}

 The base entries $ A[1, \cdot, \cdot]$ can be computed by inspection. We now give a procedure to compute $ A[h,y,i]$, based on the entries $ A[h-1, \cdot, \cdot]$.  The last two layers of $ G[\boldsymbol{a}^h ] $ differ from the
last two layers of $ G[\boldsymbol{a}^{h-1} ] $ in that one vertex (say $ v $) was added and one vertex (say $ u $) was removed. Let $ w $ denote the vertex which is in the last layer of $ G[\boldsymbol{a}^{h-1} ] $ but in the second last layer of $ G[\boldsymbol{a}^{h} ]$. Note that the vertex $w$ always exists.   Technically, if $ u = s $, then $ s $ may still belong to the last two layers of $ G[a^h ] $, but the treatment of this case is only simpler than in the case $ u \neq s $ and is omitted. We try all three possible colors for $ u $.
For each color, we first check if in combination with $ y $ it violates the condition for a valid coloring. If `not, then we do the next step. We compute the color of all the vertices that are in $ G[\boldsymbol{a}^h ] $ but not in $ G[\boldsymbol{a}^{h-1} ] $ using the fourth condition in \cref{def:coloring}. Note that by condition (c) of \cref{lem:decom}, the color of any such vertex can be uniquely specified. We perform a case analysis based on $ y(v) $ and define a family $F$ that we will eventually store in $A[h, y, i]$.  
Let $ y' $ be a coloring of the vertices in the last two layers of $ G[\boldsymbol{a}^{h-1} ]$. 
We say that $y$  and $y'$ are \emph{compatible} with each other, if they 
agree on coloring of the vertices that are in common in their respective last two layers. 
We denote this by $y \propto y'$.

\begin{description}
	\item[(i) $ y(v)=\br $] As $y$ is a valid coloring we have $y(u)=y(w)=\sr$. In this case, if $ y' \propto y $ then we must have $ y'(u)= y'(w) = \sr$. Therefore, there is a unique coloring in the last two layers in the vertices of $ G[\boldsymbol{a}^{h-1} ] $ that is compatible with $ y $.  Let  $ F= \{Q \cup \{v\} \colon Q \in A[h-1,y',i-1],~y' \propto y\} $.  So here $ |F|=|A[h-1,y', i-1]| \le 2^{\R} $.
	
	\medskip
	
	\item[(ii) $ y(v)=\sr $] As $y$ is a valid coloring, we have $y(u)=y(w)=\sr$. In this case, if $ y' \propto y $ then we must have $ y'(u)= y'(w) = \sr$. Therefore, there is a unique coloring in the last two layers in the vertices of $ G[\boldsymbol{a}^{h-1} ] $ that is compatible with $ y $. Let $   F= A[h-1, y', i]$.  So, here $ |F|=|A[h-1,y',i]| \leq 2^{\R} $.
	
	\medskip 
	
	\item[(iii) $ y(v)= \tr $] As $y$ is a valid coloring, we have either $ y(u)= y(w) = \tr$, or $ y(u)= \br,~ y(w) = \tr $, or $ y(u)= \sr,~y(w) = \br $.  In this case, if $ y' \propto y $ then we must have either $ y'(u)= y'(w) = \tr$, or $ y'(u)= \br,~  y'(w) = \tr $, or $ y'(u)= \sr,~y'(w) = \br $. Here we distinguish these by two subcases based on  $y'(w)$. Recall that $ y(w)=y'(w) $. 
	
\begin{itemize}
	\item If $ y'(w)=\br $ then we must have $ y'(u)= \sr $. So in this case,  there is a unique coloring in  the last two layers in the vertices of $ G[\boldsymbol{a}^{h-1} ] $ which is compatible with $ y $.  Let  $   F= A[h-1, y', i]$.  So here $ |F|=|A[h-1,y',i]| \leq 2^{\R} $.
	
	\medskip 
 
	\item If $ y'(w)=\tr $ then $ u $ can take color either black or tan. So there are at most  two different colorings in  the last two layers in the vertices of $ G[\boldsymbol{a}^{h-1} ] $ which is compatible with $ y $. Let $ F'= \bigcup_{y' \propto y} A[h-1,y', i]$. So  here $ |F'|=|\bigcup_{y' \propto y} A[h-1,y', i]| \leq 2 \cdot 2^{\R} $. Now we use  the tool of {\em representative sets} (\cref{prop:repset}) to find a set $ F \repset{\R-i} F'$ of size at most $ 2^{\R} $. 
\end{itemize}

\end{description}

Finally, we will let $A[h, y, i] \gets F$.	Let ${\cal Y}$ denote the set of all strings that color the vertex $t$ tan in $G[\boldsymbol{a}^p]$. Note that $|{\cal Y}| \le 2^{\betak}$. 
Given the table $ A $, one can easily check the existence of a desired \stc by checking the entries $ A[p, y, \betak]$ for all strings $ y \in {\cal Y}$. Formally if $ A[p,y, \betak]\neq \emptyset$ for some string $ y$ where $ y(t)=t $ then we return \yes to our problem on the instance $ G $. For the more general problem, when we seek the representative family, we compute $\cF' \repset{\R-\betak} \bigcup_{y \in {\cal Y}} A[p, y, \betak]$ and return $\cF'$. 


\end{proof}



\stcuttheorem*

    

\section{(Odd) Cycle Hitting}\label{sec:cyclehitting}

In this section, we consider matroidal generalizations of two quintessential cycle hitting problems in parameterized complexity, namely \textsc{Feedback Vertex Set} (FVS), and \textsc{Odd Cycle Transversal} (OCT), respectively. 

\subsection{{\sc Independent FVS}} \label{subsec:fvs}

Here, we sketch an argument that gives a polynomial kernel for \textsc{Independent FVS} parameterized by the solution size $k$, which we can assume to be equal to the rank $\R$ of the matroid, \emph{even in the oracle access model}.
However, we assume that the memory used to store oracles does not contribute to the input size, and this is important for our kernelization result. 
The standard degree 0 and 1 rules are also applicable here. Note that in standard FVS, we can replace an internal vertex $v$ from an induced degree-2 path $P$ by a vertex $u$ of degree at least 3 that is at an end of $P$, since any cycle that passes through $v$, also passes through $u$. However, in the presence of matroid constraint, such an argument no longer works -- since replacing $v$ by $u$ in a solution may not result in an independent set. Nevertheless, we can bound the relevant vertices from a long induced path of degree 2 by $\R+1$, using the following proposition. 

\begin{restatable}{lemma}{oraclerepset}\label{lem:oraclerepset}
	Let $\cM$ be a matroid of rank $\R$ given via oracle access, and let $\mathcal{A}$ be a $1$-family of subsets of $U(\cM)$. Then, $\mathcal{A}' \subseteq^{\R-1}_{rep} \mathcal{A}$ can be computed in polynomial time, where $|\mathcal{A}'| \le \R$.
\end{restatable}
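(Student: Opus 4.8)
The plan is to build $\mathcal{A}'$ greedily, one element at a time, exploiting the fact that a $1$-family consists of singletons, so "fitting a set $B$ of size $\R-1$" is a purely rank-theoretic condition on a single element. First I would discard from $\mathcal{A}$ every singleton $\{a\}$ with $a \notin \cI(\cM)$ (i.e.\ $a$ is a loop), since such a set can never fit anything; this costs $|\mathcal{A}|$ oracle calls. Now every $\{a\} \in \mathcal{A}$ is independent. The key observation is: a singleton $\{a\}$ fits a set $B$ of size $q = \R-1$ (with $B \cup \{a\} \in \cI$, $a \notin B$) \emph{iff} $B \cup \{a\}$ is a basis, which forces $a \notin \cl(B)$. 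So to $q$-represent $\mathcal{A}$ it suffices to keep, for every flat $F = \cl(B)$ of rank $\R-1$ that is "hit" by some $a$ with $\{a\}\in\mathcal A$, at least one such witness $a \notin F$.

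The main step is the greedy construction. Maintain a partial family $\mathcal{A}' = \{\{a_1\},\dots,\{a_j\}\}$ and the subspace $W_j = \cl(\{a_1,\dots,a_j\})$, which is a flat of rank $j$ (here I use that each $a_i$ we add increases the rank by one). Process the remaining singletons $\{a\}\in\mathcal{A}$ in arbitrary order; if $a \notin W_j$, add $\{a\}$ to $\mathcal{A}'$, forming $W_{j+1} = \cl(W_j \cup \{a\})$ of rank $j+1$; otherwise skip it. Since the rank of $W_j$ strictly increases with each addition and is bounded by $\R$, we add at most $\R$ singletons, giving $|\mathcal{A}'| \le \R$. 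Each rank test $a \in W_j$? is one oracle query ($\rank(W_j \cup \{a\}) = \rank(W_j)$?), so the whole process is polynomial in $|\mathcal{A}|$ and the oracle cost; computing $\cl$ is never needed explicitly, only rank comparisons.

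It remains to verify $\mathcal{A}' \repset{\R-1} \mathcal{A}$. Let $B$ be a set of size $\R-1$ fitting some $\{a\}\in\mathcal{A}$, so $B \cup \{a\}$ is a basis and $\rank(B) = \R-1$. Let $W = W_{final} = \cl(\mathcal{A}')$ be the final flat; it has rank $t \le \R$. If $a \in \mathcal{A}'$ we are done, so suppose $\{a\}$ was skipped, meaning $a \in W_j$ at the time it was examined, hence $a \in W$ (flats only grow). I claim some $\{a'\} \in \mathcal{A}'$ fits $B$. Consider the subspace $\cl(B)$, a flat of rank $\R-1$ not containing $a$. Since $a \in W \setminus \cl(B)$, the flat $W$ is not contained in $\cl(B)$, so some generator $a_i$ of $W$ (one of the added singletons) lies outside $\cl(B)$ — indeed if every $a_i \in \cl(B)$ then $W = \cl(\{a_i\}) \subseteq \cl(B)$, contradicting $a \in W \setminus \cl(B)$. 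Pick such an $a_i$; then $a_i \notin \cl(B)$, so $\rank(B \cup \{a_i\}) = \R$, i.e.\ $B \cup \{a_i\}$ is independent of full size, a basis, and $a_i \notin B$; thus $\{a_i\}$ fits $B$, and $\{a_i\}\in\mathcal{A}'$. This establishes the representation property. The only subtlety — the step I expect to be most delicate to phrase correctly — is the reduction "fits $\iff$ lies outside the closure", which relies crucially on $p+q = 1 + (\R-1) = \R = \rank(\cM)$ so that fitting forces a \emph{basis}; with $p+q < \R$ this clean characterization fails and one would need the full representative-sets machinery. $\qed$
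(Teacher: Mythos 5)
Your proposal is correct and follows essentially the same route as the paper: your greedy process (add $a$ whenever $a\notin\cl$ of the elements chosen so far) is exactly the computation of an inclusion-wise maximal independent subset $R$ of the underlying elements of $\mathcal{A}$, and your correctness argument ($a\in\cl(R)\setminus\cl(B)$ forces some generator $a_i$ of $R$ to lie outside $\cl(B)$, so $B\cup\{a_i\}$ is a basis) is the same closure/exchange argument the paper phrases via $\rank(R\cup(X\setminus\{u\}))\ge\rank(X)$. Your closing remark correctly identifies the point on which the elementary argument hinges, namely that $p+q=\R$ makes ``fits'' equivalent to completing a basis.
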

\begin{proof}
	Let $A = \LR{x : \LR{x} \in \mathcal{A}}$ be the set of underlying elements corresponding to the sets of $\mathcal{A}$. We compute an inclusion-wise maximal subset $R \subseteq A$ that is independent in $\cM$. Note that this subset can be computed using $O(|U|\R)$ queries, where $r = \rank(\cM)$. We claim that $\mathcal{A}' = \LR{\LR{y}: y \in R}$ is a $1$-representative set of $\mathcal{A}$.
	
	Consider any $u \in A$ and $X \subseteq U$ of size $\R$ such that $u \in X$ and  $X \in \cI$. We will show that there exists some $u' \notin R$ such that $X \setminus \LR{u} \cup \LR{u'} \in \cI$. Since $R$ is an inclusion-wise maximal independent subset of $A$, it follows that $u \in \cl(R) \subseteq \cl(R \cup (X \setminus \LR{u}))$. This means that $\rank(R \cup (X \setminus \LR{u})) = \rank(R \cup (X \setminus \LR{u}) \cup \LR{u}) = \rank(R \cup X)$. However, $\rank(R \cup X) \ge \rank(X)$. Therefore, we obtain that $\rank(R \cup (X \setminus \LR{u})) \ge \rank(X)$. This implies that there exists some $u' \in R$ such that $X \setminus \LR{u} \cup \LR{u'} \in \cI$. 
\end{proof}

\subparagraph{Polynomial kernel.} Consider any set $P$ that induces a maximal induced degree 2 path in $G$. We replace $P$ with a degree-2 path induced by $P'$, where $P'$ is obtained via \Cref{lem:oraclerepset}. After performing each such replacement, we obtain a graph $G'$, and let $V' \subseteq V(G')$ be the set of vertices of degree at least $3$. 
Now consider a graph $H'$ that is obtained by $G'$ by replacing each maximal induced degree-2 path in $G'$ by a single edge, and we call such an edge a \emph{special edge}. Note that $H'$ is exactly the graph we would obtain by using standard polynomial kernelization algorithm on $G$. By standard arguments, we know that the number of vertices and edges in $H'$ of degree at least $3$ is bounded by $\Oh(\R^2)$. Since each edge in $H'$ may correspond to a path of length $k$ induced by degree-2 vertices, it follows that the number of vertices and edges in $G'$ is bounded by $\Oh(\R^3)$.

\subparagraph{\fpt algorithm.} Brute-forcing on the $\Oh(\R^3)$ kernel immediately shows that {\sc Independent FVS} is \fpt parameterized w.r.t.~$\R$ with running time $\R^{\Oh(\R)} + n^{\Oh(1)}$ -- note that this also requires only an oracle access to the matroid. However, we can design a faster, single exponential \fpt algorithm for the problem via adapting a couple of approaches for standard FVS. 

We first use the kernelization algorithm and obtain an equivalent instance of size $\Oh(\R^3)$. In fact, instead of working with the graph $G'$, we work with the graph $H'$ where the number of vertices and edges is $\Oh(\R^2)$, where we denote the subset of special edges in $E(H)$ by $E_{sp}$. For simplicity, we use $G$ and $H$ instead of $G'$ and $H'$. Now we use the iterative compression technique, where we are working with an induced subgraph $H[X]$, for some $X \subseteq V(H)$. Note here that $X$ is a subset of vertices of degree at least $3$ in $G$. Further, let $Z \coloneqq V(G) \setminus V(H)$ denote the set of vertices of degree $2$ in $G$, and for $X \subseteq V(H)$, let $Z_X \subseteq Z$ denote the subset of vertices that appear on a path $P$ connecting two vertices $u, v \in X$. Note that such a path is represented by a special edge $uv \in E(H)$, and indeed the special edge $uv$ is present in $H[X]$. 

Inductively we assume that we have an FVS $S \subseteq X$ of size $\R+1$ in the graph $H[X]$, and we want to determine whether there exists a vertex-subset $O \subseteq X$, and $F \subseteq E(H[X]) \cap E_{sp}$ of special edges, such that: (i) $|O \cup F| \le \R$, and (ii) for each edge $e \in F$, there exists a degree-2 vertex $v_e$ in the corresponding degree-2 path, such that $O \cup \bigcup_{e \in F} v_e \in \cI$. 

We try to guess $Y \coloneqq O \cap S$, assuming such an $O$ indeed exists. In this step, in addition to deleting $Y$ from the graph, we additionally contract the matroid $\cM$ on the set $Y$ of size $p \le \R$, and obtain another matroid $\cM/Y$. Since $\cM$ is provided to us using its representation matrix, we can obtain that of $\cM_Y$ in polynomial-time (see, e.g., \cite{Marx09}). Thus, we can reduce the problem to {\sc Disjoint Independent FVS}, where we have an FVS $N \coloneqq S \setminus Y$ for the graph $H[W] = G-Y$, and we want to find an independent FVS of size $t \coloneqq \R-p$ in $G'$ that is disjoint from $N$. For this problem, we can emulate a standard branching algorithm, albeit with some modifications. In the course of the algorithm, we also maintain a set $\cE \subseteq E_S$ of special edges, which is initialized to an empty set. Let $W = X \setminus N$, and note that $H[W]$ is a forest.

\begin{algorithm}[htbp]
	\caption{\texttt{DisjointFVS}$(H = (X, E), \cM, W, N, \cE, t)$}
	\label{alg:disjointfvs}
	\begin{algorithmic}[1]
        \If{$t = 0$}
            \If{$H$ contains a cycle}
                \Return $\bot$
            \Else
                \State Find common base (if exists) for $\cM$ and $\cM_1$ using \cite{Cunningham86}, where
                \State $\cM_1 = (V(G), \cI_1)$ where $\cI_1 = \LR{S \subseteq V(G)|S \cap P_e| \le 1 : e \in \cE } $
                \State \Return common base $S$ if it exists, \textbf{else} $\bot$
            \EndIf
        \EndIf
		\If{$\exists v \in W$ with  neighbors $u_1, u_2 \in N$ in the same component of $H[N]$}
            \State \textbf{if } $u_1v \in E_s$ \textbf{ then } $S_1 \gets \texttt{DisjointFVS}(H, \cM, W, N, \cE \cup \LR{u_1v}, t-1)$ \textbf{else} $S_1 \gets \bot$
            \State \textbf{if } $u_2v\in E_s$ \textbf{ then } $S_2 \gets \texttt{DisjointFVS}(H, \cM, W, N, \cE \cup \LR{u_2v}, t-1)$ \textbf{else} $S_2 \gets \bot$
            \State $S_3 \gets \texttt{DisjointFVS}(H-v, \cM/v, W-v, N, t-1)$
            \State \textbf{if} $S_1 \neq \bot$ \textbf{ then } \Return{$S_1$}
            \State \textbf{else if} $S_2 \neq \bot$ \textbf{ then } \Return{$S_2$}
            \State \textbf{else if} $S_3 \neq \bot$ \textbf{ then } \Return{$S_3 \cup \LR{v}$}
            \State \textbf{else }\Return $\bot$
        \Else
            \State Let $x \in W$ be a leaf in $H[W]$
            \State $S_4 \gets \texttt{DisjointFVS}(H-x, \cM/x, W-x, N, t-1)$
            \State $S_5 \gets \texttt{DisjointFVS}(H, \cM, W-x, N\cup \LR{x}, t-1)$
            \State \textbf{if} $S_4 \neq \bot$ \textbf{ then } \Return{$S_4 \cup \LR{x}$}
            \State \textbf{else if} $S_5 \neq \bot$ \textbf{ then } \Return{$S_5$}
            \State \textbf{else} \Return $\bot$
        \EndIf
	\end{algorithmic}
\end{algorithm}

First, if there exists a vertex $v \in W$ such that $v$ has two special neighbors $u_1, u_2 \in N$ (by this, we mean that $u_1v, u_2v$ are special edges) such that $u_1, u_2$ belong to the same connected component in $H[N]$, then we make three recursive calls, say $1, 2$ and $3$: In the recursive call $i \in \LR{1, 2}$, the graph is obtained by deleting the special edge $u_iv$ and adding it to $\cE$. In the third recursive call, we add $v$ to the solution, contract the matroid $\cM$ on $v$, and remove $v$ from the graph. In each of the three recursive calls, the parameter $t$ drops by $1$. If a recursive call $1$ or $2$ returns a solution, then we return any of those. Otherwise, if call $3$ returns a solution $S$, then we return $S \cup \LR{v}$. If none of the calls returns a solution, then we return $\bot$.

Now assume that there is no such vertex $v \in W$ with two special neighbors $u_1, u_2 \in N$ from the same connected component of $H[N]$. In this case, we proceed with the same kind of branching strategy as for the standard FVS. We sketch it for completeness. Since $H[W]$ is a forest, it contains some leaf $x$. However, since each vertex has degree at least $3$ in $H$, it follows that $x$ must have at least two neighbors in $N$. In addition, these neighbors must be in different components. We make two recursive calls. In the first recursive call, we add $x$ to the solution, contract $\cM$ on $x$, and delete it from the graph. In this call, the parameter $t$ is reduced by $1$. If this call returns a solution $S$, then we return $S \cup \LR{x}$.
In the second recursive call, we add $x$ to $W$, and the parameter $t$ remains unchanged. If this call returns a solution, then we return the same solution. Otherwise, we return $\bot$.

At the bottom of the recursion, when $t = 0$, if the graph $G'$ is not acyclic, then we return $\bot$. Otherwise, we look at the special edges that have been added to $\cE$ over the course of the recursion. Recall that these edges correspond to the induced degree-2 paths in the original graph. Thus, we have a \yes-instance if and only if for each $e \in \cE$, we can select some $v_e \in P_e$, where $P_e$ is the set of vertices along the induced degree-2 path corresponding to $e$, such that $\bigcup_{e \in \cE} v_e \in \cI$. At this point, we observe that this can be reduced to a matroid intersection problem between the current matroid $\cM = (V(G'), \cI')$ (where $V(G') \subseteq V(G)$ -- here $G$ refers to the original kernelized graph) and a partition matroid $\cM_1 = (V(G), \cI_1)$, where $\cM_1$ is a matroid of rank $|\cE|$, such that a subset $S \subseteq V(G) $ is independent if and only if $|S \cap P_e| \le 1$. This problem can be solved in polynomial time even using oracle access to $\cM$ using, e.g., the algorithm of Cunningham~\cite{Cunningham86} (note that $\cM_1$ is a partition matroid, and thus oracle queries can be easily simulated in polynomial time). It is also straightforward to show that this leads to an algorithm of running time $9^t \cdot n^{\Oh(1)}$. Due to the iterative compression step, we obtain an overall running time of $10^\R \cdot n^{\Oh(1)}$. We conclude with the following theorem, by recalling that the solution size $\betak \le \R$.

\begin{restatable}{theorem}{fvstheorem}\label{thm:indfvs}
    {\sc Independent Feedback Vertex Set} admits a kernel of $\Oh(\betak^3)$ vertices and edges that can be computed in polynomial time in the oracle access model, where $\betak$ denotes the solution size, with $\betak \le \R = \rank(\cM)$. Further, the problem admits a deterministic algorithm running in time $10^\betak \cdot n^{\Oh(1)}$ in the oracle access model. 
\end{restatable}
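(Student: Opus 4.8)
The plan is to prove the kernel bound and the running time separately, leaning on \Cref{lem:oraclerepset} in both. For the \emph{kernel}, I would start from the textbook \textsc{Feedback Vertex Set} rules for vertices of degree $0$ and $1$; these never delete a vertex that could lie in a minimal (hence in an independent) solution, so they are matroid-oblivious and safe. The only rule that breaks is the degree-$2$ shortcut: classically one reroutes a maximal induced degree-$2$ path through a degree-$\ge 3$ endpoint, but in the matroidal setting swapping the interior vertex for the endpoint may destroy independence in $\cM$. Instead, for each maximal induced degree-$2$ path $P$ I would apply \Cref{lem:oraclerepset} to the $1$-family $\{\{v\}:v\in P\}$, obtaining a subfamily supported on a set $P'\subseteq P$ with $|P'|\le \R$ that $(\R-1)$-represents it, and replace the path induced on $P$ by the path induced on $P'$. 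Safety: any minimal independent solution $O$ meets $P$ in at most one vertex $v$, and $O\setminus\{v\}$ has size $\le \R-1$ and is disjoint from $v$, so by the representative property some $v'\in P'$ also fits $O\setminus\{v\}$, and $O\setminus\{v\}\cup\{v'\}$ is again an independent FVS. After exhaustively applying this we reach a graph $G'$; contracting each maximal induced degree-$2$ path of $G'$ into one ``special'' edge yields the graph $H'$ produced by the classical kernelization, which has $\Oh(\R^2)$ vertices and edges by the standard counting. Since each special edge unfolds in $G'$ into a degree-$2$ path with at most $\R$ interior vertices, $|V(G')|+|E(G')|=\Oh(\R^3)$, and $\betak\le\R$ gives the stated bound; all of this needs only oracle access to $\cM$.

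For the \emph{FPT algorithm}, brute force on the $\Oh(\R^3)$ kernel already gives $\R^{\Oh(\R)}+n^{\Oh(1)}$, so to reach $10^{\betak}\cdot n^{\Oh(1)}$ I would run iterative compression on $H'$, carrying its special edges $E_{sp}$ and an FVS $S$ of size $\R+1$. For each guess $Y=O\cap S$ of size $p$, delete $Y$ and replace $\cM$ by $\cM/Y$ (trivially simulated in the oracle model, since $X$ is independent in $\cM/Y$ iff $X\cup Y$ is independent in $\cM$); the residual task is \textsc{Disjoint Independent FVS}, namely to find an independent FVS of size $t=\R-p$ disjoint from $N=S\setminus Y$, which I would solve by \Cref{alg:disjointfvs}. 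The branching keeps a forest $H[W]$ and a set $\cE\subseteq E_{sp}$ of ``committed'' special edges: if some $v\in W$ has two special neighbours $u_1,u_2\in N$ in the same component of $H[N]$, branch three ways (add $u_1v$ to $\cE$; add $u_2v$ to $\cE$; take $v$ into the solution and contract $\cM$ on $v$), each branch dropping $t$; otherwise take a leaf $x$ of $H[W]$, which must have $\ge 2$ neighbours in distinct components of $H[N]$, and branch two ways (take $x$, dropping $t$ and contracting; or move $x$ into $N$). When the search reaches $t=0$ with $H$ acyclic, the only remaining freedom is to pick one interior vertex per committed path so that the picks are independent in the current matroid, which is a matroid-intersection instance between $\cM$ (contracted on all vertices committed to the solution) and the rank-$|\cE|$ partition matroid whose classes are the vertex sets of the committed degree-$2$ paths, solvable in polynomial time with oracle access by Cunningham's algorithm \cite{Cunningham86}. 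A standard measure argument bounds the disjoint routine by $9^{t}\cdot n^{\Oh(1)}$, and $\sum_{p}\binom{\R+1}{p}9^{\R-p}=\Oh(10^{\R})$ over the guesses of $Y$ gives the claimed bound.

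The main obstacle will be the correctness of the three-way branch together with the $\cE$-bookkeeping: I must argue that for every minimal independent solution $O$ and every branching vertex $v\in W$, either $v\in O$, or $O$ uses exactly one interior vertex of a special edge incident to $v$ whose endpoint sits in the offending component of $H[N]$, so that one of the three recursive calls is consistent with $O$; and that $|\cE|$ never exceeds the remaining budget $t$, so that the terminal matroid-intersection call correctly decides whether the committed degree-$2$ paths admit a transversal that, together with the vertices already committed to the solution, is independent in $\cM$. Verifying that contracting $\cM$ along committed vertices faithfully tracks this independence (a common base of $\cM/(\mathrm{committed})$ and the partition matroid lifts to an independent set of $\cM$ of size exactly $\betak$), and the textbook measure analysis showing the ``move-to-$N$'' branches cannot blow the search tree past $9^{t}$, are the remaining routine checks.
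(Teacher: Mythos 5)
Your proposal follows essentially the same route as the paper: the kernel replaces each maximal induced degree-$2$ path by an $(\R-1)$-representative subfamily of its vertices via \Cref{lem:oraclerepset} and then invokes the standard $\Oh(\R^2)$ counting on the contracted graph $H'$, and the algorithm is the same iterative-compression scheme with the three-way/two-way branching of \Cref{alg:disjointfvs} and a final matroid-intersection call à la Cunningham, yielding $9^t$ for the disjoint version and $10^{\betak}$ overall. The residual checks you flag (safety of the branch with the $\cE$-bookkeeping, the measure analysis) are exactly the points the paper itself treats as routine, so there is no substantive divergence.
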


\subsection{{\sc Independent OCT}} \label{subsec:indoct}

This is a standard application of the iterative compression technique. Suppose we have an OCT $S$ of size $k+1$ of a graph $G$, and we want to determine whether $G$ has an OCT $O$ that is independent in $\cM$. We try to guess $Y \coloneqq O \cap S$, assuming such an $O$ indeed exists. In this step, in addition to deleting $Y$ from the graph, we additionally contract the matroid $\cM$ on the set $Y$ of size $p \le k$, and obtain another matroid $\cM/Y$. Since $\cM$ is provided to us using its representation matrix, we can obtain that of $\cM_Y$ in polynomial-time (see, e.g., \cite{Marx09}). Thus, we can reduce the problem to {\sc Disjoint Independent Odd Cycle Transversal}, where we have an OCT $N \coloneqq S \setminus Y$ for the graph $G' = G-Y$, and we want to find an independent OCT of size $k-p$ in $G'$ that is disjoint from $N$. Then, by standard techniques (see, e.g., \cite{ReedSV04} or Lemma 4.14 of \cite{CyganFKLMPPS15}), this problem can be reduced to solving $2^{k-p}$ instances of {\sc Independent $(s, t)$-Cut} in $G'$, where the underlying matroid is now $\cM_Y$. Then, by using \Cref{thm:stcuttheorem}, we obtain the following theorem.

\begin{restatable}{theorem}{octthrorem} \label{thm:octtheorem}
{\sc Independent Odd Cycle Transversal} is \fpt parameterized by $k$, which denotes the rank of the given linear matroid. Specifically, it admits a $2^{\Oh(k^4 \log k)} \cdot n^{\Oh(1)}$ time  algorithm. 
\end{restatable}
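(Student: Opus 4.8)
The plan is to follow the textbook iterative-compression route for {\sc Odd Cycle Transversal}~\cite{ReedSV04}, carried out so that the linear matroid is transported correctly through each reduction, with \Cref{thm:stcuttheorem} invoked in the base case. After truncating $\cM$ I may assume $\rank(\cM) = k$ and that the goal is to decide whether $G$ has an independent OCT of size at most $k$.

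First, using the fact that vanilla {\sc OCT} is \fpt, I would obtain in $\Oh^*(3^k)$ time an OCT $S$ of $G$ with $|S| \le k+1$ (or conclude that $G$ has no OCT of size $\le k$, hence no independent one). Assuming a hypothetical independent OCT $O$ exists, I guess $Y \coloneqq O \cap S$, discarding guesses with $Y \notin \cI(\cM)$; there are at most $2^{k+1}$ guesses, each with $p \coloneqq |Y| \le k$. Deleting $Y$ and replacing $\cM$ by the contraction $\cM/Y$ (whose representation is computable in randomized polynomial time~\cite{Marx09}) reduces the task to {\sc Disjoint Independent OCT}: in $G' \coloneqq G - Y$, equipped with the OCT $N \coloneqq S \setminus Y$ of $G'$ and the matroid $\cM/Y$ of rank $k-p$, decide whether there is $O' \in \cI(\cM/Y)$ with $|O'| \le k-p$, $O' \cap N = \emptyset$, and $G' - O'$ bipartite; then $O = Y \cup O'$ is an independent OCT of $G$.

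Next I would invoke the standard reduction from {\sc Disjoint OCT} to vertex $(s,t)$-cut (\cite{ReedSV04}; see also Lemma~4.14 of \cite{CyganFKLMPPS15}). Since $G' - N$ is bipartite, one iterates over the $2^{|N|} \le 2^{k+1}$ ways of guessing, for each vertex of $N$, which of the two sides of the bipartition of the bipartite graph $G' - O'$ it lies on, and for each such guess builds an auxiliary graph $H$ on the vertices of $G' - N$ (possibly with bounded-multiplicity copies) together with two new terminals $s, t$, so that the minimal vertex $(s,t)$-cuts of size $\le k-p$ in $H$ are in bijection with the candidate sets $O'$ above. I would transport the matroid exactly as in the Phase~I--IV transformations of \Cref{sec:mcut}: starting from $\cM/Y$, every real vertex of $H$ keeps the column vector of the corresponding vertex of $G'$ (shared among its copies, which are therefore parallel elements), while $s$, $t$, and any other auxiliary vertex is assigned the zero vector. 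A set independent in this matroid then contains at most one copy of each vertex and projects onto an independent set of $\cM/Y$, and conversely any independent $O'$ lifts to an independent cut; since minimality of a cut is compatible with both the cut property and independence, it suffices to search for a minimal independent vertex $(s,t)$-cut in $H$, i.e., an instance of {\sc Generalized} \indstcut. Finally, for each of the $2^{\Oh(k)}$ resulting instances I would call \Cref{thm:stcuttheorem}, iterating the target size $\betak$ from $1$ to $k-p$ as explained in \Cref{sec:mcut}; this returns, in $2^{\Oh((k-p)^4 \log(k-p))} \cdot n^{\Oh(1)} \le 2^{\Oh(k^4 \log k)} \cdot n^{\Oh(1)}$ time, a representative family of minimal independent $(s,t)$-cuts of $H$, and in particular decides whether one exists. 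A nonempty answer for some guess certifies a \yin. Multiplying the $\Oh^*(3^k)$ of the first step, the $2^{\Oh(k)}$ choices for $Y$ and for the bipartition of $N$, and the $2^{\Oh(k^4 \log k)} \cdot n^{\Oh(1)}$ per cut instance gives the claimed running time.

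The only non-routine point is the middle step: one must check that the ``parallel copies plus zero vectors'' matroid on $V(H)$ faithfully encodes ``projection independent in $\cM/Y$'', and that the gadget used to build $H$ preserves minimality of cuts, so that the flow-augmentation machinery underpinning \Cref{thm:stcuttheorem} applies. This is entirely analogous to the matroid transformations established for Phases~I--IV; the remainder is bookkeeping.
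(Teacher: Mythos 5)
Your proposal is correct and follows essentially the same route as the paper: iterative compression to get an OCT $S$ of size $k+1$, guessing $Y = O \cap S$ and contracting the matroid to $\cM/Y$, reducing to {\sc Disjoint Independent OCT}, then to $2^{\Oh(k)}$ instances of independent vertex $(s,t)$-cut via the standard Reed--Smith--Vetta reduction, and finally invoking \Cref{thm:stcuttheorem}. The only difference is that you spell out the matroid transport through the auxiliary cut instance (parallel copies and zero vectors), which the paper leaves implicit under ``standard techniques.''
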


\section{Vertex Multiway Cut}\label{sec:mwaycut}

	


Like \indstcut, our aim here is also to design an algorithm that solves a more general version of the problem, called \textsc{Generalized Independent Multiway Cut}, or \gindmcut~for short. 
By iteratively checking values $\betak = 1, 2, \ldots, $ and running the algorithm described below, we may inductively assume that we are working with a value $0 \le \betak \le \R$, such that there is no independent multiway cut for $T$ in $G$ for any $\betak' < \betak$. This follows from the fact that, our generalized problem, as defined below, requires the algorithm to return a set corresponding to $\betak'$ whose emptiness is equivalent (via \Cref{obs:repsetprops}, item 2) to concluding that the there is no independent multiway cut of size $\betak'$.

Then, let $\cF(T, \betak, G) \coloneqq \LR{S \subseteq V(G) \setminus T : |S| = \betak, S \in \cI, \text{ and $S$ is a multiway cut for $T$}}$. Note that, if there is no independent multiway cut of size $\betak$, then $\cF(T, \betak, G) = \emptyset$. On the other hand, if $\cF(T, \betak, G) \neq \emptyset$, then each $S \in \cF(T, \betak, G)$ must also be a \emph{minimal} multiway cut for $G$ -- otherwise, there would exist some $S' \subset S$ of size $\betak' < \betak$, and due to matroid properties $S' \in \cI$. Therefore, $\betak$ would not be the smallest such value. Next, we give a formal definition of the problem.

\begin{tcolorbox}[enhanced,title={\color{black} \textsc{Generalized Independent Multiway Cut}(~\gindmcut)}, colback=white, boxrule=0.4pt,
	attach boxed title to top center={xshift=-2cm, yshift*=-3mm},
	boxed title style={size=small,frame hidden,colback=white}]
	
	\textbf{Input:} An instance $(G, \cM, T, Q, \betak)$, where:
    \begin{itemize}[leftmargin=1.7cm]
        \item $G = (V, E)$ is a graph,
        \item $T \subseteq V(G)$ is a set of terminals, $Q \subseteq V(G)$ is a set of special vertices, with $T \subseteq Q$,
        \item Matroid $\cM = (U, \cI)$ of rank $r$, where $V(G) \setminus Q \subseteq U$,
        \item $0 \le \betak \le r$ 
    \end{itemize}
	\textbf{Output:} $\cF'(T, k, G) \repset{r-k} \cF(T, k, G)$.
\end{tcolorbox}

First, in \Cref{subsec:structural}, we analyze certain structural properties of solutions of \gimc, and use it to design an \fpt algorithm to find a so-called \emph{strong separator} for the instance, or to conclude that we have a \textsc{No}-instance. Then, in \Cref{subsec:recursive}, we use such a strong separator, and use to design a recursive algorithm for \gimc, and then show its correctness and that it is \fpt in $k$. 

\subsection{Finding a ``Strong Separator'' for \gimc} \label{subsec:structural}

In this subsection, we define the notion of a ``strong separator'' and how it interacts with any minimal multiway cut. Let $G$ be a graph and $T \subseteq V(G)$ be a set of terminals. Throughout the section, we will assume that $G$ is connected. Let $S \subseteq V(G)$ be a minimal multiway cut for $T$, and let $\cC_S$ denote the set of connected components of $G-S$. 
If a connected component $C \in \cC_S$ contains a terminal vertex then we call it a \emph{terminal component}; otherwise, we call it a \emph{non-terminal component}. Let $\cT_S$ denote the terminal components and $\cN_S$ denote the non-terminal components in $G-S$. Note that $\cC_S = \cT_S \uplus \cN_S$. We say that a $v \in S$ is adjacent to a component $C \in \cC_S$ if there exists some $u \in C$ such that $uv$ is an edge. 
For each $v \in S$, let $\cT(v, S) \subseteq \cT$ denote the set of terminal components that are adjacent to $v$. If the set $S$ is obvious from the context, then we denote $\cT(v, S)$ by $\cT_v$. Given two components $C_1$ and $C_2$ we say $C_1$ is adjacent to $C_2$ (and, vice versa) if there is a vertex in $C_1$ which has a neighbour in $C_2$. Because such a multiway cut $S$ (with certain additional properties, as stated below) will play a central role while designing the algorithm, we will often refer to it as a \emph{separator}, which is essentially a synonym for \emph{multiway cut for $T$}. We now define the notion of good and bad configuration corresponding to a minimal separator $S$.

\begin{definition}[Good and bad configurations]
	
	Let $G$ be a graph, $T \subseteq V(G)$ be a terminal set with $|T| \geq 3$, and $S$ be a minimal multiway cut $S$ for $T$. We refer to $\tup{G, T, S}$ as a \emph{configuration}. If there is a component $C$ in $G-S$ that is adjacent to each connected component in $G[S]$, then we say that $C$ is a \emph{large component}. We say $\tup{G, T, S}$ is a {\em bad configuration}, if (i) each component in $G[S]$ is adjacent to exactly two terminal components in $G-S$, and (ii) there is a large component \footnote{Observe that, since $|T| \ge 3$, in a bad configuration, there is exactly one large component.}. Otherwise, we say that $\tup{G, T, S}$ is a {\em good configuration}. 
\end{definition}

The definition of good and bad components easily lends itself to design a straightforward polynomial-time algorithm to check whether, corresponding to a given separator $S$, whether $\tup{G, T, S}$ is a good or a bad configuration. We state this in the following observation.

\begin{observation} \label{obs:goodconfigcheck}
    There exists a polynomial-time algorithm that takes input a graph $G$, and a set of terminals $T \subseteq V(G)$, and a minimal multiway cut $S \subseteq V(G) \setminus T$, and outputs whether $\tup{G, T, S}$ is a good or a bad configuration.
\end{observation}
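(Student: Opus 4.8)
The plan is to observe that both conditions defining a bad configuration are purely statements about connected components and their adjacencies, and all of these objects are computable by elementary graph searches. First I would compute the connected components of $G-S$ with a single traversal (BFS or DFS); the terminal components $\cT_S$ are exactly those containing a vertex of $T$, and the remaining ones form $\cN_S$. Then I would compute the connected components of $G[S]$ by another traversal. With the component labels in hand, I would build, in one scan over $E(G)$, a table recording for each pair $(D,C)$---where $D$ is a component of $G[S]$ and $C$ is a component of $G-S$---whether there is an edge between $D$ and $C$; this costs $\Oh(|V(G)|+|E(G)|)$ time.

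Condition (i) is then checked by iterating over the components $D$ of $G[S]$ and counting, via the table, how many members of $\cT_S$ are adjacent to $D$, verifying this count equals exactly two for every $D$. Condition (ii)---the existence of a large component---is checked by scanning the components $C$ of $G-S$ and testing whether some $C$ is adjacent to every component of $G[S]$. The algorithm outputs \textsf{bad} precisely when both (i) and (ii) hold, and \textsf{good} otherwise, which matches the definition verbatim, so correctness is immediate; the total running time is $\Oh(|V(G)|+|E(G)|)$, hence polynomial.

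There is no genuine obstacle here, since $S$ is supplied as part of the input rather than something we must search for: the only points requiring care are labelling the components consistently, and using the correct notion (terminal versus arbitrary) of component in condition (i). I would also remark that minimality of $S$ plays no role in the check itself---it is only needed for the structural consequences of good/bad configurations derived later---so the same procedure in fact works for an arbitrary multiway cut $S$.
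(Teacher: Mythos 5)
Your proposal is correct and is exactly the routine component-and-adjacency computation that the paper considers immediate (it states the observation without proof). Your implementation details and the remark that minimality of $S$ is irrelevant to the check are both accurate.
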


In the following claims, we prove an important property of a good and a bad configuration, respectively.

\begin{figure}[ht!]
	\centering
	\begin{subfigure}{.5\textwidth}
		\centering
		\includegraphics[width=.7\linewidth]{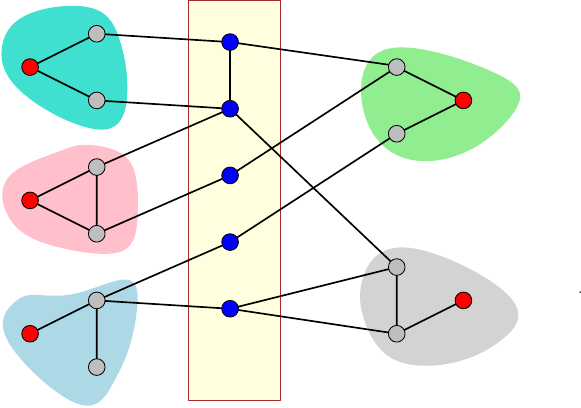}
		\label{fig:sub1}
	\end{subfigure}%
	\begin{subfigure}{.5\textwidth}
		\centering
		\includegraphics[width=.7\linewidth]{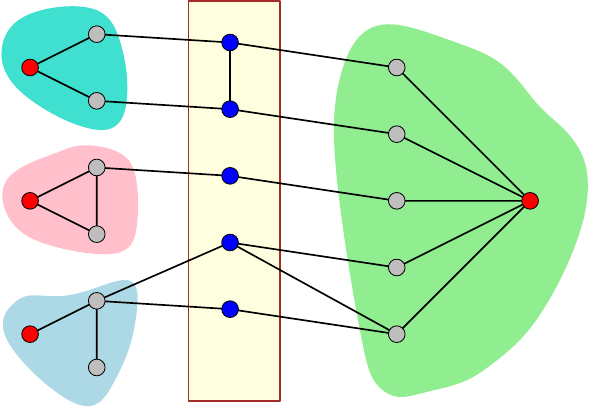}
		
		\label{fig:sub2}
	\end{subfigure}
	\caption{Examples of good (left) and bad (right) configurations. The blue and red vertices denote the separator and terminal vertices, respectively.}\label{fig:test1}
\end{figure}

\begin{claim}\label{cl:goodconfig}
	If $\tup{G,T,S}$ is a good configuration, then there is no multiway cut   that is fully contained in a terminal component in $G-S$. 
\end{claim}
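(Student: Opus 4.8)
The plan is to argue by contradiction. Suppose some multiway cut $O$ for $T$ satisfies $O \subseteq V(C_1)$, where $C_1$ is a terminal component of $G-S$; I will show this forces $\tup{G,T,S}$ to be a bad configuration, contradicting the hypothesis. Since $\tup{G,T,S}$ is a configuration we have $|T|\ge 3$; write the terminals as $t_1,\dots,t_m$ with $m\ge 3$, and let $C_i$ be the component of $G-S$ containing $t_i$. As $S$ is a multiway cut these components are pairwise distinct, and $\{C_1,\dots,C_m\}$ is exactly the set of terminal components of $G-S$; say $C_1$ is the one containing $O$.

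The first ingredient is the standard consequence of minimality of $S$: every $v\in S$, and hence every connected component $K$ of $G[S]$, is adjacent to at least two terminal components of $G-S$. Indeed, since $S\setminus\{v\}$ is not a multiway cut, there is a path $P$ in $G-(S\setminus\{v\})$ between two terminals $t_a,t_b$; because $S$ separates $t_a$ from $t_b$, the path $P$ passes through $v$ and through no other vertex of $S$, so the portion of $P$ before $v$ stays inside $C_a$ and the portion after stays inside $C_b$, with $C_a\ne C_b$ (they are distinct components of $G-S$). Hence $v$ is adjacent to the two distinct terminal components $C_a$ and $C_b$.

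The crux is to use that $O$ lies entirely inside $C_1$, so that any walk that stays outside $C_1$ automatically avoids $O$. First, each component $K$ of $G[S]$ can be adjacent to at most one of $C_2,\dots,C_m$: if $K$ had neighbours in both $C_i$ and $C_j$ for some $2\le i<j\le m$, then concatenating a path inside $C_i$ from $t_i$ to a neighbour of $K$, a path inside $K$, and a path inside $C_j$ from a neighbour of $K$ to $t_j$ would give a $t_i$--$t_j$ walk lying in $C_i\cup K\cup C_j$, which is disjoint from $C_1\supseteq O$; this contradicts that $O$ separates $t_i$ from $t_j$. Combining this with the minimality ingredient, each component $K$ of $G[S]$ is adjacent to exactly two terminal components, namely $C_1$ and exactly one other $C_{i(K)}$ with $i(K)\ge 2$ --- this is condition~(i) of a bad configuration. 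Moreover, every component of $G[S]$ is then adjacent to $C_1$, so $C_1$ is adjacent to every component of $G[S]$, i.e.\ $C_1$ is a large component, which is condition~(ii). Hence $\tup{G,T,S}$ is bad, the desired contradiction.

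I do not expect a genuine obstacle; the two places that need care are (a) arguing from minimality that the connecting path for $S\setminus\{v\}$ both enters and leaves $v$ within (distinct) \emph{terminal} components, so that ``adjacent to at least two terminal components'' is legitimate, and (b) checking that deleting $O\subseteq V(C_1)$ leaves the components $C_2,\dots,C_m$ and all of $G[S]$ intact, so that the reconnecting walks used in the contradiction steps really live in $G-O$. Both reduce to bookkeeping of which vertices sit in which component of $G-S$.
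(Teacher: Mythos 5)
Your proof is correct and uses the same two ingredients as the paper's: minimality of $S$ gives that every component of $G[S]$ meets at least two terminal components, and any component of $G[S]$ adjacent to two terminal components other than the one containing the putative cut yields a surviving terminal-to-terminal path. You merely organize it contrapositively (deriving that the configuration is bad) where the paper does a case split on the negation of badness, and you additionally supply the short minimality argument that the paper only asserts.
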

\begin{claimproof}
	For a contradiction, assume that there is a multiway cut $S'$ and  a terminal component $C$ in $G-S$ such that $S' \subseteq V(C)$. We know that each component in $G[S]$ is adjacent to at least two terminal components.	
	As $\tup{G, T, S}$ is a good configuration, that means that either (i) there is a component $C_S$ in $G[S]$ such that $C_S$ is adjacent to at least three terminal components, or (ii) each component in $G[S]$ is adjacent to exactly two terminal components, but there is no large component. We analyze each case separately, and show that we arrive at a contradiction.
	
	In case (i), let $C_1, C_2, C_3$ be such three components in $G-S$ that are adjacent to the vertices of $C_S$. Among $C_1, C_2, C_3$, at most one can be $C$ -- hence at least two of them, say $C_1$, and $C_2$ must have empty intersection with the set $S'$. However, since $S' \subseteq C$, it follows that $S' \cap S =\emptyset$. This implies that, in $G-S'$, there is a path between the two terminals contained inside the components  $C_1$ and $C_2$, respectively. This is contradiction to the fact that $S'$ is a multiway cut for $T$. In case (ii), we know that each component in $G[S]$ is adjacent to exactly two terminal components, but there is no large component. Recall that $S' \subseteq V(C)$ for some terminal component $C$. By the case assumption $C$ is not a large component, which implies that there is a component $C'_S$ in $G[S]$ such that $C$ is not adjacent to $C'_S$. Also we know $C'_S$ is adjacent to at least two terminal components $C_1, C_2$, and none of them is $C$. As $S' \subseteq V(C)$ and $S \cap S' \neq \emptyset$, it follows that, in $G-S'$, there is path between two terminals contained in $C_1, C_2$, respectively, which again contradicts that $S'$ is a multiway cut for $T$. Hence, the claim follows.
\end{claimproof}

\begin{claim}\label{cl:badconfig}
	If $\tup{G, T, S}$ is a bad configuration then, except possibly the large component, no other terminal component contain a multiway cut.
\end{claim}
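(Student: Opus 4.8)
The plan is to argue by contradiction, reusing essentially the same connectivity argument as in case (ii) of the proof of \Cref{cl:goodconfig}. Assume $\tup{G, T, S}$ is a bad configuration, and suppose for contradiction that some terminal component $C$ of $G - S$, distinct from the large component, contains a multiway cut $S'$; that is, $S' \subseteq V(C)$. Since a bad configuration has exactly one large component (the footnote following the definition of a bad configuration), the fact that $C$ is not that component means that $C$ is not large at all. Hence, by the definition of a large component, there is a connected component $C_S$ of $G[S]$ to which $C$ is not adjacent.

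First I would record the consequences of this. By property (i) of a bad configuration, $C_S$ is adjacent to exactly two (hence distinct) terminal components, say $C_1$ and $C_2$. Since $C$ is not adjacent to $C_S$, neither $C_1$ nor $C_2$ equals $C$. Fix terminals $t_1 \in V(C_1) \cap T$ and $t_2 \in V(C_2) \cap T$; these exist because $C_1, C_2$ are terminal components, and $t_1 \neq t_2$ because $C_1 \neq C_2$.

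The crux is then to exhibit a $t_1$--$t_2$ path avoiding $S'$. Going from $t_1$ inside the connected subgraph $C_1$ to a vertex having a neighbour in $C_S$, then through the connected subgraph $C_S$ to a vertex having a neighbour in $C_2$, and finally inside $C_2$ to $t_2$, produces a walk (hence a path) all of whose vertices lie in $V(C_1) \cup V(C_S) \cup V(C_2)$. As $C$ is a connected component of $G - S$, its vertex set is disjoint from $V(C_1)$, from $V(C_2)$, and from $V(C_S) \subseteq S$; since $S' \subseteq V(C)$, the path therefore avoids $S'$ entirely. This contradicts $S'$ being a multiway cut for $T$, and the claim follows.

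I do not anticipate a genuine obstacle here: the argument is a direct adaptation of the good-configuration case. The two points that require a little care are (a) upgrading ``$C$ is not the large component'' to ``$C$ is not large'', which is exactly where the uniqueness guaranteed by $|T| \ge 3$ is used, and (b) the disjointness bookkeeping ensuring that $C_1, C_2 \neq C$ and that the constructed path meets neither $S'$ nor any terminal other than $t_1, t_2$.
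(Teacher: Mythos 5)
Your proof is correct and takes essentially the same approach as the paper's: assume for contradiction that a terminal component $C$ other than the large one contains a multiway cut $S'$, and exhibit a path between two terminals lying entirely outside $V(C)\supseteq S'$. The only difference is the route --- the paper connects a terminal of the large component to a terminal of an arbitrary third terminal component (using $|T|\ge 3$ directly and the largeness of $C_L$), whereas you reuse case (ii) of the proof of \Cref{cl:goodconfig}, first upgrading ``not the large component'' to ``not large'' via the uniqueness footnote and then routing through a $G[S]$-component not adjacent to $C$ and the two terminal components flanking it; both choices are valid.
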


\begin{claimproof}
	Assume that $\tup{G, T, S}$ is a bad configuration with $C$ as large component. Suppose for the contradiction, assume that some $C_1$, distinct from $C$ is a terminal component that contains a multiway cut, say $S'$. Let $C_2$ be a terminal component (arbitrary choosen) other than $C$ and $C_1$ (note that such a component must exist since $|T| \ge 3$). As $C$ is adjacent to each component in $G[S]$, hence in $G-S'$, there is path between the two terminals that are contained in $C$ and $C_2$, respectively, which contradicts that $S'$ is a multiway cut. Hence, there is no multiway cut in $G-S$ that is fully contained in a terminal component that is not large.
\end{claimproof}

Next, we define the notion of \typea and \typeb scenarios corresponding to a minimal separator $S$.

\begin{definition}[\typea and \typeb scenarios] \label{def:scenarios}
	Given a graph $G$, a terminal set $T \subseteq V(G)$, and a minimal multiway cut $S$,  we say   the separator $S$ creates  a {\em \typea scenario} when no terminal component in $G-S $ can contain a multiway cut of size at most $|S|$. Otherwise, we say it creates a {\em \typeb scenario}.
\end{definition}

\begin{figure}[ht!]
	\centering
	\begin{subfigure}{.5\textwidth}
		\centering
		\includegraphics[width=.7\linewidth]{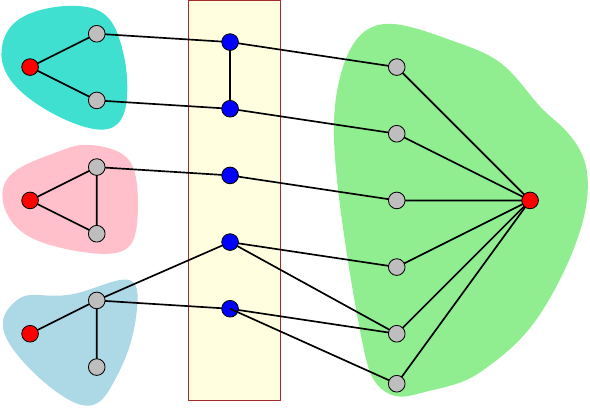}
		\label{fig:sub3}
	\end{subfigure}%
	\begin{subfigure}{.5\textwidth}
		\centering
		\includegraphics[width=.7\linewidth]{6.pdf}
		
		\label{fig:sub4}
	\end{subfigure}
	\caption{Example of a \typea (left) and \typeb (right) scenario. Blue and red vertices denote separator, and terminal vertices, respectively.}
	\label{fig:test2}
\end{figure}

\begin{claim}\label{obs:type}
	Given a graph $G$, a terminal set $T \subseteq V(G)$, and a minimal  multiway cut $S$ of size $\betak$, in $2^\betak \cdot  n^{\cO(1)}$ time we can check whether 	$S$ creates  an \typea scenario or not.
\end{claim}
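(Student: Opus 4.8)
The plan is to turn this into at most $|T|$ independent calls to a known \textsf{FPT} algorithm for (ordinary) vertex multiway cut. Recall from \Cref{def:scenarios} that $S$ creates a \typea scenario exactly when no terminal component $C$ of $G-S$ contains a multiway cut for $T$ of size at most $|S| = \betak$. Since $S$ is a multiway cut, each terminal lies in its own component of $G-S$, so there are at most $|T| \le n$ terminal components, and it suffices to decide, for each terminal component $C$ separately, whether there is a multiway cut $S' \subseteq V(C)$ for $T$ in $G$ with $|S'| \le \betak$ (note $S'\cap T=\emptyset$ automatically, so in fact $S'\subseteq V(C)\setminus T$); $S$ creates a \typea scenario iff all $|T|$ of these tests fail.

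For a fixed terminal component $C$, I would build an auxiliary graph $G_C$ that ``freezes'' everything outside $V(C)\setminus T$: starting from $G$, replace every non-terminal vertex of $V(G)\setminus V(C)$ by $\betak+1$ pairwise non-adjacent copies with the same neighbourhood, leaving all other vertices (in particular all of $T$ and all of $V(C)$) unchanged. This is the standard gadget that makes a vertex uncuttable by a vertex cut of size at most $\betak$. I then claim that $G_C$ has a vertex multiway cut for $T$ of size at most $\betak$ if and only if $C$ contains one in $G$ of size at most $\betak$. For one direction, a multiway cut of $G_C$ of size at most $\betak$ can be assumed to avoid all twin-copies (at least one copy of each survives and already realises all adjacencies of the original vertex, so deleting a copy from the cut keeps it a multiway cut) and avoids $T$, hence lies in $V(C)\setminus T$; for the other direction, contracting each group of copies recovers $G$, and since all copies of a vertex are always present together, $G_C - S'$ and $G - S'$ induce the same connectivity on the (unmodified) terminals for every $S'\subseteq V(C)\setminus T$.

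With this equivalence in hand, I would simply run the $\Oh^*(2^{\betak})$-time algorithm of Cygan et al.~\cite{CyganGH17} for \textsc{Vertex Multiway Cut} on $G_C$ with budget $\betak$, for each terminal component $C$. The key point is that this running time depends only on the budget, not on $|T|$, which is essential because $|T|$ may be large. Since $G_C$ has $\Oh(\betak n)$ vertices and there are at most $|T|\le n$ terminal components, the overall time is $|T|\cdot 2^{\betak}\cdot(\betak n)^{\Oh(1)} = 2^{\betak}\cdot n^{\Oh(1)}$, and the instance creates a \typea scenario exactly when every run reports ``no multiway cut of size at most $\betak$''.

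The part that needs the most care is the correctness of the twin gadget and the equivalence of the two notions of ``$C$ contains a multiway cut'': namely, that freezing everything outside $V(C)\setminus T$ neither destroys a cut that genuinely lives inside $V(C)$ nor creates a spurious one, and in particular that a small cut inside $V(C)$ in $G_C$ really separates \emph{all} terminal pairs of $G$ — including pairs that, a priori, might be reconnected through $V(C)$ once part of $V(C)$ is deleted. This last subtlety is precisely why the reduction goes through the full multiway cut instance on $G_C$ rather than through a single $(\tau_C,\cdot)$-cut computation inside $C$.
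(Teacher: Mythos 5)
Your proof is correct, but it follows a genuinely different route from the paper's. The paper first invokes \Cref{cl:goodconfig} and \Cref{cl:badconfig}: in a good configuration no terminal component can contain a multiway cut, so the answer is immediately ``\typea''; in a bad configuration only the large component $C_L$ can possibly contain one, so a single test suffices. That one test is then performed by restricting to $G[S\cup V(C_L)]$, contracting each connected component of $G[S]$ into a fresh terminal, and running the \fpt multiway-cut algorithm on this smaller instance. You instead ignore the good/bad dichotomy entirely and test \emph{every} terminal component, reducing each test ``is there a multiway cut for $T$ inside $V(C)$?'' to a plain \textsc{Vertex Multiway Cut} instance on the whole graph via the standard $(\betak+1)$-false-twin gadget that makes all vertices outside $V(C)\setminus T$ undeletable. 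Your version is more self-contained and, arguably, safer: it checks the defining condition of a \typea scenario literally, whereas the paper's contraction-based reduction asserts an equivalence (``$G$ has a multiway cut of size $\betak$ inside $C_L$ iff $G_2$ has one for $T'$'') whose forward direction is delicate — two separator components of $G[S]$ that lead to the \emph{same} non-large terminal component may become connected through $C_L$ after the deletion without reconnecting any two distinct terminals of $G$, so the contracted instance could in principle reject a cut that is valid in $G$. The price you pay is up to $|T|$ calls to the multiway-cut solver instead of one, which, as you correctly observe, is harmless because the $\Oh^*(2^{\betak})$ bound of the cited algorithm depends only on the budget; both approaches therefore meet the stated $2^{\betak}\cdot n^{\Oh(1)}$ bound.
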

\begin{claimproof}
	If $\tup{G, T, S}$ is a good configuration, then by \Cref{cl:goodconfig} we know that there is no multiway cut   that is fully contained in a terminal component in $G-S$. Hence, in this case $S$ creates \typea  scenario. Now, assume that $\tup{G, T, S}$ is a bad configuration. Now by \Cref{cl:badconfig} we see that except possibly the large component no other terminal component contains a multiway cut. Now to check whether large component contain a multiway cut we do the following. Let $C$ denote the large component. We take the subgraph $G_1 \coloneqq G[S \cup V(C)]$. Let $S_1, S_2, \ldots, S_q$ denote the partition of $S$ such that each $S_i$ corresponds to a connected component of $G[S]$. In the subgraph $G_1$, we contract each $S_i$, $1\le i \le q$ into a single vertex $s_i$. Let $G_2$ be the new graph after contraction. Let  $T' = \LR{s_1, \ldots, s_q, t}$. Note that $q \coloneqq |S'| \le |S| \le \betak$. Now we check whether there exists a multiway cut of size at most $\betak$ in the graph $G_2$ with terminal set $T'$. If \no then we conclude that   $S$ creates  a \typea  scenario in $G$ for the terminal set $T$. Otherwise, we have \typeb scenario. The correctness of the algorithm follows from the observation that $G$ has a multiway cut of size $\betak$ that is fully contained in $C$, if and only if $G_2$ has multiway cut of size at most $\betak$ with terminal set $T'$.
\end{claimproof}

Now we define a notion of a special type of multiway cut which we refer as strong separator which we use later in our algorithmic purpose.

\begin{definition}[Strong separator] \label{def:strongseparator}
	Given a configuration $\tup{G, T, S}$, we say $S$ is {\em strong separator} for $T$ in $G$ if (i) $\tup{G, T, S}$ is a good configuration, or (ii) $\tup{G, T, S}$ is a bad configuration, but $S$ creates a \typea scenario.	
\end{definition}

Now we prove an important result that leads to the design of a branching algorithm.

\begin{lemma}\label{lem:strongintersect}
	Let $(G, \cM, T, Q, \betak)$  be an instance of \gimc with $\betak\geq 2$, $|T| \ge 3$, and let $S \subseteq V$ be a strong separator of size at most $\betak$. There exists a polynomial-time algorithm to find a pair of terminal components in $G-S$ such that for any solution $O$, we either have $S \cap O \neq \emptyset $ or one of the components must contain at least one and at most $\betak-1$ vertices of $O$. 
\end{lemma}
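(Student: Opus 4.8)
The plan is to pick the two terminal components directly from the structure of $S$, and then to argue that any solution avoiding $S$ is forced to meet one of them in between $1$ and $\betak-1$ vertices. Write $\cT_S$ for the set of terminal components of $G-S$. Recall that, since $S$ is a minimal multiway cut, every connected component of $G[S]$ is adjacent to at least two members of $\cT_S$ --- this is the fact already invoked in the proof of \Cref{cl:goodconfig} (for $v\in S$, minimality of $S$ yields a path between two terminals that meets $S$ only at $v$, so the component of $G[S]$ through $v$ is adjacent to both of those terminal components). Accordingly I would fix any connected component $C_S$ of $G[S]$ and let $C_1,C_2\in\cT_S$ be two distinct terminal components adjacent to $C_S$; this choice is plainly computable in polynomial time, so it only remains to verify the stated property of $(C_1,C_2)$.

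The first ingredient I would establish is a consequence of $S$ being a strong separator: no multiway cut of $G$ of size at most $\betak$ is entirely contained in a single terminal component of $G-S$. If $\tup{G,T,S}$ is a good configuration, this is exactly \Cref{cl:goodconfig}. If $\tup{G,T,S}$ is a bad configuration, then by \Cref{def:strongseparator} it creates a \typea scenario, so by \Cref{cl:badconfig} the only terminal component that could contain a multiway cut is the unique large component $C$, while the \typea test (as carried out in the proof of \Cref{obs:type}) certifies that $C$ contains no multiway cut of size at most $\betak$; combining these yields the claim. In particular, every solution $O$ (a minimal multiway cut for $T$ with $|O|=\betak$ and $O\in\cI$) satisfies $O\not\subseteq D$ for all $D\in\cT_S$, and hence $|O\cap D|\le\betak-1$ for all $D\in\cT_S$.

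Now fix a solution $O$ and suppose $O\cap S=\emptyset$; I must produce $C_i\in\{C_1,C_2\}$ with $1\le|O\cap C_i|\le\betak-1$. Since $C_S$ is connected and disjoint from $O$, all of $C_S$ lies in a single component $K$ of $G-O$. Call a terminal component $D\in\cT_S$ \emph{active} if $O\cap D=\emptyset$. The crux is the following: at most one terminal component adjacent to $C_S$ is active. Indeed, suppose $D\ne D'$ are both adjacent to $C_S$ and both active; then $D$ and $D'$ are untouched by $O$, so each is connected in $G-O$ and contains its terminal ($t_D$, resp.\ $t_{D'}$), and each has a vertex adjacent to $C_S$ that lies outside $O$; since $C_S\subseteq K$, that vertex, and therefore $t_D$ (resp.\ $t_{D'}$), lies in $K$, so $K$ contains two terminals --- contradicting that $O$ is a multiway cut for $T$. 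As $C_1$ and $C_2$ are both adjacent to $C_S$, at most one of them is active, so $O$ meets at least one of them, say $C_i$, giving $|O\cap C_i|\ge1$; and $|O\cap C_i|\le\betak-1$ by the consequence established above. This proves the lemma.

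The step I expect to be the main obstacle is the ``strong separator forbids a solution hidden inside one terminal component'' claim, and within it the bad-configuration case. There is a small bookkeeping point to pin down: \Cref{def:scenarios} phrases the \typea property using the threshold $|S|$, whereas the argument needs the threshold $\betak$; this is harmless provided the \typea test is run with threshold $\betak$, which is precisely the test used in the proof of \Cref{obs:type}, and no such care is needed in the good-configuration case since \Cref{cl:goodconfig} forbids multiway cuts of every size inside a terminal component. Everything else is a routine connectivity argument.
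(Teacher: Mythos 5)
Your proof is correct and takes essentially the same approach as the paper: both identify two terminal components that flank a piece of $S$ (the paper takes the two components joined, via minimality, by a path meeting $S$ only at a single vertex $v$; you take two components adjacent to a connected component of $G[S]$, which exist for the same reason), and both then invoke the strong-separator property to rule out $O$ being wholly contained in either component. Your connectivity argument in $G-O$ is a mild repackaging of the paper's explicit-path argument, and the threshold caveat you flag about type-1 scenarios is present (implicitly) in the paper's proof as well, so nothing is lost.
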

\begin{proof}
	Assume that  $(G, \cM, T, Q, \betak, q)$  be an instance of \probmwcut  and    $S \subseteq V$ be a strong separator of size at most  $\betak$. 	 Due to the minimality of $S$, for each vertex $v \in S$ the set $S \setminus \{v\}$ is not a separator. Fix an arbitrary $v \in S$. It follows that, there exists a pair of terminals $t_1, t_2$, contained in terminal components $T_1, T_2$ respectively, such that we have  a path $P$ connecting $t_1$ and $t_2$ in $G$ such that the path intersects exactly once which is exactly at the vertex  $v$. Furthermore, $V(P) \subseteq V(T_1) \cup V(T_2) \cup \LR{v}$.
	Since $O$ is also a multiway cut for $T$, $O \cap V(P) \neq \emptyset$, which implies that $O \cap (V(T_1) \cup V(T_2) \cup \LR{v}) \neq \emptyset$. If $v \in O$ then we are done, as in this case $S \cap O \neq \emptyset$. Otherwise, $O \subseteq V(T_1 \cup T_2)$. However, since $S$ is strong separator, we know that $O \not\subseteq V(T_1)$, and $O \not\subseteq V(T_2)$. Hence, either $1 \leq  |V(T_1) \cap O| \leq \betak-1$ or $1 \leq  |V(T_2) \cap O| \leq \betak-1$. This completes the proof.
\end{proof}

\paragraph{Finding a Strong Separator}

In this section, we show that given a graph $G$ a terminal set $T$, an integer $\betak$ how to find a strong separator of size at most $t$ in $2^\betak \cdot n^{\cO(1)}$ time. First, we use a known \fpt algorithm of Cygan et al.~\cite{CyganPPW13}---that runs in time $2^\betak \cdot n^{\Oh(1)}$---to check whether $G$ has a minimal multiway cut of size at most $\betak$. Either this algorithm returns that there is no multiway cut of size $\betak$, or it returns a (w.l.o.g. minimal) multiway cut $S \subseteq V(G)$ of size at most $\betak$. In the former case, since there is no multiway cut of size $\betak$, there is no strong separator of size $\betak$ either -- hence we return that $G$ does not have a strong separator. Otherwise, in the latter case, we proceed to find a strong separator using the minimal separator $S$ returned by the algorithm, as described below. First, using \Cref{obs:goodconfigcheck}, we can check in polynomial-time whether $\tup{G, T, S}$ is a good or a bad configuration, and proceed to either of the following two cases.

\begin{description}
	\item[Case A. ($\tup{G,T, S}$ is a good configuration)] 
 
 In this case, we are done as per, that is, $S$ is a strong separator, as per our \cref{def:strongseparator}.
	
	\medskip 
	
	\item[Case B. ($\tup{G,T, S}$ is a bad configuration)] In this case, we can use \Cref{obs:type} to check in time $2^\betak \cdot n^{\Oh(1)}$ whether $S$ creates a type 1 or a type 2 scenario. In the former case, $S$ is a strong separator, and we are done. Thus, we are left with the case when $S$ creates a type 2 scenario, i.e., each component in $G[S]$ is adjacent to exactly two terminal components in $G-S$, and there is a large terminal component in $G-S$, say $C_L$. Assume that the component $C_L$ contains the terminal $t_L$. Note that via   If type 1 then we say that  $S$ is strong separator. Else 
    We try to find another separator of size $\betak$ which is fully contained in the large component using following procedure. Let $G_1 \coloneqq G-C_L$ and $G_2 \coloneqq G[S \cup C_L]$. Now consider the graph $G_2$.  For a terminal $t \in T \setminus \{t_L\}$, let $S_t \subseteq S$ denote the set of vertices in $S$ that are reachable from $t$ in the graph $G_1$. Now for each terminal vertex $t \in T \setminus \{t_L\}$ we identify all the vertices in $S_t$ to obtain a single merged vertex $v_t$. The merging operation preserves adjacencies towards the terminal components, i.e., there was a vertex $u\in V(G_2)$ such that $u$ has a neighbour in $S_t$ if and only if $u$ and $v_t$ are neighbors in the resulting graph.  Consider a  modified graph, say $H$,  with  vertex set  $\{v_t: t \in T \setminus \{t_L\}\} \cup V(C(t_L))$. We then find another minimal multiway cut of size $|S|$, with the new terminal set $T':= \{t_L\} \cup \{v_t: t \in T \setminus \{t_L\}\}$. The outcome is one of the following two possibilities: (i) either we find that there is no separator for $T'$ in $H$, or (ii) we find another separator $S^1$ for $T'$ in $H$. Again, we return to the beginning of this algorithm, by treating $S \gets S^1$, $T \gets T'$, and $H \gets G$, and check whether the new $\tup{G, T, S}$, i.e., $\tup{H, T', S^1}$ is a good and bad configuration, and so on.
    Suppose that in a particular iteration, let $n_T$ denote the number of terminal vertices in current graph. Then in next iteration, the  size of the vertex set strictly decreases, more specifically it decreases by at least $(n_T-1)$. Hence, this algorithm runs for at most $n$ iterations, and eventually we find a vertex set $S'$. Below we show that the set $S'$ is indeed a strong separator for the original set of terminals $T$ in the original graph $G$. 
\end{description}

\begin{claim}\label{claim:strong}
    Let $S'$ be the vertex set  returned by the aforementioned procedure. Then $S'$ must be a strong separator for $T$ in $G$.
\end{claim}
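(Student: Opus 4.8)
The plan is to follow the sequence of instances produced by the procedure and show that a strong separator for the final, most contracted instance is already a strong separator for the original one. Write $(H^{(0)},T^{(0)},S^{(0)})=(G,T,S)$, and whenever $\langle H^{(i)},T^{(i)},S^{(i)}\rangle$ is a bad configuration and $S^{(i)}$ creates a \typeb scenario, let $(H^{(i+1)},T^{(i+1)})$ be the contracted instance built from it (with large component $C_L^{(i)}$ containing terminal $t_L^{(i)}$) and $S^{(i+1)}$ the minimal multiway cut the procedure then finds in $H^{(i+1)}$. As already argued, $|V(H^{(i)})|$ strictly decreases, so the sequence ends at some index $j$ with $S':=S^{(j)}$. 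Since $|T^{(i+1)}|=|T^{(i)}|$ and, for $i\ge 1$, $S^{(i)}\subseteq V(C_L^{(i-1)})\setminus\{t_L^{(i-1)}\}\subseteq V(G)$, we have $|T^{(i)}|\ge 3$ throughout and $S'\subseteq V(G)$, so the statement ``$S'$ is a strong separator for $T$ in $G$'' at least makes sense. The procedure halts at $j$ only if $\langle H^{(j)},T^{(j)},S^{(j)}\rangle$ is a good configuration, or is a bad configuration in which $S^{(j)}$ creates a \typea scenario, or the search for a multiway cut of $T^{(j+1)}$ in $H^{(j+1)}$ came up empty; the first two cases are exactly the definition of $S'$ being a strong separator for $T^{(j)}$ in $H^{(j)}$ (\Cref{def:strongseparator}), and the third case will be ruled out below.

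The key technical tool I would prove once and reuse is a correspondence between multiway cuts of an instance and of its contraction. Fix $i$ with $\langle H^{(i)},T^{(i)},S^{(i)}\rangle$ a bad configuration creating a \typeb scenario; abbreviate $H=H^{(i)}$, $T=T^{(i)}$, $S=S^{(i)}$, $C_L=C_L^{(i)}$, $t_L=t_L^{(i)}$, $G_1=H-C_L$, $H'=H^{(i+1)}$, $T'=T^{(i+1)}$, and recall that for $t\in T\setminus\{t_L\}$ the vertex $v_t\in V(H')$ has $N_{H'}(v_t)=N_H(S_t)\cap V(C_L)$, where $S_t$ is the set of vertices of $S$ reachable from $t$ in $G_1$. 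I claim that for every $Z\subseteq V(C_L)\setminus\{t_L\}$ with $|Z|\le\betak$, $Z$ is a (minimal) multiway cut for $T$ in $H$ if and only if it is a (minimal) multiway cut for $T'$ in $H'$. The argument: because $S$ creates a \typeb scenario, by \Cref{cl:badconfig} some multiway cut $W$ of size at most $|S|$ lies inside $C_L$, and since such a $W$ must separate every pair of terminals of $T\setminus\{t_L\}$ while being disjoint from $G_1$, no two of these terminals lie in the same component of $G_1$; hence the sets $\{S_t\}$ are pairwise disjoint and the $v_t$ are distinct. For $Z\subseteq V(C_L)$ the subgraph $G_1$ is untouched, so any $t$--$t'$ path (resp.\ $t_L$--$t$ path) in $H-Z$ must leave $C_L$ only through $S$: it goes from $t$ through $G_1$ to a vertex of $S_t$, to an attachment vertex in $N_H(S_t)\cap V(C_L)$, through $C_L-Z$ to an attachment vertex in $N_H(S_{t'})\cap V(C_L)$ (resp.\ to $t_L$), possibly with $G_1$-excursions that return at attachment vertices of the same terminal; such a path exists precisely when the relevant attachment sets (resp.\ $\{t_L\}$) lie in the same component of $C_L-Z$, i.e.\ precisely when the analogous path exists in $H'-Z$. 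This gives the plain correspondence in both directions, and minimality follows by applying it to $Z$ and to $Z\setminus\{z\}$. Applying it with $Z=W$ shows the third halting case above is impossible, so $S'$ is a strong separator for $T^{(j)}$ in $H^{(j)}$.

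It remains to show that ``strong separator'' itself transfers along a single contraction step: if $Z\subseteq V(C_L)\setminus\{t_L\}$ is a strong separator for $T'$ in $H'$ then it is one for $T$ in $H$; feeding $Z=S'$ into this for $i=j-1,\dots,0$ then yields \Cref{claim:strong}. Since $Z\subseteq V(C_L)$ we have $H[Z]=H'[Z]$, so both graphs have the same connected components of $Z$. I would set up a bijection between the terminal components of $H-Z$ and those of $H'-Z$: the component of $t_L$ corresponds to the component $D_L$ of $t_L$ in $H'-Z$, and the component of each $t\in T\setminus\{t_L\}$ corresponds to the component $D_t$ of $v_t$; matched components have the same intersection with $V(C_L)$ and differ only in the inert $G_1$-part, respectively the single vertex $v_t$, and a piece of $C_L-Z$ not attached to any $S_t$ is a non-terminal component of both graphs. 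Using that $v_t$ is adjacent in $H'$ to a component $Z_a$ of $Z$ exactly when some vertex of $Z_a$ has an $H$-neighbour in $S_t$, this bijection preserves, for every $Z_a$, the set of terminal components $Z_a$ is adjacent to, and the existence of a component adjacent to all of $Z$. Hence $\langle H,T,Z\rangle$ is a good (resp.\ bad) configuration exactly when $\langle H',T',Z\rangle$ is; and in the bad case, by \Cref{cl:badconfig} a multiway cut contained in a terminal component must live in the (unique) large one of either graph, and by the multiway-cut correspondence of the previous paragraph a multiway cut of size $\le\betak$ inside the large component of $H'-Z$ transfers to one inside the large component of $H-Z$ and vice versa, so $Z$ creates a \typea scenario in $H$ iff it does in $H'$. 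Combining, $Z$ is a strong separator for $T$ in $H$.

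The main obstacle is this last step: carefully matching the component structures of $H-Z$ and $H'-Z$ and verifying that the contracted vertices $v_t$ record exactly the adjacencies of the discarded part $G_1$ to $C_L$, so that the good/bad-configuration test and the \typea/\typeb test are invariant under the contraction. Everything else — termination, the reduction to a single lifting step, and the multiway-cut correspondence — is routine bookkeeping of essentially the same flavour as the proof of \Cref{obs:type}.
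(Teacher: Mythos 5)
Your overall strategy --- lift ``strong separator'' back one contraction step at a time, after first establishing a correspondence between multiway cuts of $(H,T)$ and of $(H',T')$ that are contained in $V(C_L)\setminus\{t_L\}$ --- is essentially a more careful version of the paper's argument (whose justification of strongness is a one-line appeal to the termination condition), and your multiway-cut correspondence, the termination analysis, and the good/bad-configuration invariance are sound. However, there is a genuine gap in the \typea/\typeb invariance, and it sits exactly at the direction you need. To get ``$Z$ strong in $H'$ implies $Z$ strong in $H$'' you must show the contrapositive ``$Z$ creates a \typeb scenario in $H$ implies it creates one in $H'$'': a multiway cut $W'$ for $T$ in $H$ of size at most $|Z|$ contained in the large terminal component $D_L$ of $H-Z$ must yield one inside the large component of $H'-Z$. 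You justify this by the multiway-cut correspondence ``and vice versa'', but that correspondence was proved only for cuts that are subsets of $V(C_L)\setminus\{t_L\}$. When $D_L$ is the component of $H-Z$ containing a terminal $t\neq t_L$, it also contains the entire $G_1$-component of $t$ (that is, $C_t$, $S_t$, and whatever hangs off $S_t$), and $W'$ may use those vertices: for example, $W'$ can separate $t$ from everything else cheaply by cutting a bottleneck inside $C_t$, whereas in $H'$ all of $S_t$ has been collapsed into the terminal $v_t$, which can only be isolated by cutting around every attachment point of $S_t$ in $C_L$. Consequently $W'$ is not covered by your correspondence, and the naive projection $W'\cap V(C_L)$ need not be a multiway cut for $T'$ in $H'$ (through-traffic between other terminals that $W'$ blocked inside $C_t\cup S_t$ reappears in $H'$ as traffic through the terminal $v_t$). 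The only case your argument does cover is when $D_L$ contains $t_L$, since then $D_L\subseteq V(C_L)$.

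This missing direction cannot be dismissed as bookkeeping: it is precisely the link between the algorithm's termination test, which is performed in the contracted instance, and the property that must be certified in $G$. If the \typea/\typeb status could differ between $\tup{H,T,Z}$ and $\tup{H',T',Z}$ in this direction, the procedure could halt on a separator that still creates a \typeb scenario in $G$, and \Cref{claim:strong} would fail. To repair the proof you would need an additional argument --- for instance, that a size-$\le|Z|$ multiway cut inside $D_L$ can always be replaced by one inside $D_L\cap V(C_L)$, or a direct construction of a cut inside the large component of $H'-Z$ from $W'$ --- neither of which is immediate. (In fairness, the paper's own proof asserts at this exact point that ``our process must be able to find another separator'' without addressing that its test is run in the contracted graph, so it leaves the same step unjustified.)
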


\begin{claimproof}
    First, we show that $S'$ is the separator for $T$ in $G$. Assume that our procedure runs for $\ell$ iterations, and let $S^0 = S, S^1, \ldots, S^\ell = S'$ be the separators found in each iteration.
    We show by induction that $S'$ is a separator for $T$. In the base case, trivially $S^0 = S$ is a separator for $T$ in $G$. Assume that for some $0 \le i \le \ell-1$, $S^i$ is the separator in $G$. Clearly $S^i$ is not a strong separator, otherwise we have returned $S^i$ in our procedure. As $S^i$ is not a strong separator, then by definition $\tup{G,T,S^i}$ is a bad configuration and $S^i$ creates a type 2 scenario. Now in our process, while working with the separator $S^i$, we consider a modified graph, say $H$, with a set of vertices $\{v_t: t \in T \setminus \{t_L\}\} \cup V(C(t_L))$ and find another minimal multiway cut $S^{i+1}$ with the set of terminals $T':= \{t_L\} \cup \{v_t: t \in T \setminus \{t_L\}\}$. For the sake of contradiction, assume that $S^{i+1}$ is not a separator for $T$ in $G$, which means that there is a pair of terminal vertices $s$ and $t$ such that there is a path $P_{st}$ between $s$ and $t$ in $G- S^{i+1}$. First, we consider the case where none of the $s$ and $t$ vertices is $t_L$. As $\tup{G,T,S^i}$ is a bad configuration, the path must consist of a pair of vertices of $S^{i+1}_s$ and $S^{i+1}_t$. Hence we got a subpath connecting $v_t$ and $v_s$ of the path $P_{st}$ in $H - S^{i+1}$, which is a contradiction to the fact that $S^{i+1}$ is a separator for the terminal set $\{v_t: t \in T \setminus \{t_L\}\}$ in $H$. The same argument holds when $t_L \in \{s,t\}$. Therefore, by induction, we have shown that $S^{\ell}$ is a separator for $T$ in $G$. Now, it remains to show that $S^{\ell}$ is a \emph{strong} separator for $T$ in $G$. Suppose not, then $\tup{G,T,S^{\ell}}$ must be a bad configuration, and $S^\ell$ must create a type 2 scenario. And our process must be able to find another separator $S''$ that is a contradiction to our termination condition. 
\end{claimproof}

\begin{lemma}\label{lem:strong}
	There exists an algorithm that takes an input an instance $(G, \cM, T, Q,  \betak)$ of \gimc, where $G$ is connected, $\betak\geq 2$, and $|T| \geq 3$,  and in time $2^\betak \cdot n^{\Oh(1)}$, either returns that $\cF(T, \betak, G) = \emptyset$, or returns a strong separator $S$ corresponding to the instance, and two components $C_s, C_t$ satisfying the property stated in \Cref{lem:strongintersect}. Here, $n = |V(G)|$.
\end{lemma}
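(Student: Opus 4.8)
The plan is to assemble the ingredients developed in this subsection into a single algorithm and to bound its running time. First I would invoke the deterministic \fpt algorithm of Cygan et al.~\cite{CyganPPW13} for \textsc{Multiway Cut}, which in time $2^{\betak}\cdot n^{\Oh(1)}$ either certifies that $G$ has no multiway cut for $T$ of size at most $\betak$, or returns one. In the first case no independent multiway cut of size $\betak$ can exist (independence is an extra constraint on the solution), so we output $\cF(T,\betak,G)=\emptyset$. In the second case, by greedily discarding redundant vertices we may assume the returned multiway cut $S$ is minimal with $|S|\le\betak$, and we pass it to the iterative refinement procedure described above.

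In each iteration we are given a connected graph $G$, a terminal set $T$ with $|T|\ge 3$, and a minimal multiway cut $S$ with $|S|\le\betak$. We first use \Cref{obs:goodconfigcheck} to decide in polynomial time whether $\tup{G,T,S}$ is good or bad. If it is good, then by \Cref{def:strongseparator} the current $S$ is already a strong separator and we return it. If it is bad, we apply \Cref{obs:type}, running in time $2^{\betak}\cdot n^{\Oh(1)}$, to decide whether $S$ creates a \typea or a \typeb scenario; in the \typea case $S$ is again a strong separator and we return it. In the remaining \typeb case we perform the merging step: letting $C_L$ be the unique large terminal component with terminal $t_L$, we contract, for each $t\in T\setminus\{t_L\}$, the set $S_t\subseteq S$ of vertices reachable from $t$ in $G-C_L$ into a single vertex $v_t$, obtain the reduced graph $H$ on $\{v_t : t\in T\setminus\{t_L\}\}\cup V(C_L)$ with terminal set $T'=\{t_L\}\cup\{v_t : t\in T\setminus\{t_L\}\}$, recompute via \cite{CyganPPW13} a minimal multiway cut $S^1$ for $T'$ in $H$ of size at most $|S|$, and restart the iteration with $(H,T',S^1)$. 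Since $|T'|=|T|\ge 3$, all the definitions and observations remain applicable in the next round.

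For the running time, the key observation is that each iteration strictly shrinks the vertex set: the reduced graph $H$ retains only the $|T|-1$ merged vertices together with $V(C_L)$, discarding every other component of $G-S$, so $|V(H)|$ drops by at least $|T|-1\ge 2$. Hence the procedure halts after at most $n$ iterations, each costing $2^{\betak}\cdot n^{\Oh(1)}$, for an overall bound of $2^{\betak}\cdot n^{\Oh(1)}$. That the returned set $S'$ is genuinely a strong separator \emph{for the original $T$ in the original $G$} is precisely \Cref{claim:strong}, whose induction on the chain $S^0=S,S^1,\dots,S^\ell=S'$ lifts the separation property backward through the merges. Finally, armed with a strong separator $S'$ of size at most $\betak$ and with $|T|\ge 3$, $\betak\ge 2$, I would invoke \Cref{lem:strongintersect} in polynomial time to extract the two terminal components $C_s,C_t$ with the stated intersection property, and return $(S',C_s,C_t)$.

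The step I expect to be the main obstacle — already isolated into \Cref{claim:strong} — is verifying that the merging operation faithfully preserves the separation structure, i.e.\ that a minimal multiway cut for $T'$ in $H$ is again a multiway cut for $T$ in $G$; this requires a case analysis on whether the large terminal $t_L$ is an endpoint of a hypothetical violating path, using that $\tup{G,T,S}$ is a bad configuration so that a path between two terminals distinct from $t_L$ must meet $S$ through $S_s$ and $S_t$ and therefore project to a path in $H$. One also has to track the invariants $|T'|\ge 3$ and $|S^{i+1}|\le\betak$ so that \Cref{obs:type} and the good/bad configuration definitions stay meaningful throughout; beyond that, the argument is bookkeeping on top of the lemmas already established.
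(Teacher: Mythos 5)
Your proposal follows essentially the same route as the paper: run the Cygan et al.\ algorithm to obtain a minimal multiway cut or conclude $\cF(T,\betak,G)=\emptyset$, test good/bad via \Cref{obs:goodconfigcheck} and \typea/\typeb via \Cref{obs:type}, and in the remaining \typeb case perform the identification of the sets $S_t$ into single vertices and recurse on the large component, with termination by strict shrinkage of the vertex set and correctness delegated to \Cref{claim:strong}, finally invoking \Cref{lem:strongintersect} for $C_s,C_t$. This matches the paper's proof, so no further comment is needed.
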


From the above discussion, we obtain the following corollary.

\begin{corollary} \label{cor:strong}
	An instance $(G, \cM, T,Q, \betak)$ of \gimc, where $G$ is connected, $\betak \ge 2$, and $|T| \ge 3$, if $\cF(T, \betak, G) \neq \emptyset$ then $G$ has a strong separator of size at most $\betak$. 
\end{corollary}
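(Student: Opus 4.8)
The plan is to read this off directly from \Cref{lem:strong}. Given the instance $(G, \cM, T, Q, \betak)$ with $G$ connected, $\betak \ge 2$, and $|T| \ge 3$, run the algorithm guaranteed by \Cref{lem:strong}. By that lemma it either reports that $\cF(T, \betak, G) = \emptyset$, or it returns a strong separator $S$ of size at most $\betak$ (together with two terminal components). To conclude that $G$ has a strong separator of size at most $\betak$, it therefore suffices to rule out the first outcome under the hypothesis $\cF(T, \betak, G) \neq \emptyset$.

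To do this, I would unwind how the algorithm of \Cref{lem:strong} can ever output ``$\cF(T, \betak, G) = \emptyset$''. As described in the procedure preceding \Cref{lem:strong}, this happens only when the \fpt algorithm of Cygan et al.~\cite{CyganPPW13} certifies that $G$ has no minimal multiway cut for $T$ of size at most $\betak$. Now suppose $\cF(T, \betak, G) \neq \emptyset$ and fix any $S \in \cF(T, \betak, G)$. By the structural observation recorded at the beginning of \Cref{sec:mwaycut}, every member of $\cF(T, \betak, G)$ is in particular a \emph{minimal} multiway cut of size exactly $\betak$. Hence $S$ is a minimal multiway cut for $T$ of size $\betak \le \betak$, so the Cygan et al.~subroutine cannot certify non-existence, and the algorithm is forced into the other branch, which (after at most $n$ iterations of the Case~B loop, by the monotonicity of the vertex count and \Cref{claim:strong}) returns a strong separator of size at most $\betak$. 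This establishes the corollary.

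The main ``work'' here has already been absorbed into \Cref{lem:strong} itself --- in particular \Cref{claim:strong}, which guarantees that the iteratively refined set remains a strong separator for the \emph{original} terminal set $T$ in the \emph{original} graph $G$, together with the observation that each refinement strictly decreases the number of vertices, so the loop terminates in polynomially many rounds. For the corollary proper there is no further obstacle: it is a bookkeeping consequence of the lemma and of the fact that the ``empty'' branch of the algorithm faithfully reports emptiness of $\cF(T, \betak, G)$.
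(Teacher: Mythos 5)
Your proposal is correct and matches the paper's intent exactly: the paper derives \Cref{cor:strong} directly ``from the above discussion,'' i.e., from the procedure underlying \Cref{lem:strong}, and your unwinding of why the emptiness branch cannot fire when $\cF(T,\betak,G)\neq\emptyset$ (any member is a minimal multiway cut of size $\betak$, so the Cygan et al.\ subroutine succeeds and \Cref{claim:strong} plus termination yields a strong separator) is precisely the implicit argument. No gaps; this is the same route, just written out explicitly.
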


\subsection{A Recursive Algorithm for \gindmcut} \label{subsec:recursive}

We design a recursive algorithm called \mwcut for the problem \textsc{Generalized Independent Multiway Cut} (\gindmcut) that takes the following as input.
\begin{itemize}
	\item A graph $G = (V, E)$, 
    \item a set $T \subseteq V(G)$ of terminals, 
    
    \item a set  $Q \subseteq V(G)$  of special vertices, with $T \subseteq Q$,
        \item Matroid $\cM = (U, \cI)$ of rank $r$, where $V(G) \setminus Q \subseteq U$, and 
	\item two non-negative integers $\betak$, $q$ where $\betak+q \leq r$. 
\end{itemize}
We use ${\cal J} = (G,  \cM, T, Q, \betak, q)$ to denote the input instance to the algorithm. Given this instance $\cJ$, our goal of algorithm \mwcut is to find $\cF'(T, \betak, G) \repset{q} \cF(T, \betak, G)$ of bounded size. For the values $\betak \gets r$ and $q \gets 0$, the solution of the instance
$(G, \cM, T, Q, r, 0)$ corresponding to the representative set of the entire solution of size $r$ for \probmwcut in $G$ for the terminal set $T$.

\begin{description}[leftmargin=7pt]

	\item[Step 0: Preprocessing \label{step:prepocess}] We perform the following preprocessing steps exhaustively in the order in which they are mentioned. The soundness of these rules is immediate.
	\begin{description}[leftmargin=3pt]
		\item[Rule 0.a.\label{step:disconnterm1}] If there exists a non-terminal vertex $u \in V(G)$ that has no path to any terminal vertex in $T$, then make a recursive call  \mwcut$(G - u, \cM - u, T , Q, \betak,q)$. If the recursive call returns \textsc{No}, then we return 
		\no to original instance. Otherwise, return the solution returned by the recursive call.

		\item[Rule 0.b.\label{step:disconnterm2}] If there exists a terminal vertex $t \in T$ that is disconnected from all other terminals in $T\setminus \LR{t}$ in $G$, then make a recursive call \mwcut$(G - t, \cM - t, T - t, Q,\betak,q)$. If the recursive call returns \textsc{No}, then we return 
		\no to original instance. Otherwise, return the solution returned by the recursive call.

  \item[Rule 0.c.\label{step:edgeterminal}] If there exists a pair of adjacent terminal vertices, then return \textsc{No}.

        The following observation is immediate.
        \begin{obs} \label{obs:preproc}
            After an exhaustive application of Rule \hyperref[step:disconnterm1]{0.a} and \hyperref[step:disconnterm2]{0.b}, each connected component of $G$ contains at least two terminals of $T$. 
        \end{obs}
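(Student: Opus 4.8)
The plan is a short proof by contradiction, with a case analysis on how many terminals a hypothetical ``bad'' component can contain. I read ``exhaustive application of Rule~\hyperref[step:disconnterm1]{0.a} and Rule~\hyperref[step:disconnterm2]{0.b}'' as: the graph has reached a state in which neither rule is applicable — such a state exists because every application of either rule strictly decreases $|V(G)|$, so the process terminates.

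So suppose, for contradiction, that in this state there is a connected component $C$ of $G$ containing at most one terminal of $T$. If $C$ contains no terminal, pick any $u \in V(C)$ (a component is nonempty, and $u$ is then automatically a non-terminal); since $C$ is a connected component of $G$, no vertex of $C$ has a path in $G$ to any vertex of $V(G) \setminus V(C)$, and in particular $u$ has no path to any terminal, because all of $T$ lies outside $C$. Hence Rule~\hyperref[step:disconnterm1]{0.a} applies to $u$ — a contradiction. If instead $C$ contains exactly one terminal $t$, then every other terminal of $T$ lies in a different connected component, so $t$ is disconnected from all of $T \setminus \{t\}$ in $G$, and Rule~\hyperref[step:disconnterm2]{0.b} applies to $t$ — again a contradiction. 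Since both cases are impossible, every connected component of $G$ contains at least two terminals of $T$.

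I do not anticipate any genuine obstacle here: this is a routine exhaustiveness argument, which is presumably why the paper calls it ``immediate''. The only point worth flagging is that the stated invariant is a property of the \emph{joint} fixpoint of the two rules rather than of either rule alone — a component with a single terminal is first stripped of that terminal by Rule~\hyperref[step:disconnterm2]{0.b}, and its now terminal-free remnants are then removed by Rule~\hyperref[step:disconnterm1]{0.a} — but the case analysis above, being phrased directly in terms of the final (neither-rule-applicable) state, sidesteps having to track this dynamics.
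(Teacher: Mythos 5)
Your proof is correct and is exactly the routine fixpoint argument the paper has in mind when it declares the observation ``immediate'' (the paper supplies no explicit proof). Nothing is missing: the two-case contradiction at the joint fixpoint, together with the termination remark, fully justifies the claim.
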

        
        \item[Rule 0.c.\label{step:termnumbers}] At this point we have exhaustively applied the previous two rules. Now, if the number of connected components (each of which contains at least two terminals by \Cref{obs:preproc}) of $G$ is more than $\betak$, then we return \no. 

	\end{description}

		\item[Step 1: Base Case.\label{step:base}] Here we have following two cases. 
        \\Case a: If $\betak=1$, then $\cF(G, T, 1)$ consists of singleton vertex-sets $\LR{v}$, such that $\LR{v} \in \cI$, and $\LR{v}$ is a multiway cut for $T$ (of size $1$). For each $v \in V(G) \setminus T$, we can easily check whether $v \in \cF(G, T, 1)$ in polynomial time; this implies that the set $\cF(G, T, 1)$ can be found in polynomial time. Then, we compute $\cF'(G, T, 1) \repset{q} \cF(G, T, 1)$ in 
        $n^{\cO(1)}$ time, and return $\cF'(G, T, 1)$. The correctness of this step is immediate.
        \\Case b: If $|T|=2$, then we solve the instance $(G, \cM, Q, s, t, \betak, q)$ of \indstcut using \Cref{thm:stcuttheorem}, and return the family returned by the algorithm. The correctness of this step follows from that of \Cref{thm:stcuttheorem}, and the theorem implies that this step runs in time $f(r, k) \cdot n^{\Oh(1)}$ and returns a family $\cF'(G, \{s,t\}, \betak)$ of size $2^r$. The correctness of this step is immediate.

        \begin{claim} \label{cl:basecase}
		Let $\cF'(G, T, \betak)$ denote the set that we return in the base case. Then $\cF'(G, T, \betak) \repset{q}  \cF(T, \betak, G)$.
	\end{claim}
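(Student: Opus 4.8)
The plan is to verify the claim separately for the two cases of the base case, reducing each to a result already proven; the only step that is not a pure tautology is a downward-closure property of representative families, which I would establish first. Namely: if $\cA$ is a $p$-family and $\cA' \repset{q_1} \cA$ with $p + q_1 \le \rank(\cM)$, then $\cA' \repset{q_2} \cA$ for every $0 \le q_2 \le q_1$. This follows from a short matroid-augmentation argument: given a set $B$ with $|B| = q_2$ and some $A \in \cA$ fitting $B$, the independent set $A \cup B$ has size $p + q_2 \le p + q_1 \le \rank(\cM)$, so it can be extended by elements lying outside $A \cup B$ to an independent set $A \cup B^+$ with $B \subseteq B^+$, $|B^+| = q_1$, and $B^+ \cap A = \emptyset$; any $A' \in \cA'$ fitting $B^+$ then also fits $B$ by heredity of independence. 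Recall too that, by the input constraint of \mwcut, $\betak + q \le r$, hence $q \le r - \betak$.

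For Case~a (that is, $\betak = 1$), I would observe that $\cF(T, 1, G)$ is, by definition, exactly the $1$-family of singletons $\LR{v}$ with $v \in V(G) \setminus T$, $\LR{v} \in \cI$, and $\LR{v}$ a multiway cut for $T$ -- which is precisely the family the algorithm constructs in polynomial time. Running the representative-set algorithm of \Cref{prop:repset} (or \Cref{lem:oraclerepset}) on this $1$-family, and if necessary applying the downward-closure property above, yields $\cF'(G, T, 1) \repset{q} \cF(T, 1, G)$, as required.

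For Case~b ($|T| = 2$), write $T = \LR{s, t}$. The key observation is that a vertex subset of $G$ is a multiway cut for $\LR{s, t}$ if and only if it is a vertex $(s, t)$-cut; hence, by the standing assumption that $\betak$ is the smallest value with $\cF(T, \betak, G) \neq \emptyset$, the family $\cF(T, \betak, G)$ coincides with the family $\cF(s, t, \betak)$ of \emph{minimal} independent $(s, t)$-cuts of size $\betak$ from \Cref{sec:mcut}. Indeed, that assumption rules out any independent $(s, t)$-cut of size $< \betak$, so every independent $(s, t)$-cut of size exactly $\betak$ is inclusion-minimal, and the same assumption supplies the hypothesis demanded by \Cref{thm:stcuttheorem}. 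Invoking \Cref{thm:stcuttheorem} on the instance $(G, \cM, s, t, Q, \betak)$ returns $\cF'(s, t, \betak) \repset{r - \betak} \cF(s, t, \betak)$ of size at most $2^r$, and the downward-closure property upgrades this to a $q$-representative family, so $\cF'(G, T, \betak) = \cF'(s, t, \betak) \repset{q} \cF(T, \betak, G)$.

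I do not anticipate any genuine obstacle: all of the real content lives in \Cref{thm:stcuttheorem}, and what remains is bookkeeping -- identifying the two notions of ``cut'' for a two-terminal set, checking that the standing minimality assumption furnishes the hypothesis of \Cref{thm:stcuttheorem}, and the elementary downward-closure lemma for representative families.
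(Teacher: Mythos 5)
Your proof is correct and follows the same route the paper intends: the paper itself simply declares the base case ``immediate,'' deferring Case~b entirely to \Cref{thm:stcuttheorem}. You supply exactly the bookkeeping the paper omits --- the identification of two-terminal multiway cuts with minimal $(s,t)$-cuts under the standing minimality assumption, and the (correct) matroid-augmentation argument showing that an $(r-\betak)$-representative family is also $q$-representative for $q \le r-\betak$ --- so nothing further is needed.
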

	
	\medskip
	
	\item[Step 2: Solving the problem in a disconnected graph. \label{step:connected}] In this step, we describe how to deal with the problem when the underlying graph is not connected. Let $(G, \cM, T, Q, \betak,q)$ be the instance and $G$ be the union of $\ell$ connected components $G_1, \ldots G_\ell$. Observe that after an exhaustive application of the preprocessing (\hyperref[step:prepocess]{Rule 0.b}), each component in $G$ has at least two terminal components, so each minimal solution must have a non-empty intersection with each terminal component. Hence, we can assume that $\ell \leq \betak$. Let $T_i \coloneqq T \cap V(G_i)$ and $Q_i \coloneqq Q \cap V(G_i)$. 	Now for each component $G_i$ we try to find the smallest $\betak_i \leq \betak$ such that $G_i$ has an independent multiway cut of size $\betak_i$ -- which can be found by making a recursive call to the algorithm for each component separately, starting from $\betak'_i =1, 2, \ldots, \betak-1$, in order to find the smallest value $\betak_i$ such that the recursive call for all $\betak'_i < \betak_i$ returned \no, and $\betak_i$ returns a non-empty family. Note that, assuming the number of connected components is more than $1$, if for some $i \in [\ell]$, we find that for $\betak'_i = \betak-1$, the instance $(G_i, \cM, T_i, Q_i, \betak_i, q+\betak-\beta_i)$ is a \no-instance, then we conclude that we have a \no-instance.
    The correctness of this follows from the fact that any $S \in \cF(T, \betak, G)$ must have a non-empty intersection with each connected component. Hence, in each recursive call, the parameter strictly decreases. Note that we make a recursive call to most $\betak^2$ many sub-problems. Let $\betak_i$ be the smallest value for each  component $G_i$ such that we find non-empty solution for $(G_i, \cM, T_i,Q_i, \betak_i,q+\betak-\betak_i)$. Note that the $\betak_i$'s form a natural partition $\cP$ of $\betak$ such that $\betak = \betak_1 + \ldots + \betak_\ell$. Let $\cF'_i$ be the output of the $i$th call with parameter $\betak_i$. We compute $\cF'_1 \bullet \cF'_2 \bullet \cdots \bullet \cF'_\ell$, and we filter out any sets from the resultant that do not constitute a minimal independent multiway cut of size $\betak$ for $T$. Let the remaining collection of sets be $\cF^1(T, \betak, G) $. Finally, we compute and return $\cF'(T, \betak, G) \repset{q} \cF^1(T, \betak, G)$. 
	
	\begin{claim} \label{cl:disconnected}
		$\cF'(T, \betak, G) \repset{q} \cF(T, \betak, G)$.
	\end{claim}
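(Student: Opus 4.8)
The plan is to reduce the claim to a single ``representative exchange'' statement about the $\ell$-fold subset convolution, and then establish that statement by a sequential exchange over the components while carefully tracking the representation parameter so that it telescopes down to exactly $q$. First I would record what has already been set up in Step~2: $\betak = \betak_1 + \cdots + \betak_\ell$; the $i$-th recursive call is on the instance $(G_i, \cM, T_i, Q_i, \betak_i, q+\betak-\betak_i)$ and hence returns $\cF'_i \repset{q+\betak-\betak_i} \cF(T_i, \betak_i, G_i)$ (note $\betak_i + (q+\betak-\betak_i) = \betak + q \le \R$, so this is a legal parameter); $\cF^1(T,\betak,G)$ is the sub-collection of $\cF'_1 \bullet \cdots \bullet \cF'_\ell$ of those sets that are minimal independent multiway cuts of size $\betak$ for $T$; and the algorithm finally returns $\cF'(T,\betak,G) \repset{q} \cF^1(T,\betak,G)$. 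Since the relation $\repset{q}$ is transitive (immediate from \Cref{repsetdefn}) and $\cF^1(T,\betak,G) \subseteq \cF(T,\betak,G)$ by construction (which in particular gives $\cF'(T,\betak,G)\subseteq\cF(T,\betak,G)$), it suffices to prove the single statement $\cF^1(T,\betak,G) \repset{q} \cF(T,\betak,G)$. If $\cF(T,\betak,G)=\emptyset$ this is vacuous, so I fix $A \in \cF(T,\betak,G)$ and an arbitrary set $B$ of size $q$ fitting $A$, and the goal becomes: produce $A'' \in \cF^1(T,\betak,G)$ that also fits $B$.

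Next I would decompose $A$ along the components. Writing $A_i := A \cap V(G_i)$, the fact that $G_1,\dots,G_\ell$ are exactly the connected components of $G$ (so every path between two terminals stays inside one $G_i$) means that $A$ is a multiway cut for $T$ iff each $A_i$ is a multiway cut for $T_i$ in $G_i$; moreover $A \in \cI$ forces $A_i \in \cI$. By minimality of $\betak_i$ we have $|A_i| \ge \betak_i$, and since $\sum_i |A_i| = |A| = \betak = \sum_i \betak_i$, we conclude $|A_i| = \betak_i$ for every $i$; hence $A_i \in \cF(T_i, \betak_i, G_i)$ (and, because $\betak_i$ is the smallest size of an independent multiway cut in $G_i$, each $A_i$ is automatically a \emph{minimal} multiway cut for $T_i$ in $G_i$).

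The heart of the argument is a sequential exchange, processing $i = 1, \dots, \ell$ while maintaining the invariant that $A''_1 \cup \cdots \cup A''_{i-1} \cup A_i \cup \cdots \cup A_\ell \cup B \in \cI$ with $A''_j \in \cF'_j$ and $|A''_j| = \betak_j$ for $j < i$. At step $i$, the set $D_i := B \cup A''_1 \cup \cdots \cup A''_{i-1} \cup A_{i+1} \cup \cdots \cup A_\ell$ is fitted by $A_i$ by the invariant (disjointness holds since $B \cap A_i = \emptyset$ and the other blocks lie in vertex sets disjoint from $V(G_i)$), and its size is $q + (\betak_1 + \cdots + \betak_{i-1}) + (\betak_{i+1} + \cdots + \betak_\ell) = q + \betak - \betak_i$. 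Since $\cF'_i \repset{q+\betak-\betak_i} \cF(T_i, \betak_i, G_i)$ and $A_i \in \cF(T_i, \betak_i, G_i)$ fits $D_i$, there is $A''_i \in \cF'_i$ that also fits $D_i$, which is precisely the invariant for step $i+1$ (and $|A''_i| = \betak_i$ because $\cF(T_i,\betak_i,G_i)$ is a $\betak_i$-family). It is exactly the identity $|D_i| = q + \betak - \betak_i$ — which relies on $\betak = \sum_j \betak_j$ together with $|A''_j| = |A_j| = \betak_j$ — that makes the representation parameter $q+\betak-\betak_i$ with which the $i$-th recursive call was invoked match what is needed at this step; keeping this bookkeeping exact (so it telescopes to $q$ after all $\ell$ steps rather than degrading) is the part I expect to need the most care, and it is why the algorithm chooses $q + \betak - \betak_i$ rather than, say, a uniform parameter. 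After $\ell$ steps I obtain $A'' := A''_1 \cup \cdots \cup A''_\ell$ with $A'' \cup B \in \cI$ and $A''_i \in \cF'_i$ for all $i$.

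Finally I would verify $A'' \in \cF^1(T,\betak,G)$. Since $A'' \cup B \in \cI$ we have $A'' \in \cI$, and as the $A''_i$ lie in pairwise disjoint vertex sets $V(G_i)$ they pairwise fit, so $A'' \in \cF'_1 \bullet \cdots \bullet \cF'_\ell$; it has size $\sum_i \betak_i = \betak$; it is a multiway cut for $T$ because each $A''_i \in \cF'_i \subseteq \cF(T_i, \betak_i, G_i)$ is a multiway cut for $T_i$ in $G_i$; and it is minimal, since if $v \in A''_i$ were such that $A'' \setminus \LR{v}$ is still a multiway cut for $T$, then $A''_i \setminus \LR{v}$ would still be a multiway cut for $T_i$ in $G_i$, contradicting that $A''_i$ is a minimal such cut. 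Hence $A''$ survives the filtering, i.e. $A'' \in \cF^1(T,\betak,G)$, and $A''$ fits $B$. This proves $\cF^1(T,\betak,G) \repset{q} \cF(T,\betak,G)$, and combined with $\cF'(T,\betak,G) \repset{q} \cF^1(T,\betak,G) \subseteq \cF(T,\betak,G)$ and transitivity of $\repset{q}$ we get $\cF'(T,\betak,G) \repset{q} \cF(T,\betak,G)$, as claimed; the size bound $|\cF'(T,\betak,G)| \le \binom{\betak+q}{\betak} \le 2^{\R}$ is immediate from \Cref{prop:repset}.
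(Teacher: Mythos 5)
Your proposal is correct and follows essentially the same route as the paper's proof: a sequential (telescoping) exchange over the connected components, replacing $A_i$ by a representative $A''_i$ against the set $D_i$ of size $q+\betak-\betak_i$, which is exactly the paper's $Z_i$. You are somewhat more careful than the paper in justifying that $|A_i|=\betak_i$ (via the minimality of each $\betak_i$) and in verifying that the assembled set survives the minimality filter defining $\cF^1(T,\betak,G)$, but these are refinements of the same argument rather than a different approach.
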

	\begin{claimproof}
    We show the claim in two parts. First, we show that $\cF'(T, \betak, G)\subseteq \cF(T, \betak, G)$, and then we show that the former also $q$-represents the latter. For the first part, consider an $R \in \cF'(T, \betak, G)$. Hence, for each $i \in [\ell]$, there exists some $R_i \in \cF'(T_i, \betak_i, G_i) \repset{q+\betak-\betak_i} \cF(T_i, \betak_i, G_i)$, such that, (i) each $R_i$ is a minimal independent multiway cut of size $\betak_i$ for $T_i$ in $G_i$, (ii) $R = R_1 \cup R_2 \cup \ldots \cup R_\ell$. It follows that $R$ is a minimal multiway cut of size $\betak$ for $T$ in $G$, and thus $R \in \cF(T, \betak, G)$.

	Now consider $Y \subseteq U(\cM)$ be a set of size $q$ such that there is a set $O  \in \cF(T, \betak,  G)$  (i.e., $O$ is a minimal multiway cut of size $\betak$ for $T$), such that $O $ fits $Y$.  We have to show that there is a $\hat{X}  \in \cF'(T, \betak, G)$, such that $\hat{X}$ fits $Y$, that is also a minimal multiway cut of size $\betak$ for $T$. 
	
	We know that $\cF'_i$ is the output of the recursive call on $(G_i, \cM, T_i, Q_i, \betak_i, q+\betak-\betak_i)$. By induction (recall that $\betak_i < \betak$), we know that $\cF'_i \repset{q+\betak-\betak_i} \cF(G_i, \betak_i, T_i)$. 
    Let $O_i \coloneqq O \cap V(G_i)$, $Y_i \coloneqq Y \cap V(G_i)$, and $Y_{out} = Y \setminus V(G)$. Note that $Y = Y_1 \cup Y_2 \cup \ldots \cup Y_\ell \cup Y_{out}$. Now we provide a recursive definition of $\hat{X_i}$ for each $i \in [\ell]$. The set $\hat{X_1}$ is a representative for the set $O_1$ in the graph $G_1$ for the $\betak+q-\betak_1$ sized set $(O \cup Y) \setminus O_1$. Now for $i>1$ 
    $$Z_i \coloneqq \lr{\bigcup_{j<i} (\hat{X_j} \cup Y_j)} \cup Y_i \cup \lr{\bigcup_{j>i} (O_j \cup Y_j)} \cup Y_{out}.$$
	
	 Clearly, $|Z_i|= \betak+q-\betak_i$. By the property of $\cF'_i$, there exists some $\hat{X_i} \in \cF'_i$, such that $\hat{X_i} $ is a minimum multiway cut of size $\betak_i$ for $T_i$ in $G_i$, and $\hat{X_i}$ fits  $Z_i$. Now, let $\hat{X} \coloneqq \hat{X_1} \cup \cdots \cup \hat{X_{\ell}}$. Clearly $|\hat{X}|= \sum_{i=1}^{\ell} \betak_i= \betak$.  As $\hat{X_i}$ is a solution for $G_i$ and $G$ consists of $\ell$ components $G_i, i\in [\ell]$,  so $ \hat{X} \coloneqq \hat{X_1} \cup \cdots \cup \hat{X_{\ell}} $ is a solution for $G$. By our recursive definition, we can conclude that $\hat{X_{\ell}} \cup Z_{\ell} $ is an independent set. Since $\hat{X_{\ell}} \cup Z_{\ell} = \hat{X} \cup Y$, we are done.
	\end{claimproof}

%
%
%
%
%
%
%
	
	\medskip
	
	\item[Step 3: Finding a Strong Separator.\label{step:normalsol}] Since \hyperref[step:connected]{Step 2} is not applicable, we know that the graph $G$ is connected. Then we use our \fpt algorithm of \Cref{lem:strong} to check whether $G$ has a strong separator corresponding to the instance $(G, \cM, T, Q, \betak,q)$ that runs in time $2^\betak \cdot n^{\Oh(1)}$. Either this algorithm returns a (w.l.o.g. minimal) separator $S \subseteq V(G)$ of size at most $\betak$, or it returns \textsc{No}. In the former case, we proceed to the next step, i.e., \hyperref[step:guessintersection]{Step 4}. In the later case, we return \no to our original instance $(G, \cM, T, Q, \betak,q)$. Because if $G$ does not admit a multiway cut for $T$ of size $\betak$, then it also does not admit an independent vertex multiway cut of size $\betak$. This shows the soundness of returning \no.
	
	\medskip

	\item[Step 4: Branching procedure. \label{step:guessintersection}] Let $\cF \coloneqq \cF(G, T, \betak)$ denote the family of feasible solutions for the given instance $\cJ = (G, \cM, T, Q, \betak, q)$. If $\cF \neq \emptyset$, then \Cref{lem:strongintersect} implies that, there exists a pair of components $C_s$ and $C_t$, such that for any $O \in \cF$, for each strong separator $S$ of size $\betak$ corresponding to the terminal set $T$, we  have that, either (i) $S \cap O \neq \emptyset $, or (ii) either  $1 \leq  |C_s \cap O| \leq \betak-1$, or $1 \leq  |C_t \cap O| \leq \betak-1$. Furthermore, as stated in \Cref{lem:strongintersect}, given a strong separator $S$, such a pair of components can be found in polynomial-time. 
    However, since a-priori we do not know whether case (A) or (B) is applicable, we design a branching strategy that considers both possibilities. We describe this branching in the following. 
	
	\begin{description}
		\item[Handling case (A): Branching on the vertices in $S$:] We branch on each vertex $v$ on $S$ assuming that $v \in S \cap O$.   More specifically, we do the following: For each vertex $v \in  S$ we make a recursive call to \mwcut$(G-v, \cM, T, Q \setminus \{v\}, \betak-1, q)$. Let $\cF'(v)$ be the set returned by the recursive call. Then $\cF^1(v) \coloneqq \cF'_v \bullet \LR{v}$.

			\begin{claim} \label{cl:s-branchingrepset}
				$\cF^1(v) \repset{q} \cF(T, \betak, v, G)$, where $\cF(T, \betak, v, G) \subseteq \cF(T, \betak, G)$ is the set of all independent minimal multiway cuts for $T$ that contain $v$. 
			\end{claim}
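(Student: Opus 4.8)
The plan is to argue by induction, taking as hypothesis that \mwcut is correct on every instance whose parameter is strictly smaller than $\betak$; since Step~4 is reached only when $\betak \ge 2$, the recursive call \mwcut$(G-v,\cM,T,Q\setminus\LR{v},\betak-1,\cdot)$ is made at parameter $\betak-1\ge 1$ and is covered by the hypothesis (resolved either by the base case $\betak=1$ of Step~1 or by a shallower recursion). Fix $v\in S$. The first thing to record is that the standing assumption for the instance — that $\betak$ is the least size of an independent multiway cut for $T$ in $G$ — propagates to $(G-v,\cM,T,Q\setminus\LR{v},\betak-1,\cdot)$: if $G-v$ had an independent multiway cut $R''$ for $T$ of size $\le\betak-2$, then either $R''$ already cuts $T$ in $G$, or $R''\cup\LR{v}$ does (of size $\le\betak-1$), and in either case $G$ would have an independent multiway cut for $T$ of size $<\betak$, a contradiction. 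Consequently, whenever $R$ is a \emph{minimal} independent multiway cut for $T$ in $G-v$ of size $\betak-1$ with $R\cup\LR{v}\in\cI$, the set $R\cup\LR{v}$ is automatically a \emph{minimal} multiway cut for $T$ in $G$: removing some $w\in R$ would leave a multiway cut in $G-v$ smaller than $R$ (impossible), and removing $v$ would leave $R$ as an independent multiway cut of size $\betak-1$ in $G$ (impossible by the propagated assumption). Conversely, if $O\in\cF(T,\betak,v,G)$ then $G-O=(G-v)-(O\setminus\LR{v})$, so $O\setminus\LR{v}$ is a multiway cut for $T$ in $G-v$, of size $\betak-1$, independent (as a subset of $O\in\cI$), and minimal there (a proper sub-cut would combine with $v$ to beat $O$). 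Thus $O\mapsto O\setminus\LR{v}$ is a bijection between $\cF(T,\betak,v,G)$ and $\LR{R\in\cF(T,\betak-1,G-v): R \text{ fits } \LR{v}}$, with inverse $R\mapsto R\cup\LR{v}$.

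\emph{Containment.} By the induction hypothesis, $\cF'(v)$ is a representative subfamily of $\cF(T,\betak-1,G-v)$, so by monotonicity of $\bullet$ in its first argument, $\cF^1(v)=\cF'(v)\bullet\LR{v}\subseteq\cF(T,\betak-1,G-v)\bullet\LR{v}=\LR{R\cup\LR{v}: R\in\cF(T,\betak-1,G-v),\ R \text{ fits } \LR{v}}$, which by the previous paragraph equals $\cF(T,\betak,v,G)$. Hence $\cF^1(v)\subseteq\cF(T,\betak,v,G)$, as a representative family (being a subfamily) must satisfy.

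\emph{Representation.} Let $B\subseteq U(\cM)$ with $|B|=q$, and suppose $O\in\cF(T,\betak,v,G)$ fits $B$, i.e.\ $O\cap B=\emptyset$ and $O\cup B\in\cI$; since $v\in O$ we get $v\notin B$. Set $O':=O\setminus\LR{v}\in\cF(T,\betak-1,G-v)$; then $O'$ fits $B\cup\LR{v}$, being disjoint from it with $O'\cup B\cup\LR{v}=O\cup B\in\cI$. Because the extra element $v$ consumes one unit of the test-set budget, the recursive call must be issued with representation budget $q+1$ (and indeed $\betak-1+(q+1)=\betak+q\le r$); the induction hypothesis then supplies $\hat X'\in\cF'(v)$ fitting $B\cup\LR{v}$. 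Put $\hat X:=\hat X'\cup\LR{v}\in\cF^1(v)$: it fits $B$, since it is disjoint from $B$ (as $\hat X'\cap B=\emptyset$ and $v\notin B$) and $\hat X\cup B=\hat X'\cup\LR{v}\cup B\in\cI$. Equivalently, one may package the whole transfer as an application of the last item of \Cref{obs:repsetprops} to $\cF'(v)\bullet\LR{v}$, with the $1$-family $\LR{v}$ trivially representing itself. Either way, $\cF^1(v)\repset{q}\cF(T,\betak,v,G)$.

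The main obstacle is not the representative-family arithmetic but its interaction with \emph{minimality}: a minimal multiway cut of $G-v$ need not stay minimal once $v$ is put back, so keeping $\cF^1(v)$ inside $\cF(T,\betak,v,G)$ forces one to use that $\betak$ is the least feasible cut size for the instance \emph{and} to verify that this property is inherited by $(G-v,\betak-1)$ — the short argument in the first paragraph. The remaining point is purely bookkeeping: re-adding $v$ costs one unit of the test-set budget, so the recursive call must carry budget $q+1$ in order for the convolution with $\LR{v}$ (via \Cref{obs:repsetprops}) to leave exactly the $q$ units the claim asks for.
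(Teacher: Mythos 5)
Your proof takes essentially the same route as the paper's: split into containment and representation, lift the test set $Y$ to $Y\cup\LR{v}$, invoke the induction hypothesis at budget $q+1$, and re-attach $v$. (You are right that the recursive call must carry budget $q+1$; the paper's pseudocode writes $q$, but its own proof of the claim also silently uses $q+1$.) On the containment half you are in fact more careful than the paper, which merely asserts that $Q\cup\LR{v}$ is a \emph{minimal} multiway cut of size $\betak$ in $G$; you correctly identify that this needs the standing assumption that $\betak$ is the least feasible size.

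There is, however, one flawed step in your auxiliary propagation argument. You claim that if $G-v$ had an independent multiway cut $R''$ for $T$ of size at most $\betak-2$, then either $R''$ or $R''\cup\LR{v}$ would be an independent multiway cut of size less than $\betak$ in $G$; but $R''\cup\LR{v}$ need not be independent in $\cM$, so the second branch yields no contradiction. Fortunately the propagation is not needed for this claim: any proper subset of $R\cup\LR{v}$ (where $R\cup\LR{v}\in\cI$) that were a multiway cut for $T$ in $G$ would itself be an \emph{independent} multiway cut of size less than $\betak$ in $G$ — being a subset of an independent set — which directly contradicts the standing assumption on the instance $(G,\betak)$, with no detour through $G-v$. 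The same observation replaces your appeal to minimality of $R$ in $G-v$ when deleting some $w\in R$. With that repair, the argument is sound and matches the paper's.
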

			
	\begin{claimproof}
        First we argue that $\cF^1(v) \subseteq \cF(T, \betak, v, G)$, and then we show that $\cF^1(v)$ also $q$-represents $\cF(T, \betak, v, G)$. Consider some $R \in \cF^1(v)$, and $R$ must be equal to $Q \cup \LR{v}$ for some $Q \in \cF'_v$. By induction hypothesis, $Q \in \cF(G-v, T, \betak-1)$, i.e., $Q$ is a miminmal multiway cut for $T$ of size $\betak-1$ in $G-v$. It follows that $R = Q \cup \LR{v}$ is also a minimal multiway cut of size $\betak$ for $T$ in $G$. Since $v \in R$, this shows that $\cF^1(v) \subseteq \cF(T, \betak, v, G)$.

        Now, consider some $Y \subseteq U(\cM)$ of size $q$ for which there is a set $O  \in \cF(T, \betak, v, G)$  such that $Y$ fits $O$. First, by the induction hypothesis, we have $\cF'(v) \repset{q+1} \cF(T, \betak-1,  G-v) $. This implies, in particular, that each $X \in \cF'(v)$, is a minimal multiway cut of size $\betak-1$ for $T$ in $G-v$.     
        We have to show that there is a $\hat{X}  \in \cF^1(v)$, such that $\hat{X}$ is a minimal multiway cut of size $\betak$ in  such that $\hat{X}$ fits $Y$. 	As $\cF(T, \betak, v, G) \subseteq \cF(T, \betak, G)$ is the set of all independent minimal multiway cuts for $T$ that contain $v$, the vertex $v$ must belong to $O$. Consider the set $O_v \coloneqq O \setminus \{v\}$ and $Y_v \coloneqq Y \cup \{v\}$. Clearly $O_v \cup Y_v = O \cup Y$, and for any $X \in \cF'(v)$, $v \not\in X$.  
        So for the set $Y_v \subseteq U(\cM)$,  of size $q+1$  there is a set $O_v  \in \cF(T, \betak-1,  G-v)$, such that $O_v$ fits $Y_v$. Hence, there exists $X_v \in \cF'(v)$ such that $X_v$ also fits $O_v$. Further, $X_v \in \cF(T, \betak-1, G-v)$ as well. Now, consider the set $\hat{X} \coloneqq X_v \cup \{v\}$. Obviously, $\hat{X}$ is disjoint from $Y$. As $X_v$ is an independent multiway cut for $T$ in $G-v$, so $\hat{X} = X_v \cup \LR{v}$ is an independent multiway cut in $G$. Finally, since $\hat{X} \cup Y= X_v \cup Y_v \in \cI$, the claim follows.
	\end{claimproof}
		
		\smallskip 
		
		\item[Handling case (B): Branching on terminal components in $G-S$:] For a terminal vertex $x \in T$, we use the notation $C_x$ to denote the terminal component in $G-S$ containing the vertex $x$. Recall the components $C_s$ and $C_t$ returned by \Cref{lem:strongintersect} with the property that, if $\cJ$ is a \yes-instance, and for some $O \in \cF(T, \betak, G)$, it holds that $O \cap S = \emptyset$, then either $1 \le |O \cap V(C_s)| \le \betak-1$, or $1 \le |O \cap V(C_t)| \le \betak-1$.
        First we guess (i.e., branch) which components among $C_t$ and $C_s$ have the above mentioned property. WLOG, assume that the component $C_t$ has such property, and the other case is symmetric. 	
        
        Let $\reach(t)$ denote the set of vertices in $S$ which is reachable (connected) from the vertex $t$ in the induced subgraph $G[C_t \cup S]$.
		Next, we guess the following two things (i) the value $|V(C_t) \cap O|$ and (ii) the connectivity among the vertices in $S \cup \{t\}$ in the subgraph $G[V(C_t)  \cup \reach(t)]- O$.  More specifically,  we  guess a partition of the vertices in $\reach(t) \cup \{t\}$ such that it holds the following property: a pair of vertices $x,y \in \reach(t) \cup \{t\}$ belongs to the same set of the partition if and only if there is a path between $x$ and $y$ in $G[V(C_t)  \cup \reach(t)]- O$. We introduce the following notation.
        \begin{definition}
            Given a graph $H$ and partition $\Pi$ of vertex set $U \subseteq V(H)$, the graph obtained in the following procedure is denoted by $H^*/\Pi$: for each part $\pi \in \Pi$, we \identity all the vertices of $\pi$ into a single vertex, say $v_{\pi}$ (we refer preliminary section to detail about the \identity operation). 
        \end{definition}
        
        Now in our algorithm,  for the  component $C_t$,  for each partition  $\cP$ of the vertex set $ \reach(t) \cup \LR{t}$, for each integer $1 \le \betak_1 \le \betak-1$ (let $\betak_2 = \betak-\betak_1$),
        we create two instances of \probmwcut and recursively call \mwcut on each of them. We highlight that the parameters for both the instances satisfy $1 \le \betak_1, \betak_2 \le \betak-1$, hence the parameter strictly decreases in both the recursive calls. These two instances are defined as follows. 
        
		  \begin{figure}[ht!]
	\centering
	\includegraphics[scale=0.5]{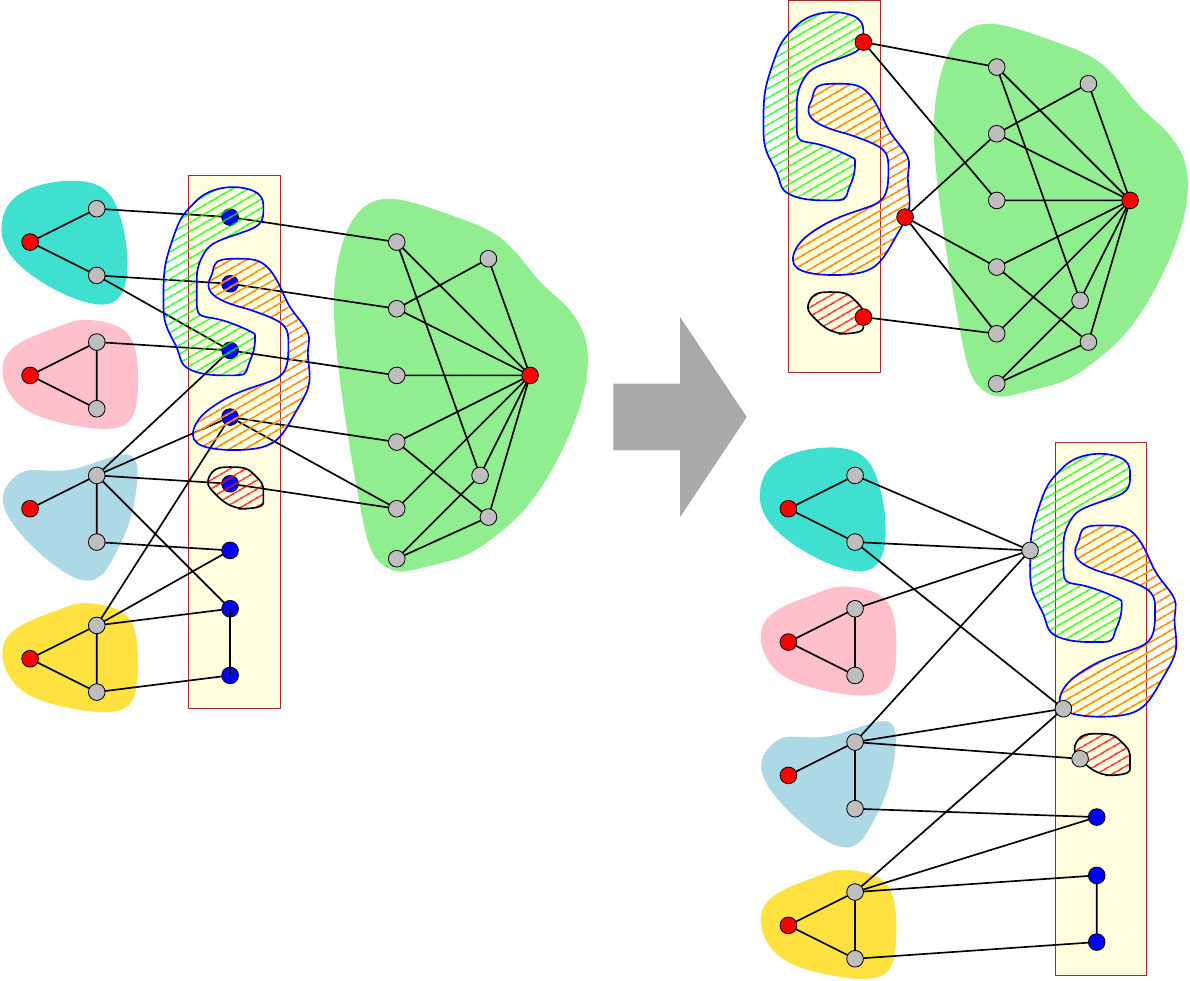}
	\caption{An illustration of  recurrsive algorithm when branching on terminal components.}
	\label{fig:branching}
\end{figure}
		
		\begin{itemize}[leftmargin=2pt]
			\item Instance 1 is $\cJ_1 \coloneqq (G_1, \cM, T_1, Q_1, \betak_1, q+\betak_2)$. Here $G_1$ is $H^*/\cP$, where $H:= G[V(C_t)  \cup \reach(t)]$. That is  $V(G_1) = V(C_t) \cup V(\cP)$ where $V(\cP)$ is the set of all new vertices added after processing \identity operation on each $P \in \cP$. And the terminal set $T_1:= V(\cP)$, $Q_1:= T_1$. Let $\cF'(T_1, \betak_1, G_1) \repset{q+\betak_2} \cF(T_1, \betak_1, G_1)$ is the output returned by recursive call. 
            By induction this is true. 
			
			\item Instance 2 is $\cJ_2 \coloneqq (G_2, \cM, T_2, Q_2, \betak_2, q+\betak_1)$. Here, the description of the graph $G_2$ and terminal set $T_2$ is based on the partition $\cP$.  We have the following two cases depending on whether the vertex $t$ is identified with some other vertices in $\reach(t)$ in $\cP$ or not. Let $P_t$ denotes the set in the partition in $\cP$ which contains $t$.
			
			\begin{description}[leftmargin=3pt]
				\item[Case 4.a. $|P_t|=1$  \label{case:t=0}] Here $t$ is not in same part in $\cP$ with any $v \in \reach(t)$. Note that this implies that we require that $t$ is disconnected from all $v \in \reach(t)$ in each solution of $\cJ_1$. Since any path between $t$ and $t'$ in $G$ must use a vertex of $\reach(t)$, it follows that $G - X$ already disconnects $t$ from all $t' \in T\setminus \LR{t}$, for any solution $X$ of $\cJ_1$. Hence, in $\cJ_2$, we may remove $t$ from $T$.
                \\More formally, we define $G_2 \coloneqq G^*/\cP- V(C_t)$, $T_2 \coloneqq T \setminus \{t\}$, and $Q_2 \coloneqq T_2 \cup \big(V(\cP) \setminus \LR{t}\big)$. That is,  $V(G_2) = \big(V(G) \setminus \big(V(C_t) \cup \reach(t))\big) \cup \big(V(\cP) \setminus \LR{t}\big)$. 
				
				\smallskip

				\item[Case 4.b. $|P_t|\geq 2$  \label{case:t=1}] In this case, the terminal $t$ is identified with at least one vertex from $\reach(t)$. Let $v_t$ denote the result of the \identity operation on the set $P_t$.
                \\In this case $G_2 \coloneqq G^*/\cP- (V(C_t)\setminus \{t\})$, $T_2 \coloneqq T \setminus \{t\} \cup \{v_t\}$, and $Q_2 \coloneqq T_2 \cup V(\cP) $. That is  	$V(G_2) = \lr{V(G) \setminus \lr{V(C_t) \cup \reach(t)}} \cup V(\cP)$. 
                \\Let $\cF'(T_2, \betak_2, G_2) \repset{q+\betak_1} \cF(T_2, \betak_2, G_2)$ is the output returned by recursive call. By induction this is true.

			\end{description}
		More formally, let ${\cal G}_t$ as the collection of all tuples $(C_t, \cP, \betak_1)$ of guesses corresponding to component $C_t$ -- analogously, we also define the set of guesses ${\cal G}_s$ corresponding to $C_s$, and also recursively call the algorithm for each guess in ${\cal G}_s$, the pair sub-instances obtained in a similar way (we omit this description).
  
 		Now to return the solution for branching on terminal components, we do the following. First we compute $ \cF'(T_1, \betak_1, G_1) \bullet \cF'(T_2, \betak_2, G_2)$, and from the resultant, we filter out any sets that do not form an independent minimal multiway cut for $T$ in $G$. Let us refer to the remaining $\betak$-family as $\cF^1(C_t, \cP, \betak_1)$. 
    We say a minimal independent multiway cut $O$ is {\em compatible} with the guess $(C_t, \cP, \betak_1)$ if following two conditions hold: (i) $|O \cap V(C_t)|=\betak_1$, (ii) for any pair of vertices $u,v \in \reach(t) \cup \{t\}$, if $u$ and $v$ belong to  two different sets in the partition $\cP$, then there is no path from $u$ to $v$ in the subgraph $G[V(C_t) \cup \reach(t) \setminus O]$.
				
				\begin{claim} \label{cl:divide}
					$\cF^1(C_t, \cP, \betak_1) \repset{q} \cF(T, \betak, G, C_t, \cP, \betak_1)$, where $\cF(T, \betak, G, C_t, \cP, \betak_1) \subseteq \cF(T, \betak, G)$ is the set of minimal independent multiway cuts that are compatible with the guess $(C_t, \cP, \betak_1)$. 
				\end{claim}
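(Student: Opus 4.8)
The plan is to prove the inclusions $\cF^1(C_t,\cP,\betak_1) \subseteq \cF(T,\betak,G,C_t,\cP,\betak_1)$ and $\cF^1(C_t,\cP,\betak_1) \repset{q} \cF(T,\betak,G,C_t,\cP,\betak_1)$ separately, both built on a single structural ``gluing'' statement that I would isolate first. Write $\betak_2 = \betak-\betak_1$ and $H = G[V(C_t)\cup\reach(t)]$, so $G_1 = H^*/\cP$ has terminal set $T_1 = V(\cP)$, while $G_2$ is obtained from $G^*/\cP$ by deleting the interior of $C_t$ (keeping $t$ itself precisely when the block $P_t\ni t$ has $|P_t|\ge 2$), with $T_2$ as specified. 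Since the two recursive parameters $\betak_1,\betak_2$ both lie in $\{1,\dots,\betak-1\}$, the induction hypothesis applies to the two recursive calls and gives $\cF'(T_i,\betak_i,G_i) \repset{q+\betak_{3-i}} \cF(T_i,\betak_i,G_i)$ for $i\in\{1,2\}$, with each returned family consisting of minimal independent multiway cuts of the prescribed size; these are the objects I would feed into the argument.

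The heart of the proof is the following equivalence. Because $C_t$ is a connected component of $G-S$, every vertex outside $V(C_t)$ that has a neighbour in $C_t$ lies in $\reach(t)$; and for any $X_1\subseteq V(C_t)\setminus\{t\}$, two vertices of $\reach(t)\cup\{t\}$ lying in distinct blocks of $\cP$ are disconnected in $H-X_1$ exactly when $X_1$ is a multiway cut for $T_1$ in $G_1$ (a connecting path contracts to a path joining the corresponding merged terminals of $G_1$, and conversely, using that merged terminals are never deleted). Combining the two observations, I would show the ``forward'' direction: if $X_1$ is a multiway cut for $T_1$ in $G_1$ and $X_2\subseteq V(G)\setminus(V(C_t)\cup\reach(t))$ is a multiway cut for $T_2$ in $G_2$, then $X_1\cup X_2$ is a multiway cut for $T$ in $G$. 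Indeed, from a hypothetical $a$--$b$ path $(a,b\in T)$ in $G-(X_1\cup X_2)$ one replaces every maximal subpath lying inside $V(C_t)$ by the single merged vertex of the $\cP$-block that contains both of its $\reach(t)$-endpoints (those two endpoints must share a block, since $X_1$ severs every cross-block connection in $H-X_1$), and obtains an $a$--$b$ walk between two distinct terminals of $G_2-X_2$, a contradiction; the subcase $t\in\{a,b\}$ is handled the same way, using that $P_t$ becomes the terminal $v_t$ of $G_2$ when $|P_t|\ge2$, and that $|P_t|=1$ forces $t$ to be already separated from $T\setminus\{t\}$. I would also record the ``converse'': a multiway cut $Z$ for $T$ in $G$ with $Z\cap\reach(t)=\emptyset$ restricts to a multiway cut $Z\cap V(C_t)$ for $T_1$ in $G_1$ and to $Z\setminus V(C_t)$ for $T_2$ in $G_2$, proved by reading the same rerouting backwards (this is where the minimality of the strong separator $S$ enters, to route $\reach(t)$-vertices reached by a lifted path back out to terminals).

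Given the gluing statement, the two inclusions follow. For containment, any $R=R_1\cup R_2\in\cF^1(C_t,\cP,\betak_1)$ has $R_1\subseteq V(C_t)\setminus\{t\}$ of size $\betak_1$ and $R_2$ disjoint from $V(C_t)\cup\reach(t)$, so $|R\cap V(C_t)|=\betak_1$ (compatibility (i)) and $R\cap(V(C_t)\cup\reach(t))=R_1$ severs the $\cP$-blocks in $H-R$ (compatibility (ii)), while the filtering step already forces $R$ to be a minimal independent multiway cut of size $\betak$ for $T$ in $G$; hence $R$ is compatible with the guess. For the representation inclusion, given $Y$ with $|Y|=q$ and $O\in\cF(T,\betak,G,C_t,\cP,\betak_1)$ that fits $Y$, I would restrict to the relevant case $O\cap S=\emptyset$ (so $O\cap\reach(t)=\emptyset$, as $\reach(t)\subseteq S$; the solutions meeting $S$ are exactly those represented by \Cref{cl:s-branchingrepset}), split $O=O_1\sqcup O_2$ with $O_1=O\cap V(C_t)$, and use the gluing statement plus the minimality of $O$ to conclude $O_1\in\cF(T_1,\betak_1,G_1)$ and $O_2\in\cF(T_2,\betak_2,G_2)$. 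Two applications of the representation property then finish it: first extract $\hat X_1\in\cF'(T_1,\betak_1,G_1)$ fitting $O_2\cup Y$ (of size $q+\betak_2$), then $\hat X_2\in\cF'(T_2,\betak_2,G_2)$ fitting $\hat X_1\cup Y$ (of size $q+\betak_1$); set $\hat X=\hat X_1\cup\hat X_2$, and verify $\hat X$ fits $Y$ and $\hat X_1$ fits $\hat X_2$, so $\hat X\in\cF'(T_1,\betak_1,G_1)\bullet\cF'(T_2,\betak_2,G_2)$. By the forward direction $\hat X$ is a multiway cut for $T$ in $G$, and by the converse direction together with the minimality of $\hat X_1$ and $\hat X_2$ it is minimal; it is independent of size $\betak$, so it survives the filter and lies in $\cF^1(C_t,\cP,\betak_1)$, as required.

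I expect the gluing statement — specifically its converse direction, and through it the transfer of minimality from $G$ to the two subgraphs — to be the main obstacle; this is where careful bookkeeping of the rerouting is needed and where the minimality of the strong separator $S$ (every vertex of $S$ lying on a short terminal-to-terminal path meeting $S$ only once, as in \Cref{lem:strongintersect}) must be invoked. The secondary source of friction is keeping the two cases $|P_t|=1$ and $|P_t|\ge2$ straight throughout, since they decide whether the terminal $t$ survives into $G_2$ — but these are exactly the situations that the definition of ``compatible with the guess'' and the two-case construction of $G_2$ were set up to absorb, so beyond the rerouting there should be no real surprises.
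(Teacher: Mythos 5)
Your proposal is correct and follows essentially the same route as the paper: the containment half is identical, the two-step extraction of $\hat{X}_1$ (fitting the $(q+\betak_2)$-sized set $(O\cup Y)\setminus O_1$) and then $\hat{X}_2$ (fitting $\hat{X}_1\cup Y$) is exactly the paper's construction, and your ``gluing'' statement is the paper's shortest-path rerouting case analysis (Cases 1a/1b/2a/2b) packaged as a standalone lemma. One small caution: your ``converse'' as stated for an arbitrary multiway cut $Z$ with $Z\cap\reach(t)=\emptyset$ is too strong --- that $Z\cap V(C_t)$ is a multiway cut for $T_1$ in $G_1$ is precisely compatibility condition (ii) of the guess and does not follow from $Z$ being a multiway cut for $T$ alone --- but you only ever apply it to the compatible set $O$, where it holds by definition, so the argument goes through.
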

			
			\begin{proof}
            First, we show that $\cF^1(C_t, \cP, \betak_1) \subseteq \cF(T, \betak, G, C_t, \cP, \betak_1)$. To this end, consider any $R \in \cF^1(C_t, \cP, \betak_1)$, which implies that there exists some $P \in \cF'(T_1, \betak_1, G_1), Q \in \cF'(T_2, \betak_2, G_2)$, such that $R = P \cup Q$, $P \cap Q = \emptyset$, and $R$ is indeed a minimal multiway cut of size $\betak$ for $T$ in $G$. Firstly, note $R \cap V(C_t) = P$, which implies that $|R \cap V(C_t)| = |P| = \betak_1$. Secondly, since $P \in \cF'(T_1, \betak_1, G_1) \repset{q+\betak_2} \cF(T_1, \betak_1, G_1)$, it follows that, for any $u, v \in \reach(t) \cup \LR{t}$, such that $u, v$ belong to different parts of $\cP$, there is no path from $u$ to $v$ in $G_1 - P$. Therefore, there is no path from $u$ to $v$ in $G[(V(C_t) \cup \reach(t))] - P = G[(V(C_t) \cup \reach(t)) \setminus R]$. This shows that $R$ is indeed compatible with the guess $(C_t, \cP, \betak_1)$, i.e., $R \in \cF(T, \betak, G, C_t, \cP, \betak_1)$. 
            

            Next, we show that $\cF^1(C_t, \cP, \betak_1)$ also $q$-represents $\cF(T, \betak, G, C_t, \cP, \betak_1)$. To this end, consider $Y \subseteq U(\cM)$ of size $q$, for which there exists some $O  \in \cF(T, \betak, G, C_1, \cP, \betak_1)$ such that $O$ fits $Y$. We show that there exists some $\hat{X} \in \cF^1(C_t, \cP, \betak_1)$ such that $\hat{X}$ also fits $Y$.    
            Let $O_i \coloneqq O \cap V(G_i)$, $Y_i \coloneqq Y \cap V(G_i)$, and $Y_{out} \coloneqq Y \setminus V(G)$. Note that $Y = Y_1 \cup Y_2 \cup Y_{out}$. Now we define the set $\hat{X_i}$ for each $i \in [2]$. The set $\hat{X_1}$ is a representative for the set $O_1$ in the graph $G_1$ for the $q+\betak_2$ sized set $(O \cup Y) \setminus O_1$. Let  $X_2= \hat{X_1} \cup Y$. As $|X_2|= q+\betak_1$ by induction, there exists some $\hat{X_2} \in \cF'(T_2, \betak_2, G_2)$, such that $\hat{X_2} $ is a minimum multiway cut of size $\betak_2$ for $T_2$ in $G_2$, and $\hat{X_2}$ fits  $O_2$.
				 Now, let $\hat{X} \coloneqq \hat{X_1} \cup  \hat{X_2}$. Clearly $|\hat{X}|= |\hat{X_1}| +  |\hat{X_2}|= \betak_1 + \betak_2$.  By our argument, we can conclude that $\hat{X_{2}} \cup X_{2} $ is an independent set. Since $\hat{X_{2}} \cup X_{2} =\hat{X_{2}} \cup \hat{X_1} \cup Y = \hat{X} \cup Y$, we have that $\hat{X} \cup Y$ is an independent set. 
     
                It remains to show that $ \hat{X} $ is a minimal multiway cut for $T$ in $G$. Consider a shortest path $P_{12}$ between two terminal vertices, say $t_1, t_2 \in T$ in $G-\hat{X}$.

                \begin{description}
                    \item[Case 1] First, we analyze the case where $t \notin \{t_1, t_2\}$. Naturally, $P_{12}$ is also a shortest path between $t_1$ and $t_2$. In this case, the path $P_{12}$  must use a vertex from $V(C_t)$, otherwise $\hat{X_2}$ is not a solution for $\cJ_1$. We first claim that, $|V(P_{12}) \cap \reach(t)| \ge 2$. Suppose to the contrary that $|V(P_{12} \cap \reach(t)| \le 1$, then there exists a path $P'_{12}$ in $G_2 - \hat{X}_2$ between $t_1$ and $t_2$, which contradicts the fact that $\hat{X}_2$ is a solution for $\cJ_2$. Thus, $|V(P_{12}) \cap \reach(t)| \ge 2$, Now consider distinct $x_1, x_2 \in V(P_{12}) \cap \reach(t)$, such that the internal vertices of the subpath $P'_{12}$, say $I$, between $x_1$ and $x_2$ satisfies that $\emptyset \neq I \subseteq V(C_t)$. 
                    Now, suppose that
                    $x_1$ and $x_2$ are adjacent in $G$, then we can obtain a path between $t_1$ and $t_2$  by replacing $x_1, V(I), x_2$ by the edge $x_1x_2$ in $P_{12}$, which yields a shorter path between $t_1$ and $t_2$, which is a contradiction to the assumption that $P_{12}$ is a shortest path. In summary, we have a subpath $P'_{12}$ of $P_{12}$ such that the endpoints of $P'_{12}$, that is, $x_1, x_2 \in \reach(t)$, and $\emptyset\neq I \subseteq V(C_t)$. Now we consider different cases and arrive at a contradiction.
                \\Case (1a): 
                 $x_1$ and $x_2$ are in the same part $P \in \cP$. Then $P$ is contracted into a single vertex $v_P$ in $G_2$. This implies that there is also a path between $t_1$ and $t_2$ in $G_2-\hat{X}_2$, which is a contradiction.
                 \\Case (1b): $x_1$ and $x_2$ are in different parts $P_1$ and $P_2$ respectively, which are identified into the vertices $v_1, v_2$ in $G_1$, respectively. However, this implies that there is a path $P'_{12}$ between $v_1$ and $v_2$ in $G_1-\hat{X}_1$, which contradicts that $X_1$ is a solution for $\cJ_1$.

                \item[Case 2] Now, we analyze the case where $t \in \{t_1, t_2\}$. WLOG, assume that $t=t_1$; therefore, $P_{12}$ is the shortest path between $t$ and $t_2$ in $G-\hat{X}$.        
                In this case, as $t \in V(C_t)$ the path $P_{12}$  must use a vertex from $(V(C_t) \setminus \{t\}) \cup \reach(t)$.  In path $P_{12}$, let $x_2$ denote the vertex in $\reach(t)$ such that all the internal vertices of the subpath connecting $x_2$ to $t$ along $P_{2}$ are from $V(C_t)$. Now we consider different cases and arrive at a contradiction.  \\Case (2a): 
                 $x_2$ and $t$ are in the same part $P \in \cP$.  Then it follows that there is a path from $t_2$ to $v_P$ in $G - \hat{X_2}$, which contradicts the assumption that $ \hat{X_2} $ is a solution for $\cJ_2$.
                            \\Case (2b): $t$ and $x_2$ are in different parts. Now, we note that there is a path between $t$ and $x_2$ in             $G[V(C_t) \cup \reach(t)] - \hat{X_1}$, it contradicts the assumption that $ \hat{X_1} $ is a solution for $\cJ_1$.
                                
                \end{description}

                This completes the claim. 
			\end{proof}

   Here is a simple observation.

   \begin{claim}\label{claim:compatible}
       Consider any $O \in \cF(T, \betak, G)$ such that $O \cap S = \emptyset$. Then, there is some guess $(C_t, \cP, \betak_1)$ with which $O$ is compatible.
       
   \end{claim}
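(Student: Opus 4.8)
The plan is to read off a compatible guess directly from the solution $O$. First I would apply \Cref{lem:strongintersect} to the strong separator $S$ — which is legitimate because, having passed \hyperref[step:base]{Step 1}, we have $\betak \ge 2$ and $|T| \ge 3$ — to obtain the pair of terminal components $C_s, C_t$ of $G - S$. Since $O \in \cF(T, \betak, G)$ is a minimal multiway cut for $T$ and, by hypothesis, $O \cap S = \emptyset$, the conclusion of \Cref{lem:strongintersect} forces at least one of $C_s, C_t$ — say $C_t$; the case of $C_s$ is symmetric and is accounted for by the first branch of \hyperref[step:guessintersection]{Step 4} — to satisfy $1 \le |O \cap V(C_t)| \le \betak - 1$.

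Next I would fix the two remaining coordinates of the guess. I put $\betak_1 \coloneqq |O \cap V(C_t)|$, which by the previous paragraph lies in $\LR{1, \dots, \betak - 1}$ and hence is one of the values the branching loop ranges over. For the partition, I first observe that every vertex of $\reach(t) \cup \LR{t}$ is present in the graph $G[V(C_t) \cup \reach(t)] - O$: the vertices of $\reach(t)$ lie in $S$, which is disjoint from $O$, and $t$ is a terminal, hence not in $O$. I then let $\cP$ be the partition of $\reach(t) \cup \LR{t}$ whose parts are the intersections of this vertex set with the connected components of $G[V(C_t) \cup \reach(t)] - O$; this is a genuine partition since ``lying in the same connected component'' is an equivalence relation.

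It then remains to check that $O$ is compatible with the guess $(C_t, \cP, \betak_1)$ according to the definition stated just before \Cref{cl:divide}. Condition (i), that $|O \cap V(C_t)| = \betak_1$, is immediate from the choice of $\betak_1$. For condition (ii), if $u, v \in \reach(t) \cup \LR{t}$ lie in distinct parts of $\cP$, then by construction they lie in distinct connected components of $G[V(C_t) \cup \reach(t)] - O$, so there is no path from $u$ to $v$ in $G[V(C_t) \cup \reach(t)] - O$ — which is exactly condition (ii). Since the branching loop iterates over all terminal components, all partitions of $\reach(t) \cup \LR{t}$, and all admissible values of $\betak_1$, the triple $(C_t, \cP, \betak_1)$ is among the guesses the algorithm examines, and we are done. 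I do not anticipate any real obstacle here; the only points requiring a moment's care are that $\reach(t) \cup \LR{t}$ survives the deletion of $O$ (so that the induced partition is well defined) and that $\betak_1$ falls in the allowed range $\LR{1, \dots, \betak-1}$, both of which are handled above — the substantive content was already extracted in \Cref{lem:strongintersect}.
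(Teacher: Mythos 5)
Your proof is correct and follows essentially the same route as the paper: extract the guess from $O$ itself, using \Cref{lem:strongintersect} to pin down the component and the value $\betak_1$, and then define $\cP$ from connectivity after deleting $O$. The one divergence is in the partition: you take $\cP$ to be induced purely by the connected components of $G[V(C_t)\cup\reach(t)]-O$, whereas the paper additionally merges two vertices of $\reach(t)\cup\LR{t}$ when they are connected in $G-(V(C_t)\cup O)$, i.e.\ through the outside of $C_t$; since the compatibility condition (ii) is only the one-way implication ``different parts $\Rightarrow$ no path in $G[V(C_t)\cup\reach(t)]-O$,'' both partitions satisfy it, and your finer choice is arguably cleaner since it is manifestly an equivalence relation, while the paper's ``connected in one graph or the other'' relation needs a transitive closure to be a genuine partition.
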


   \begin{claimproof}
  Assuming that $O \cap S = \emptyset$, we know that at least one of the components $C_s$ or $C_t$ has a nonempty intersection with the vertices of $O$. Now we obtain a guess that is compatible with $O$. Assume that $|V(C_t) \cap O|=\betak_1 > 0$.  Let $\cP$ be the partition of the vertices in $\reach(t) \cup \{t\}$ defined as follows. A pair of vertices $u$ and $v$ from $\reach(t) \cup \{t\}$ belongs to a different part if they are not connected in the graph $G[V[C_t] \cup \reach(t)] - O$ and the graph $G- (V(C_t) \cup O)$. It is easy to see that guess $(C_t, \cP, \betak_1)$ is compatible with $O$.
   \end{claimproof}

		\end{itemize}

	\end{description}

	\medskip 
	
	\item[Step 5: Return as output] We return \textsc{no} if each of the branching steps returns \no to their corresponding instance. Otherwise, we do the following.
	Notice that in each branching step, we have at most $\betak+ 2(\betak-1)d_{\betak+1} = \betak^{\Oh(\betak)}$ many recursive calls, where $d_{\betak+1}$ denotes the $(\betak+1)$ th Bell's number: number of  possible partitions of a set of $(\betak+1)$ elements. First we take union of all the solution sets corresponding to every recursive calls, specifically, we compute 

    \begin{equation}
        \cF^1(T, \betak, G) \coloneqq \bigcup_{v \in S} \cF^1_v \cup  \bigcup_{(C_t, \cP, \betak_1) \in {\cal G}_t} \cF^1(C_t, \cP, \betak_1)  \cup \bigcup_{(C_s, \cP, \betak_1) \in {\cal G}_s} \cF^1(C_t, \cP, \betak_1). \label{eqn:repsetcomputation1}
    \end{equation}
    Then we compute 
     \begin{equation}
         \cF'(T, \betak, G) \repset{q} \cF^1(T, \betak, G) \label{eqn:repsetcomputation2}
     \end{equation}
    using \Cref{prop:repset} and return the set $\cF'(T, \betak, G)$ as the output for the instance $\cJ = (G,\cM, T, Q, \betak, q)$.  
	

	
	
\end{description}
This finishes the description of \mwcut.

\subparagraph{Correctness.}

	\begin{lemma}[Safeness of \mwcut]
		Our branching algorithm returns the set  $\cF'(T, \betak, G) \repset{q} \cF(T, \betak, G)$ of size $2^{\betak+q}$.  
	\end{lemma}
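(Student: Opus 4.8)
The plan is to prove the lemma by induction on the lexicographically ordered pair $(\betak,|V(G)|)$, strengthening the statement to: on every input $\cJ=(G,\cM,T,Q,\betak,q)$ with $\betak+q\le r$, the algorithm \mwcut returns a family $\cF'(T,\betak,G)$ with $\cF'(T,\betak,G)\subseteq\cF(T,\betak,G)$, $\cF'(T,\betak,G)\repset{q}\cF(T,\betak,G)$, and $|\cF'(T,\betak,G)|\le\binom{\betak+q}{\betak}\le 2^{\betak+q}$. The inclusion and the size bound are the easy halves: the former is the first half of each of \Cref{cl:basecase,cl:disconnected,cl:s-branchingrepset,cl:divide} and is also enforced by the explicit filtering in Steps~2 and~5, while the latter is automatic whenever the output is produced by \Cref{prop:repset} (Steps~2 and~5) or by \Cref{thm:stcuttheorem} (Step~1b) and trivial otherwise. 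So the substance is the $q$-representation property, which I would establish by walking through the algorithm's steps in the order in which they can fire.

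First I would handle preprocessing and the base case. Rules~\hyperref[step:disconnterm1]{0.a} and~\hyperref[step:disconnterm2]{0.b} pass to a proper induced subgraph with $\betak$ unchanged, so $(\betak,|V(G)|)$ strictly decreases and the induction hypothesis applies to the one recursive call; soundness of these rules (the deleted object lies in no minimal solution) is immediate, so the returned family represents $\cF(T,\betak,G)$. Rules~\hyperref[step:edgeterminal]{0.c} and~\hyperref[step:termnumbers]{0.c} only return \no, which is correct because then $\cF(T,\betak,G)=\emptyset$. Step~1 is exactly \Cref{cl:basecase} (invoking \Cref{thm:stcuttheorem} when $|T|=2$). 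For Step~2 (disconnected $G$) every recursive call is made with parameter $\betak_i\le\betak-1<\betak$, so by induction $\cF'_i\repset{q+\betak-\betak_i}\cF(T_i,\betak_i,G_i)$; \Cref{cl:disconnected} then gives $\cF'(T,\betak,G)\repset{q}\cF(T,\betak,G)$ and \Cref{prop:repset} the size bound.

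The main case is $G$ connected, $\betak\ge 2$, and $|T|\ge 3$ (by \Cref{obs:preproc}). Here \Cref{lem:strong} (Step~3) either correctly reports $\cF(T,\betak,G)=\emptyset$ or hands us a strong separator $S$ with $|S|\le\betak$ together with components $C_s,C_t$ as in \Cref{lem:strongintersect}. Fix any $O\in\cF(T,\betak,G)$. By \Cref{lem:strongintersect} either $O\cap S\neq\emptyset$ — so $O\in\cF(T,\betak,v,G)$ for some $v\in S$ — or $O\cap S=\emptyset$, and then by \Cref{claim:compatible} $O$ is compatible with some guess in ${\cal G}_t$ or in ${\cal G}_s$. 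Hence
\begin{align*}
\cF(T,\betak,G) &= \bigcup_{v \in S} \cF(T,\betak,v,G) \cup \bigcup_{(C_t,\cP,\betak_1)\in{\cal G}_t} \cF(T,\betak,G,C_t,\cP,\betak_1)\\
&\qquad {}\cup \bigcup_{(C_s,\cP,\betak_1)\in{\cal G}_s} \cF(T,\betak,G,C_s,\cP,\betak_1).
\end{align*}
Every recursive call spawned in Step~4 has parameter $\betak-1$ (case~A) or parameters $\betak_1,\betak_2\in\{1,\dots,\betak-1\}$ (case~B), so the induction hypothesis applies, and \Cref{cl:s-branchingrepset,cl:divide} give $\cF^1(v)\repset{q}\cF(T,\betak,v,G)$ and $\cF^1(C_t,\cP,\betak_1)\repset{q}\cF(T,\betak,G,C_t,\cP,\betak_1)$ (symmetrically for $C_s$). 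Since a union of $q$-representative families of pieces $q$-represents the union of those pieces — immediate from \Cref{repsetdefn} — the family $\cF^1(T,\betak,G)$ of \eqref{eqn:repsetcomputation1} satisfies $\cF^1(T,\betak,G)\repset{q}\cF(T,\betak,G)$. Finally \eqref{eqn:repsetcomputation2} computes $\cF'(T,\betak,G)\repset{q}\cF^1(T,\betak,G)$ via \Cref{prop:repset}, so by transitivity of representation (first item of \Cref{obs:repsetprops}) we get $\cF'(T,\betak,G)\repset{q}\cF(T,\betak,G)$, with $|\cF'(T,\betak,G)|\le\binom{\betak+q}{\betak}\le 2^{\betak+q}$; termination follows from the same measure $(\betak,|V(G)|)$.

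The hard part is not any single step but keeping the bookkeeping consistent: each recursive call returns a family that represents its own target for a \emph{shifted} representation parameter (namely $q+\betak-\betak_i$ in Step~2 and $q+\betak_2$, $q+\betak_1$ for the two sub-instances of Step~4), and these must recombine under the subset-convolution $\bullet$ so that the product $q$-represents the product of the targets — this is precisely the third item of \Cref{obs:repsetprops}, and the shifts are chosen exactly so that it applies. The one subtlety to check is that the matroid passed to a recursive call is the \emph{same} $\cM$ while the graph shrinks, so ``fits'' is preserved verbatim down the recursion; once that is noted, the three nested claims \Cref{cl:disconnected,cl:s-branchingrepset,cl:divide} already absorb the real combinatorial work, and the induction closes.
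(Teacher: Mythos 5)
Your proof is correct and follows essentially the same route as the paper's: it reduces everything to \Cref{cl:basecase,cl:disconnected,cl:s-branchingrepset,cl:divide}, uses \Cref{lem:strong}, \Cref{lem:strongintersect} and \Cref{claim:compatible} to decompose $\cF(T,\betak,G)$ into the union over branches, and closes with the union property and transitivity of representation plus the size bound from \Cref{prop:repset}. Your version is in fact slightly more explicit than the paper's about the induction measure and the shifted representation parameters, but the argument is the same.
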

	\begin{proof}
		Recall that $\cF(T, \betak, G)$ is the set of all minimal independent multiway cuts for $T$ of size $\betak$ in $G$. First, we consider base case, then the correctness follows from \Cref{cl:basecase}.         Now, suppose that neither of the base cases is applicable. If the graph $G$ is disconnected, then again the correctness of the algorithm follows directly from \Cref{cl:disconnected}.        Hence, finally let us consider the case when $G$ is connected, $\betak \ge 2$, and $|T| \ge 3$.  
        First, by \Cref{lem:strong}, we either correctly conclude that $\cJ = (G, \cM, T,Q, \betak, q)$ is a \no-instance, or we find a strong separator $S$ and a pair of components $C_s, C_t$ in time $2^\betak \cdot n^{\Oh(1)}$, such that for any minimal solution $O \in \cF(G, T, \betak)$ of size $\betak$, either (i) $S \cap O = \emptyset$, (ii) $1 \le |C_s \cap O| \le \betak-1$, or (iii) $1 \le |C_t \cap O| \le \betak-1$. Further, any $O$ satisfying (ii) (resp.~(iii)) must be compatible  with some guess $(C_t, \cP, \betak_1) \in {\cal G}_t$ (resp.~$(C_s, \cP, \betak_1) \in {\cal G}_s$) (by \cref{claim:compatible}). 
        This implies that, 
        \begin{align}
            \cF(T, \betak, G) = \bigcup_{v \in S} \cF(T, \betak, v, G) \ \cup  &\bigcup_{(C_t, \cP, \betak_1) \in {\cal G}_t} \cF(T, \betak, G, C_t, \cP, \betak_1)  \nonumber
            \\&\cup \bigcup_{(C_s, \cP, \betak_1) \in {\cal G}_s} \cF(T, \betak, G, C_s, \cP, \betak_1).
        \end{align}
        Then, by \Cref{cl:s-branchingrepset,cl:divide}, it follows that $\cF^1(T, \betak, G) \repset{q} \cF(T, \betak, G)$, where the former is defined in \eqref{eqn:repsetcomputation1}. Then, since $\cF'(T, \betak, G) \repset{q} \cF^1(T, \betak, G)$ as computed in \eqref{eqn:repsetcomputation2}, by \Cref{obs:repsetprops}, it follows that $\cF'(T, \betak, G) \repset{q} \cF(T, \betak, G)$. Finally, by \Cref{prop:repset}, it follows that $|\cF'(T, \betak, G)| \le 2^{\betak+q}$.        
		\end{proof}
	
	\subparagraph{Time analysis.}
	
	Note that in each of the recursive calls to our branching steps, the parameter $\betak$ decreases by at least $1$. The number of partition of the vertices in $S$ is bounded by $\betak^\betak$. So the degree at each node of the branching tree is bounded by $\betak^{\cO(\betak)}$. In the base case where $|T|=2$ or $\betak=1$ we found the corresponding solution in time $f(r, k) \cdot n^{\Oh(1)}$. The number of nodes in the tree is bounded by $\betak^{\cO(\betak)}$. Now we calculate the running to find a solution to each node assuming that we have all the solution to its child node. For this, we have to take union of at most $(\betak+ 2(\betak-1)(\betak+1)^{\betak+1})^\betak$ many sets where the size of each set is at most $\binom{\betak+q}{\betak}$. This union takes $\binom{\betak+q}{\betak} \cdot (\betak+ 2(\betak-1)(\betak+1)^{\betak+1})^\betak$ time. Then finding the $q$-representative set will take time $\binom{k+q}{k}$ using at most $\Oh(|{\cal A}|(\binom{k+q}{k})k^{\omega}+(\binom{k+q}{k})^{\omega-1})$ (by \cref{prop:repset}, where $\cal A$ is the size of the set after taking the union.

	\medskip 
	
	Hence, we have the following theorem.

 \begin{restatable}{theorem}{multiwaycuttheorem} \label{thm:multiwaycuttheorem}
	There exists a deterministic algorithm for {\sc Generalized} \probmwcut that runs in time $f(r, k) \cdot n^{\Oh(1)}$, and outputs $\cF'(T, \betak,G) \repset{q} \cF(T, \betak, G)$ such that $|\cF'(T, \betak,G)| = 2^{\betak+q}$. Here $\cF(T, \betak, G)$ denotes the set of all $\betak$ size multiway cuts for $T$ in $G$, assuming that there is no independent multiway cut of size $\betak' < \betak$. 
\end{restatable}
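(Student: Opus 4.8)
The plan is to prove the theorem by analyzing the recursive algorithm \mwcut described above, with correctness established by induction on $\betak$ (and, inside a fixed $\betak$, on $|V(G)|$ for the preprocessing and the disconnected-graph step). First I would record the standing assumption that $\betak$ is the smallest value for which an independent multiway cut of size $\betak$ exists; this makes every member of $\cF(T, \betak, G)$ automatically an inclusion-minimal multiway cut, since a strict independent subset would contradict minimality of $\betak$. The base cases are $\betak = 1$ (solved by inspection of singletons) and $|T| = 2$ (solved by \Cref{thm:stcuttheorem} for {\sc Generalized} \indstcut), whose correctness is \Cref{cl:basecase}. The soundness of the Step~0 preprocessing rules is immediate, and the disconnected case in Step~2 is handled by observing that every minimal solution must meet every connected component (so there are at most $\betak$ of them and the parameter strictly decreases in each recursive call), composing the per-component representative families via the $\bullet$ operator, filtering to minimal multiway cuts, and re-representing; correctness here is \Cref{cl:disconnected} together with \Cref{obs:repsetprops}.

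For the main case --- $G$ connected, $\betak \ge 2$, $|T| \ge 3$ --- the key step is to invoke \Cref{lem:strong} to either certify a \textsc{No}-instance in time $2^{\betak}\cdot n^{\Oh(1)}$ or obtain a strong separator $S$ of size at most $\betak$ and a pair of terminal components $C_s, C_t$ as in \Cref{lem:strongintersect}. This yields the dichotomy: every $O \in \cF(T, \betak, G)$ satisfies either (A) $O \cap S \neq \emptyset$, or (B) $O \cap S = \emptyset$ and $1 \le |O \cap V(C_s)| \le \betak-1$ or $1 \le |O \cap V(C_t)| \le \betak-1$. Case (A) is handled by branching on $v \in S \cap O$, deleting $v$, and recursing with parameter $\betak-1$ (correctness: \Cref{cl:s-branchingrepset}). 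Case (B) is handled by guessing which component ($C_s$ or $C_t$), the split $\betak = \betak_1 + \betak_2$ with $1 \le \betak_1, \betak_2 \le \betak-1$, and a partition $\cP$ of $\reach(t) \cup \LR{t}$ encoding the connectivity pattern of the solution inside $G[V(C_t) \cup \reach(t)]$; applying the \identity operation along $\cP$ produces two subinstances, each with a strictly smaller parameter, whose output families are combined with $\bullet$ and filtered to minimal multiway cuts. Correctness that each guess's family $q$-represents the compatible solutions is \Cref{cl:divide}, and that every $O$ disjoint from $S$ is compatible with some guess is \Cref{claim:compatible}. Hence $\cF(T, \betak, G)$ is the union of the sets covered by the branches, so $\cF^1(T, \betak, G) \repset{q} \cF(T, \betak, G)$, and then $\cF'(T, \betak, G) \repset{q} \cF^1(T, \betak, G) \repset{q} \cF(T, \betak, G)$ by \Cref{obs:repsetprops}, with $|\cF'(T, \betak, G)| \le 2^{\betak+q}$ by \Cref{prop:repset} (using $\betak + q \le \R$).

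For the running time I would argue that the recursion tree has depth at most $\betak$, since the parameter drops by at least one at every recursive call, and that the branching degree at each node is at most $\betak + 2(\betak-1)\,d_{\betak+1} = \betak^{\Oh(\betak)}$, where $d_{\betak+1}$ is the $(\betak+1)$-st Bell number; thus the tree has $\betak^{\Oh(\betak^2)}$ nodes. Each call of \Cref{lem:strong} costs $2^{\betak}\cdot n^{\Oh(1)}$; the base case $|T|=2$ pays the cost of \Cref{thm:stcuttheorem}, i.e.\ $2^{\Oh(\betak^4\log\betak)}\cdot 2^{\Oh(\R)}\cdot n^{\Oh(1)}$; and at an internal node, taking a union of at most $\betak^{\Oh(\betak)}$ families each of size $\binom{\betak+q}{\betak}$ and computing a $q$-representative family via \Cref{prop:repset} costs $\binom{\betak+q}{\betak}^{\Oh(1)}\cdot\betak^{\Oh(\betak)}\cdot n^{\Oh(1)}$. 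Multiplying over the tree gives total time $f(\R,\betak)\cdot n^{\Oh(1)}$ for a suitable $f$, which is the claimed bound.

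I expect the main obstacle to be Case~(B), and specifically \Cref{cl:divide}: one must verify both the easy containment (a composed-and-filtered set respects the guessed partition) and, more delicately, that reassembling $\hat X = \hat X_1 \cup \hat X_2$ from representatives of the two subinstances produces a genuine minimal multiway cut of $G$. The latter requires the shortest-path exchange argument that splits on whether the terminal $t$ is one of the two separated terminals and on whether the two ``entry'' vertices of the path into $C_t$ land in the same part of $\cP$. A secondary subtlety is the bookkeeping of the auxiliary budget $q$ across the recursion --- the two calls in Case~(B) carry budgets $q+\betak_2$ and $q+\betak_1$ --- which is exactly what makes the $\bullet$-composition of the returned families valid under \Cref{obs:repsetprops}.
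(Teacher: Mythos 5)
Your proposal is correct and follows essentially the same route as the paper's own argument: induction on $\betak$ with the base cases of \Cref{cl:basecase}, the disconnected case via \Cref{cl:disconnected}, the strong-separator dichotomy from \Cref{lem:strong} and \Cref{lem:strongintersect}, the two branching claims \Cref{cl:s-branchingrepset} and \Cref{cl:divide} together with \Cref{claim:compatible}, and the final union-and-re-represent step via \Cref{obs:repsetprops} and \Cref{prop:repset}, with the same recursion-tree time analysis. No substantive differences to report (your node count $\betak^{\Oh(\betak^2)}$ is in fact the more careful bound, but it is absorbed into $f(r,k)$ either way).
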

	
	

\section{Lower bound}\label{sec:lower}
In this section, we show an unconditional lower bound for the \textsc{Independent $(s,t)$-Cut} problem when the input matroid is given by the independence oracle. The proof is inspired by the reductions used for the similar statements in~\cite[Theorem~1]{fomin2024stability}.

\begin{theorem}\label{thm:lb-uncond}
There is no algorithm solving \textsc{Independent $(s,t)$-Cut} with matroids of rank $k$ represented by the independence oracles using $f(k)\cdot n^{o(k)}$ oracle calls for any computable function $f$.
\end{theorem}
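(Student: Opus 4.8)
The plan is to prove the bound by an adversary/indistinguishability argument: for each $k$ and each sufficiently large $n$ we fix a single graph $G$ with terminals $s,t$, a large family $\mathcal{C}$ of $k$-element vertex sets that are all $(s,t)$-cuts of $G$, and a collection of matroids of rank $k$ on $V(G)$ consisting of one ``no''-matroid $\cM_{\bot}$, under which \emph{no} member of $\mathcal{C}$ is independent, and one ``yes''-matroid $\cM_X$ for each $X\in\mathcal{C}$, under which $X$ is independent (hence an independent $(s,t)$-cut) but which differs from $\cM_{\bot}$ only in the single independence query ``is $X$ independent?''. An algorithm making fewer than $|\mathcal{C}|$ oracle calls on $(G,s,t,\cM_{\bot})$ never queries some $X^{\star}\in\mathcal{C}$, hence runs identically and returns the same (wrong) answer on $(G,s,t,\cM_{X^{\star}})$. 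Choosing $|\mathcal{C}|=n^{\Omega(k)}$ then rules out $f(k)\cdot n^{o(k)}$ algorithms for every computable $f$.

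For the graph I would take $k/2$ vertex-disjoint ``pair gadgets'' placed in parallel between $s$ and $t$ (assume $k$ even; for odd $k$ add one extra vertex lying on a length-two $s$--$t$ path). Each gadget is a layered graph with $N$ layers of two vertices each, a complete bipartite graph between consecutive layers, $s$ joined to the first layer and $t$ to the last. A direct check shows a gadget has minimum $(s,t)$-cut exactly $2$, and that its size-$2$ cuts are exactly the $N$ pairs ``layer $i$'' (removing a whole layer disconnects the gadget, while any other pair is avoided by some $s$--$t$ path). Since the gadgets meet only at $s$ and $t$, the whole graph $G$ has minimum $(s,t)$-cut $k$, and its $(s,t)$-cuts of size $k$ are precisely the sets obtained by choosing one full layer from each gadget; let $\mathcal{C}$ be this family. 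Then $|\mathcal{C}|=N^{k/2}$, and two distinct members of $\mathcal{C}$ disagree in a whole layer of some gadget, hence in at least two vertices, so $|X\cap X'|\le k-2$ for all distinct $X,X'\in\mathcal{C}$. Taking $N=\Theta(n/k)$ gives $|V(G)|\le n$ (pad with isolated vertices if needed) and $|\mathcal{C}|\ge (n/(2k))^{k/2}=n^{\Omega(k)}$.

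For the matroids, let $\cM_{\bot}$ be the sparse paving matroid of rank $k$ on $V(G)\setminus\{s,t\}$ whose circuit–hyperplanes are exactly $\mathcal{C}$; this is a well-defined matroid precisely because $\mathcal{C}$ is pairwise $(k-2)$-intersecting. In $\cM_{\bot}$ every independent set has size at most $k$, every $(s,t)$-cut has size at least $k$, and every size-$k$ cut lies in $\mathcal{C}$ and is therefore dependent, so $(G,s,t,\cM_{\bot})$ is a no-instance. For $X\in\mathcal{C}$ let $\cM_X$ be the sparse paving matroid of rank $k$ with circuit–hyperplanes $\mathcal{C}\setminus\{X\}$; now $X$ is a basis, hence an independent $(s,t)$-cut, so $(G,s,t,\cM_X)$ is a yes-instance. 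For any $Q\subseteq V(G)\setminus\{s,t\}$ the independence oracle returns the same value under $\cM_{\bot}$ and $\cM_X$ unless $|Q|=k$ and $Q=X$: sets of size $<k$ are independent in both, sets of size $>k$ are dependent in both, and a size-$k$ set is independent iff it is not a circuit–hyperplane, while $\mathcal{C}$ and $\mathcal{C}\setminus\{X\}$ differ only at $X$. Now run a correct algorithm on $(G,s,t,\cM_{\bot})$; it outputs ``no'' after some $T$ queries. If $T<|\mathcal{C}|$, some $X^{\star}\in\mathcal{C}$ is never queried, so by the previous observation the algorithm performs the identical run, and outputs ``no'', on $(G,s,t,\cM_{X^{\star}})$ — a contradiction. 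Hence $T\ge|\mathcal{C}|=n^{\Omega(k)}$, which cannot be bounded by $f(k)\cdot n^{o(k)}$. (For randomized algorithms one applies Yao's principle to the uniform distribution over $\{\cM_{\bot}\}\cup\{\cM_X:X\in\mathcal{C}\}$.)

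The main obstacle — and the reason the construction uses the layered pair gadgets rather than plain parallel $s$--$t$ paths — is reconciling the two roles the family $\mathcal{C}$ must play simultaneously: it must be \emph{exactly} the set of minimum $(s,t)$-cuts of $G$ (so that ``there is an independent cut'' is equivalent to ``some member of $\mathcal{C}$ is independent''), and it must be realizable as the circuit–hyperplane family of a sparse paving matroid of rank $k$, i.e.\ pairwise intersecting in at most $k-2$ elements. With parallel paths the minimum cuts are all transversals, two of which may share $k-1$ vertices, and then no rank-$k$ matroid can make all of them dependent; forcing every ``choice'' to toggle an entire two-vertex layer is precisely what makes the family spread enough. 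Once the gadget's cut structure and the oracle-indistinguishability of $\cM_{\bot}$ from each $\cM_X$ are verified (both routine), the counting argument above finishes the proof.
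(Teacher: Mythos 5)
Your proposal is correct and essentially reproduces the paper's construction: your ``pair gadgets'' are exactly the graph $G_{p,q}$, and your matroids $\cM_X$ coincide with the paper's $\M_W$ (the paper verifies the exchange axiom by hand where you invoke the sparse-paving/circuit--hyperplane criterion, and it distinguishes two yes-instances $\M_W,\M_{W'}$ where you distinguish a no-instance $\cM_{\bot}$ from a yes-instance, but these are cosmetic differences). No gaps.
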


\begin{proof}
Let $p$ and $q$ be positive integers. We construct the graph $G_{p,q}$ as follows (see~\Cref{fig:Gpq}).
\begin{itemize}
\item Construct two vertices $s$ and $t$.
\item For each $i\in[p]$, construct a set $A_i=\{a_1^i,\ldots,a_q^i\}$ of $q$ vertices and a set $B_i=\{b_1^i,\ldots,b_q^i\}$ of $q$ vertices.
\item For each $i\in[p]$, make $s$ adjacent to $a_1^i$ and $b_1^i$, and make $t$ adjacent to $a_q^i$ and $b_q^i$.
\item For each $i\in[p]$ and $j\in[q-1]$, make $a_j^i$ adjacent to $a_{j+1}^i$ and $b_{j+1}^i$, and make $b_j^i$ adjacent to $a_{j+1}^i$ and $b_{j+1}^i$.
\end{itemize}
We have that $G_{pq}$ has $2pq+2$ vertices. Notice that each minimum $(s,t)$-cut in $G_{p,q}$ has size $2p$ and is of form $\bigcup_{i=1}^p\{a_{j_i}^i,b_{j_i}^i\}$ for some $j_1,\ldots,j_p\in[q]$.

\begin{figure}[t]
\centering
\scalebox{0.75}{
\input{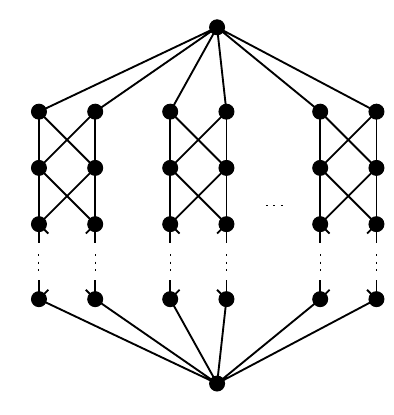_t}}
\caption{Construction of $G_{p,q}$.}
\label{fig:Gpq}
\end{figure}

Consider a family of indices $j_1,\ldots,j_p\in[q]$ and set 
$W=\bigcup_{i=1}^p\{a_{j_i}^i,b_{j_i}^i\}$.
We define the matroid $\M_W$ with the ground set $V(G_{p,q})$ where each independent set is a subset of $V(G_{p,q})\setminus\{s,t\}$
as follows for $k=2p$:
\begin{itemize}
\item  Each set $X\subseteq V(G_{p,q})\setminus\{s,t\}$ of size at most $k-1$ is independent and any set of size at least $k+1$ is not independent.
\item A set $X\subseteq V(G_{p,q})\setminus\{s,t\}$ of size $k$ is independent if and only if either $X=W$ or there is $i\in\{1,\ldots,p\}$ such that
$|A_i\cap X|\geq 2$ or  $|B_i\cap X|\geq 2$ or there are distinct $h,j\in[q]$ such that 
$a_{h}^i,b_{j}^i\in X$.
\end{itemize}
We denote by $\I_W$ the constructed family of independent sets.  
In fact, $\M_W$ is the same as the matroid constructed in the proof of 
\cite[Theorem~1]{fomin2024stability}. We provide the proof of the following claim showing that $\M_W$ is a matroid for completeness even if it identical to the proof of \cite[Claim~1.1]{fomin2024stability} in the proof of Theorem~1 therein.

\begin{claim}[{\cite[Claim~1.1]{fomin2024stability}}]\label{cl:W-matr}
$\M_W=(V(G_{p,q}),\I_W)$ is a matroid of rank $k=2p$.
\end{claim}

\begin{proof}[Proof of \Cref{cl:W-matr}]
We have to verify that $\I_W$ satisfies the independence axioms (1)--(3). The axioms (1) and (2) for $\I_W$ follow directly from the definition of $\I_W$. To establish (3), consider arbitrary $X,Y\in \I_W$ such that $|X|<|Y|$. If $|X|<k-1$ then for any $v\in Y\setminus X$, $Z=X\cup \{v\}\in \cI_W$ because $|Z|\leq k-1$.

Suppose $|X|=k-1$ and $|Y|=k$. 
If  there is $i\in[p]$ such that $|A_i\cap X|\geq 2$ or  $|B_i\cap X|\geq 2$ or there are distinct $h,j\in[q]$ such that 
$a_{h}^i,b_{j}^i\in X$ then for any $v\in Y\setminus X$, the set $Z=X\cup \{v\}$ has the same property and, therefore, $Z\in\cI_W$.
Assume that this is not the case. By the construction of $G_{p,q}$, we have that for each $i\in[p]$, $|X\cap A_i|\leq 1$ and $|X\cap B_i|\leq 1$, and, furthermore, 
there is $j\in[q]$ such that $X\cap (A_i\cup B_i)\subseteq \{a_{j}^i,b_{j}^i\}$. 
Because $|X|=k-1$, we can assume without loss of generality that there are indices $h_1,\ldots,h_p\in [q]$ such that $X\cap (A_i\cup B_i)= \{a_{h_i}^i,b_{h_i}^i\}$ for $i\in[p]$ and $X\cap (A_p\cup B_p)=\{a_{h_p}^p\}$.
Recall that $W=\bigcup_{i=1}^p\{a_{j_i}^i,b_{j_i}^i\}$ for $j_1,\ldots,j_p\in[q]$. If there is $v\in Y\setminus X$ such that $v\neq b_{j_p}^p$ then consider $Z=X\cup\{v\}$.  We have that there is $i\in[p]$ such that
$|A_i\cap Z|\geq 2$ or  $|B_i\cap Z|\geq 2$ or there are distinct $h,j\in[q]$ such that 
$a_{h}^i,b_{j}^i\in Z$, that is,   $Z\in \I_W$. 
Now we assume that $Y\setminus X=\{b_{j_p}^p\}$. Then $Y=W$ and we can take $v=b_{j_p}^p$. We obtain that $X\cup\{v\}=Y\in \I_W$. 

By the construction of $\I$, the rank of $\M$ is $k$. This concludes the proof.
\end{proof}

We show the following lower bound for the number of oracle queries for frameworks $(G_{p,q},\cM_W)$; the proof is similar to the proof of \cite[Claim~1.2]{fomin2024stability} and we provide it for completeness.

\begin{claim}\label{cl:W-lower}
Solving \textsc{Independent $(s,t)$-Cut} for instances $(G_{p,q},\M_W,s,t)$ with the matroids $\cM_W$ defined by the independence oracle for an (unknown) stable set $W$ of $G_{p,q}$ of size $k=2p$ demands at least $q^p-1$ oracle queries.   
\end{claim}

\begin{proof}[Proof of \Cref{cl:W-lower}] 
Recall that any $(s,t)$-cut in $G_{p,q}$ has size at least $k$ and all sets $X\subseteq V(G_{p,q})\setminus \{s,t\}$ of size at least $k+1$ are not independent with respect to $\M$.  Thus, we are looking for an $(s,t)$-cut of size $k$. For every $(s,t)$-cut $X$ in $G_{p,q}$ of size $k$,  $X=\bigcup_{i=1}^p\{a_{h_i}^i,b_{h_i}^i\}$ for some $h_1,\ldots,h_p\in[q]$. By the definition of $\M$, any $X$ of this form is independent if and only if $X=W$. This means that the problem boils down to identifying unknown $W$ using oracle queries. 
Querying the oracle for sets $X$ of size at most $k-1$ or at least $k+1$ does not provide any information about $W$. Also, querying the oracle for $X$ of size $k$ with the property that there is $i\in[p]$ such that
$|A_i\cap X|\geq 2$ or  $|B_i\cap X|\geq 2$ or there are distinct $h,j\in[q]$ such that 
$a_{h}^i,b_{j}^i\in X$  also does not give any information because all these are independent. 
Hence, we can assume that the oracle is queried only for sets $X$ of size $k$ with the property that for each $i\in[p]$, there is $j\in[q]$ such that $X\cap V(G_i)=\{a_{j}^i,b_{j}^i\}$, that, is the oracle is queried for 
$(s,k)$-cuts of size  $k$. The graph $G_{p,q}$ has $q^p$ such sets. Suppose that the oracle is queried for at most $q^p-2$ stable sets of size $k$ with the answer {\sf no}. Then there are two distinct stable sets $W$ and $W'$ of size $k$ such that the oracle was queried neither for $W$ nor $W'$. The previous queries do not help to distinguish between $W$ and $W'$. Hence, at least one more query is needed. This proves the claim. 
\end{proof}

To show the theorem, suppose that there is an algorithm $\mathcal{A}$ solving  \textsc{Independent $(s,t)$-Cut} with at most $f(k)\cdot n^{g(k)}$ oracle calls on matroids of rank $k$ for computable functions $f$ and $g$ such that $g(k)=o(k)$. Without loss of generality, we assume that $f$ and $g$ are monotone non-decreasing functions.
Because $g(k)=o(k)$, there is a positive integer $K$ such that $g(k)<k/2$ for all $k\geq K$. Then for 
each $k\geq K$, there is a positive integer $N_k$ such that for every $n\geq N_k$, $(f(k)\cdot n^{g(k)}+1)k^{k/2}<(n-2)^{k/2}$. 

Consider instances $(G_{p,q},\cM_W)$ for even $k\geq K$ where $p=k/2$ and $q\geq (N_k-2)/k$. We have that $k=2p$ and $n=2pq+2$. Then $\mathcal{A}$ applied to such instances would use at most $f(k)\cdot n^{g(k)}<\big(\frac{n-2}{k}\big)^{k/2}-1=q^p-1$ oracle queries contradicting \Cref{cl:W-lower}. This completes the proof. 
\end{proof}

\section{Conclusion} \label{Sec:conclusion}
In this paper we studied matroidal generalizations  of several fundamental graph separation problems, such as {\sc Vertex $(s, t)$-Cut} and {\sc Vertex Multiway Cut}.  We showed that these problems are \fpt when parameterized by the size of the solution. These results were obtained by combining several recent and old techniques in the world of parameterized complexity. Our paper leaves several interesting questions open. These include obtaining  $2^{\Oh(k)} \cdot n^{\Oh(1)}$ time algorithm for all the problems studied in this paper. In addition, designing \fpt algorithms for matroid versions of other problems such as {\sc Vertex Multi-cut}, {\sc Directed Feedback Vertex Set} remains an interesting research direction.

    \bibliography{main}

\end{document}